\title{\textbf{A Particle-in-cell Method for Plasmas with a Generalized Momentum Formulation, \\ Part III: A family of Gauge conserving  methods} \thanks{The research of the authors was supported by AFOSR grants FA9550-19-1-0281 and  FA9550-17-1-0394, NSF grant DMS-1912183, and DOE grant DE-SC0023164.}} 
\author{Andrew J. Christlieb \thanks{Department of Computational Mathematics, Science and Engineering, Michigan State University, East Lansing, MI, 48824, United States; \href{mailto:christli@msu.edu}{christli@msu.edu}.}
\and William A. Sands \thanks{Department of Mathematical Sciences, University of Delaware, Newark, DE, 19716, United States;
\href{mailto:wsands@udel.edu}{wsands@udel.edu}.}
\and Stephen R. White \thanks{Department of Computational Mathematics, Science and Engineering, Michigan State University, East Lansing, MI, 48824, United States;
\href{mailto:whites73@msu.edu}
{whites73@msu.edu} (corresponding author).}
}
\date{}
\begin{document}

\maketitle

\nocite{christliebPIC2022pt1}
\begin{abstract}
    In this paper, we introduce a new family of spatially co-located field solvers for particle-in-cell applications which evolve the potential formulation of Maxwell's equations under the Lorenz gauge. Our recent work \cite{christliebPIC2022pt2} introduced the concept of time-consistency, which connects charge conservation to the preservation of the gauge at the semi-discrete level. It will be shown that there exists a large family of time discretizations which satisfy this property. Additionally, it will be further shown that for large classes of time marching methods, the satisfaction of the gauge condition automatically implies the satisfaction of Gauss's law for electricity, with the potential formulation ensuring that that Gauss's law for magnetism is satisfied by definition. We focus on popular time marching methods including centered differences, backward differences, and diagonally-implicit Runge-Kutta methods, which are coupled to a spectral discretization in space. We demonstrate the theory by testing the methods on a relativistic Weibel instability and a drifting cloud of electrons.
\end{abstract}

\noindent
{ \footnotesize{\textbf{Keywords}: Vlasov-Maxwell system, generalized momentum, particle-in-cell, method-of-lines-transpose, integral solution} }

%
%
%
%
\section{Introduction}

In recent work \cite{christliebPIC2022pt1}, we developed a new particle-in-cell (PIC) method for a potential formulation of the relativistic Vlasov-Maxwell system using the Lorenz gauge. These equations were updated using a first-order backward difference formula (BDF), and the resulting modified Helmholtz operator was inverted using a dimensionally-split integral equation method. This approach, when compared against standard explicit PIC methods, demonstrated a number of desirable properties including a resilience against numerical heating, improved stability, and, though it remains unexplored, retains the geometric flexibility of the wave solver explored in our earlier work \cite{causley2017wave-propagation,MOLT-EB-2020}. In \cite{christliebPIC2022pt2}, we established other properties of the method, including a time-consistency property, which ensures that the method conserves charge if and only if it satisfies the semi-discrete Lorenz gauge condition. Additionally, we showed that Gauss's Law for the electric field is satisfied provided that the semi-discrete Lorenz gauge is satisfied. We explored a number of ways to take advantage of this, including a method which maps current on the mesh using the particles and defines the charge density using the continuity equation. This is the method that we will employ in the current work.

The exploration in part II was done entirely using a first-order BDF time discretization method for the fields. In this work we seek to achieve higher-order temporal accuracy using a variety of techniques, including BDF methods, central discretizations, and diagonally implicit Runge-Kutta (DIRK) methods, and we prove that many of these methods satisfy the same properties established in part II. For the general BDF and central difference methods, this will be fully achieved. For the DIRK methods, we will prove that charge conservation implies satisfaction of the Lorenz gauge. Though we do not show it, we have established that the two-stage method holds in the other direction, as well, suggesting that higher-order DIRK methods will also satisfy this property. 

Our primary objective in this work is to prove the time-consistency property for a family of wave solvers. As such, we restrict ourselves to problems with periodic boundary conditions, which allows us to use a spectral discretization that can be rapidly evaluated using a fast Fourier transform (FFT) rather than the fast convolution operator considered in previous work \cite{christliebPIC2022pt1,christliebPIC2022pt2}. Additionally, the restriction to periodic boundary conditions allows us to utilize high-order particle interpolation schemes, as is commonly done with high order non-diffusive wave solvers \cite{Benedetti_2008, Shalaby_2017, Davidson_2015, Sadeghirad_2024}. All wave solvers we implement will be fully-implicit, with the source term $\mathbf{J}^{n+1}$ being computed using an updated particle location $\mathbf{x}^{n+1}$ and old velocity $\mathbf{v}^{n}$.

The remainder of the paper is structured as follows. Section \ref{sec:2 formulation} gives a brief sketch of the problem formulation and the PIC framework employed to solve it. The problem formulation and the particle method have been discussed at great length in previous work \cite{christliebPIC2022pt1,christliebPIC2022pt2}, so we shall limit the presentation for the sake of brevity. More attention will be paid to the various wave solvers we employ and proofs of the aforementioned properties in section \ref{sec:3 wave solvers}, which occupies the bulk of the paper.  We briefly discuss how the particle updates are computed and how the continuity equation is satisfied in section \ref{sec:4 Particle Updates}.  Section \ref{sec:5 Numerical results} will present the numerical experiments confirming our findings, concentrating on the same periodic problems as in \cite{christliebPIC2022pt2}, and, finally, section \ref{sec:6 Conclusion} will wrap things up with some concluding remarks.

%
%
%
%
%
%
\section{Formulation and Background}
\label{sec:2 formulation}

This section contains a description of the problem formulation for the Vlasov-Maxwell system that is considered in this work, which is introduced in section \ref{subsec:2 VM-system}. A Hamiltonian formulation for this problem is described in section \ref{subsec:2 model}, which uses the Lorenz gauge to cast the system in terms of scalar and vector potentials. Under the Lorenz gauge, these equations take the form of wave equations, which are solved using a variety of time discretizations.  This formulation builds on our previous two papers, where we identified the need for time-consistency in order to satisfy the gauge condition and involutions \cite{christliebPIC2022pt1, christliebPIC2022pt2}. At the end of the section, we conclude with a brief summary of the important elements of the formulation.

%
%
\subsection{Relativistic Vlasov-Maxwell System}
\label{subsec:2 VM-system}

In this work, we develop numerical algorithms for plasmas described by the relativistic Vlasov-Maxwell system, which in SI units, reads as
\begin{empheq}[left=\empheqlbrace]{align}
    &\partial_t f_{s} + \frac{\mathbf{p}}{m_{s}\gamma_{s}} \cdot \nabla_{x} f_{s} + q_{s} \left( \mathbf{E} + \frac{\mathbf{p} \times \mathbf{B}}{m_{s}\gamma_{s}} \right)\cdot \nabla_{p} f_{s} = 0, \label{eq:Vlasov species}\\
    &\nabla \times \mathbf{E} = -\partial_t \mathbf{B}, \label{eq:Farady} \\
    &\nabla \times \mathbf{B} = \mu_{0}\left( \mathbf{J} + \epsilon_{0} \partial_t \mathbf{E} \right), \label{eq:Ampere} \\
    &\nabla \cdot \mathbf{E} = \frac{\rho}{\epsilon_0}, \label{eq:Gauss-E} \\
    &\nabla \cdot \mathbf{B} = 0. \label{eq:Gauss-B}
\end{empheq}
The first equation \eqref{eq:Vlasov species} is the relativistic Vlasov equation which describes the evolution of a probability distribution function $f_{s}\left( \mathbf{x}, \mathbf{p}, t\right)$ for particles of species $s$ in phase space which have mass $m_{s}$ and charge $q_{s}$. Here, we define $\gamma_{s} = 1/\sqrt{1 + \mathbf{p}^2/(m_{s} c)^2} =  1/\sqrt{1 - \mathbf{v}^2/c^2}$, which makes equation \eqref{eq:Vlasov species} Lorentz invariant. Physically, equation \eqref{eq:Vlasov species} describes the time evolution of a distribution function that represents the probability of finding a particle of species $s$ at the position $\mathbf{x}$, with linear momentum $\mathbf{p} = m_{s} \gamma_{s} \mathbf{v}$, at any given time $t$. Since the position and velocity data are vectors with 3 components, the distribution function is a scalar function of 6 dimensions plus time. While the equation itself has fairly simple structure, the primary challenge in numerically solving this equation is its high dimensionality. This growth in the dimensionality has posed tremendous difficulties for grid-based discretization methods, where one often needs to use many grid points to resolve relevant space and time scales in the problem. This difficulty is compounded by the fact that many plasmas of interest contain multiple species. Despite the lack of a collision operator on the right-hand side of \eqref{eq:Vlasov species}, collisions occur in a mean-field sense through the electric and magnetic fields, which appear as coefficients of the velocity gradient.

Equations \eqref{eq:Farady} - \eqref{eq:Gauss-B} are Maxwell's equations, which describe the evolution of the background electric and magnetic fields. Since the plasma is a collection of moving charges, any changes in the distribution function for each species will be reflected in the charge density $\rho(\mathbf{x},t)$, as well as the current density $\mathbf{J}(\mathbf{x},t)$, which, respectively, are the source terms for Gauss's law \eqref{eq:Gauss-E} and Amp\`ere's law \eqref{eq:Ampere}. For $N_{s}$ species, the total charge density and current density are defined by summing over the species
\begin{equation}
    \rho(\mathbf{x},t) = \sum_{s=1}^{N_{s}} \rho_{s}(\mathbf{x},t), \quad \mathbf{J}(\mathbf{x},t) = \sum_{s=1}^{N_{s}} \mathbf{J}_{s}(\mathbf{x},t), \label{eq:total charge + current densities}
\end{equation}
where the species charge and current densities are defined through moments of the distribution function $f_s$:
\begin{equation}
    \rho_{s}(\mathbf{x},t) = q_{s} \int_{\Omega_{p}} f_{s}(\mathbf{x}, \mathbf{p}, t) \, d\mathbf{p}, \quad \mathbf{J}_{s}(\mathbf{x},t) = q_{s} \int_{\Omega_{p}} \frac{\mathbf{p}}{m_{s} \gamma_{s}} f_{s}(\mathbf{x}, \mathbf{p}, t) \, d\mathbf{p}. \label{eq:species charge + current densities integrals}
\end{equation}
Here, the integrals are taken over the velocity components of phase space, which we have denoted by $\Omega_{p}$. The remaining parameters $\epsilon_{0}$ and $\mu_{0}$ describe the permittivity and permeability of the media in which the fields propagate, which we take to be free-space. In free-space, Maxwell's equations move at the speed of light $c$, and we have the useful relation $c^{2} = 1/\left( \mu_0 \epsilon_0 \right)$. The last two equations, \eqref{eq:Gauss-E} and \eqref{eq:Gauss-B}, are known as the involutions.  The involutions are constraints placed on the fields in terms of initial conditions.  If these conditions are satisfied at $t=0$, then equations \eqref{eq:Farady} and \eqref{eq:Ampere} preserve these conditions for all future time.  It is imperative that numerical schemes for Maxwell's equations satisfy these involutions to numerically ensure that the method maintains charge conservation and prevents the appearance of ``magnetic monopoles". This requirement is one of the reasons we adopt a gauge formulation for Maxwell's equations, which is presented in the next section.

%
%
%
%
\subsection{A Hamiltonian Formulation of the Vlasov-Maxwell System under the Lorenz Gauge}
\label{subsec:2 model}

The particle method introduced in part I \cite{christliebPIC2022pt1} was obtained from a Hamiltonian formulation of the Vlasov-Maxwell system in which Maxwell's equations were written in terms of the scalar and vector potential. The adoption of a Hamiltonian formulation for the particles was motivated by the desire to avoid computing time derivatives of the vector potential. Using the Hamiltonian for a single relativistic particle in a potential field
\begin{equation}
    \label{eq:Relativistic Hamiltonian particle potential field}
    \mathcal{H} = \sqrt{c^2 \left( \mathbf{P} - q \mathbf{A}\right)^2 + \left(mc^2\right)^2} + q \phi,
\end{equation}
we obtained the system
\begin{empheq}[left=\empheqlbrace]{align}
    &\frac{d\mathbf{x}_{i}}{dt} = \frac{c^2 \left(\mathbf{P}_{i} - q_{i}\mathbf{A}\right)}{\sqrt{c^2\left(\mathbf{P}_{i} - q_{i} \mathbf{A}\right)^2 + \left(m_{i} c^2\right)^2}}, \label{eq:Position equation relativistic form} \\
    &\frac{d\mathbf{P}_{i}}{dt} = -q_{i} \nabla \phi + \frac{q_{i} c^2 \left( \nabla\mathbf{A}\right) \cdot \left(\mathbf{P}_{i} - q_{i}\mathbf{A}\right)}{\sqrt{c^2\left(\mathbf{P}_{i} - q_{i} \mathbf{A}\right)^2 + \left(m_{i} c^2\right)^2}}, \label{eq:Generalized momentum equation relativistic form} \\
    &\frac{1}{c^2} \partial_{tt} \phi -\Delta \phi =  \frac{1}{\epsilon_{0}} \rho, \label{eq:scalar potential eqn lorenz} \\ 
    &\frac{1}{c^2} \partial_{tt} \mathbf{A} -\Delta \mathbf{A}= \mu_{0} \mathbf{J}, \label{eq:vector potential eqn lorenz} \\ 
    &\frac{1}{c^2} \partial_{t} \phi + \nabla \cdot \mathbf{A} =0 \label{eq:Lorenz gauge condition},
\end{empheq}
where $i = 1, 2, \cdots, N_{p}$ is the index of the simulation particle, $c$ is the speed of light, $\epsilon_{0}$ and $\mu_{0}$ represent, respectively, the permittivity and permeability of free-space. Further, we have used $\phi$ to denote the scalar potential and $\mathbf{A}$ is the vector potential, which are related to the electric and magnetic fields via
\begin{equation}
    \label{eq:Convert potentials to EB}
    \mathbf{E} = -\nabla \phi - \partial_{t} \mathbf{A}, \quad \mathbf{B} = \nabla \times \mathbf{A}.
\end{equation}
Though not written explicitly, the potentials and gradients that appear in equations \eqref{eq:Position equation relativistic form} and \eqref{eq:Generalized momentum equation relativistic form} for particle $i$ are evaluated at the corresponding position given by $\mathbf{x}_{i}$. The evolution equations for the potentials \eqref{eq:scalar potential eqn lorenz} and \eqref{eq:vector potential eqn lorenz} consists of a system of four scalar wave equations that are coupled to the Lorenz gauge condition \eqref{eq:Lorenz gauge condition}. The wave equations for the potentials $\phi$ and $\mathbf{A}$ are equivalent to Maxwell's equations provided the gauge condition is satisfied. While other gauge conditions can be used, we choose to work under the Lorenz gauge simply because it allows for a relatively straightforward application of our previous work developed for scalar wave equations \cite{causley2017wave-propagation,causley2014higher}. However, it may be advantageous to consider other gauge conditions, such as the Coulomb gauge.

In the Hamiltonian formulation presented above, the velocity of particle $i$ is obtained from the generalized momentum using the definition
\begin{equation}
    \mathbf{v}_{i} = \frac{c^2 \left( \mathbf{P}_{i} - q_{i} \mathbf{A}\right)}{\sqrt{ c^2\left( \mathbf{P}_{i} - q_{i} \mathbf{A}\right)^2 + \left(m_{i}c_{i}^{2}\right)^2}}, \quad i = 1, 2, \cdots, N_{p}, \label{eq:velocity from generalized momentum}
\end{equation}
where the potential $\mathbf{A}$ is to be evaluated at the position of particle $i$. Finally, the charge density $\rho$ and current density $\mathbf{J}$ in equations \eqref{eq:scalar potential eqn lorenz} and \eqref{eq:vector potential eqn lorenz} are related by the continuity equation
\begin{equation} \label{eq:continuity-equation}
    \partial_t\rho + \nabla\cdot\textbf{J} = 0,
\end{equation}
and are computed according to
\begin{align}
    \rho \left( \mathbf{x}, t \right) &= \sum_{p=1}^{N_{p}} q_{p} S \left(\mathbf{x} - \mathbf{x}_{p}(t) \right), \label{eq:PIC charge density} \\
    \mathbf{J} \left( \mathbf{x}, t \right) &= \sum_{p=1}^{N_{p}} q_{p} \mathbf{v}_{p}(t) S \left(\mathbf{x} - \mathbf{x}_{p}(t) \right). \label{eq:PIC current density}
\end{align}
Here, $S$ is the particle shape function, which, in addition to gathering the particle data to the mesh, is responsible for interpolating fields on the mesh to the particles. The use of a shape function can be considered as a spatial regularization of the Klimontovich function
\begin{equation*}
    f(\mathbf{x}, \mathbf{v},t) \approx \sum_{p = 1}^{N_{p}} S \left(\mathbf{x} - \mathbf{x}_{p}(t)\right)\delta \left(\mathbf{v} - \mathbf{v}_{p}(t)\right).
\end{equation*}
The definitions \eqref{eq:PIC charge density} and \eqref{eq:PIC current density} follow directly from the evaluation of the moments \eqref{eq:species charge + current densities integrals}, where the reference to the species is replaced with the more generic label $p$ that can be attributed to a particle of any species. In this work, the scatter and gather steps for the current density \eqref{eq:PIC current density} use tensor products of quadratic basis spline (B-spline) functions (see \cite{JPVerboncoeur2005review} or section 8.8 of \cite{BirdsallLangdon} for more details, or Chapter 9 of \cite{deBoor} for a much more rigorous discussion of the derivation of B-splines). The charge density, $\rho$, on the other hand will not be computed using \eqref{eq:PIC charge density}, but rather computed using the continuity equation \eqref{eq:continuity-equation}. The spatial component of the continuity equation may be solved in a variety of ways, but this work will use a spectral discretization. This will be examined briefly in section \ref{sec:4 Particle Updates}. The methods we use to discretize the system in time are presented in the next section.

%
%
\subsection{Summary}
\label{subsec:2 Summary}

In this section, we described the Vlasov-Maxwell system in a Hamiltonian formulation, transformed by the Lorenz gauge. Any PIC method simulating this system requires two main components: (1) a wave solver to update the fields and (2) a particle pusher to update the particle coordinates. We are now ready to discuss the main focus of this work, which is the wave solvers and their relationship with the continuity equation and Lorenz gauge, to which we now turn.

%
%
%
%
\section{A Family of Wave Solvers}
\label{sec:3 wave solvers}

In this section we develop a family of semi-discrete wave solvers, all of which will be shown to satisfy the semi-discrete Lorenz gauge if the semi-discrete continuity equation is satisfied.  We will establish that this is true for a variety of time-centered methods, BDF methods, and any $s$-stage DIRK method.  The converse will be shown for all methods except DIRK-$s$.  There does exist a proof for the converse for DIRK-2, but it is lengthy and not used in the remainder of the paper, so we exclude it.  With the exception of DIRK methods, these methods will also be shown to satisfy Gauss's law for electricity if they satisfy the semi-discrete Lorenz gauge. All wave solvers are approached in the context of the method-of-lines-transpose to derive self-consistent formulations. In particular, the semi-discrete models are obtained by only discretizing in time, which results in a collection of elliptic problems that we need to address once the formulation is identified.  

In this work, we make use of the FFT as a simple way to compute solutions to these boundary value problems.  As is common practice with high-order and spectral methods combined with particle methods, we also make use of second-order particle weightings to map the current density, $\mathbf{J}$, to the mesh \cite{Benedetti_2008, Shalaby_2017, Davidson_2015, Sadeghirad_2024}. This provides a simple platform to demonstrate that the theory presented holds at the fully-discrete level. Indeed, as will be shown in the results (section \ref{sec:5 Numerical results}), the methods constructed with CDF-2, BDF-1, BDF-2 and DIRK-2 integrators satisfy the Lorenz gauge condition at the fully-discrete level. Additionally, the CDF-2, BDF-1, and BDF-2 methods will also satisfy Gauss's law. 

In all cases, the key component that makes this possible is the idea of first mapping the current to the mesh and then directly solving the continuity equation for the charge density on the mesh. For both the time-centered and BDF methods, the proofs are constructive, whereas for the DIRK methods it is an inductive proof. In what follows, we will take both $\mathbf{J}^{n}$ and $\mathbf{J}^{n+1}$ as given and compute what is needed for the update using this information. In section \ref{sec:4 Particle Updates} on the overall algorithm, we provide additional details regarding the construction of $\mathbf{J}^{\{n,n+1\}}$.

%
%
%
%
\subsection{A Semi-discrete Time-centered Method for the Lorenz Gauge Formulation}
\label{subsec:semi-discrete CDF}





To start, consider the semi-discrete treatment of the Lorenz gauge formulation, as in \cite{christliebPIC2022pt2}, in which second-order approximations are used for $\partial_{t}$ and $\partial_{tt}$. As we will prove in this section, the satisfaction of the semi-discrete gauge condition as well as the involutions ($\nabla\cdot\mathbf{B}=0$ and $\nabla\cdot\mathbf{E}=\frac{\rho}{\epsilon_0}$) can be ensured using the time-centered difference approximations
$$
\partial_t u^{n+\frac{1}{2}} = \frac{u^{n+1}-u^{n}}{\Delta t} + \mathcal{O}(\Delta t^2), \quad \partial_{tt} u^n = \frac{u^{n+1}-2u^n+u^{n-1}}{\Delta t^2} +\mathcal{O}(\Delta t^2).
$$
A critical observation is that two applications of the first difference approximation to a discrete data set produces the discrete second order discrete second difference in time, i.e.,
$$
\partial_{tt} u = \frac{\partial_t u^{n+\frac{1}{2}}-\partial_t u^{n-\frac{1}{2}}}{\Delta t} + \mathcal{O}(\Delta t^2)
= \frac{\dfrac{u^{n+1}-u^{n}}{\Delta t}-\dfrac{u^{n}-u^{n-1}}{\Delta t}}{\Delta t} + \mathcal{O}(\Delta t^2).$$
This leads to the semi-discrete system in an implicit form 
\begin{empheq}[left=\empheqlbrace]{align}
    &\frac{\phi^{n+\frac{3}{2}} -2 \phi^{n+\frac{1}{2}} + \phi^{n-\frac{1}{2}}}{c^2\Delta t^2} -\Delta \left (\frac{ \phi^{n+\frac{3}{2}}+ \phi^{n-\frac{1}{2}}}{2} \right )= \frac{1}{\epsilon_0} \frac{\rho^{n+\frac{3}{2}}+\rho^{n-\frac{1}{2}}}{2}, \label{eq:semi-discrete phi2} \\
    &\frac{\mathbf{A}^{n+1} - 2 \mathbf{A}^{n} + \mathbf{A}^{n-1}}{c^2\Delta t^2} - \Delta \left ( \frac{\mathbf{A}^{n+1}+\mathbf{A}^{n-1}}{2} \right ) = \mu_0\frac{ \mathbf{J}^{n+1}+\mathbf{J}^{n-1}}{2}, \label{eq:semi-discrete A2} \\
    &\frac{\phi^{n+\frac{3}{2}} - \phi^{n+\frac{1}{2}}}{c^{2} \Delta t} + \nabla \cdot \mathbf{A}^{n+1} = 0, \label{eq:semi-discrete Lorenz}
\end{empheq}
a semi-implicit form
\begin{empheq}[left=\empheqlbrace]{align}
    &\frac{\phi^{n+\frac{3}{2}} - 2 \phi^{n+\frac{1}{2}} + \phi^{n-\frac{1}{2}}}{c^2\Delta t^2} - \Delta \left (\frac{ \phi^{n+\frac{3}{2}} + \phi^{n-\frac{1}{2}}}{2} \right )= \frac{1}{\epsilon_0} \rho^{n+\frac{1}{2}}, \label{eq:semi-discrete phi3} \\
    &\frac{\mathbf{A}^{n+1} - 2 \mathbf{A}^{n} + \mathbf{A}^{n-1}}{c^2\Delta t^2} - \Delta \left ( \frac{\mathbf{A}^{n+1}+\mathbf{A}^{n-1}}{2} \right ) = \mu_0 \mathbf{J}^{n}, \label{eq:semi-discrete A3} \\
    &\frac{\phi^{n+\frac{3}{2}} - \phi^{n+\frac{1}{2}}}{c^{2} \Delta t} + \nabla \cdot \mathbf{A}^{n+1} = 0, \label{eq:semi-discrete Lorenz3}
\end{empheq}
and, alternatively, an explicit form
\begin{empheq}[left=\empheqlbrace]{align}
    &\frac{\phi^{n+\frac{3}{2}} - 2 \phi^{n+\frac{1}{2}} + \phi^{n-\frac{1}{2}}}{c^2\Delta t^2} -\Delta \phi^{n+\frac{1}{2}} = \frac{1}{ \epsilon_0} \rho^{n+\frac{1}{2}}, \label{eq:semi-discrete phi} \\
    &\frac{\mathbf{A}^{n+1} - 2 \mathbf{A}^{n} + \mathbf{A}^{n-1}}{c^2\Delta t^2} - \Delta \mathbf{A}^{n} = \mu_0 \mathbf{J}^{n}, \label{eq:semi-discrete A} \\
    &\frac{\phi^{n+\frac{3}{2}} - \phi^{n+\frac{1}{2}}}{c^{2} \Delta t} + \nabla \cdot \mathbf{A}^{n+1} = 0, \label{eq:semi-discrete Lorenz1}
\end{empheq}
where here we assume $\mathbf{J}^{\{n,n+1\}}$ is given and use this to compute $\rho^{n+\frac{3}{2}}$ and the fields for the particle update.  All examples we explore in section \ref{sec:5 Numerical results} are implicit and compute an approximation of $\mathbf{J}^{n+1}$ (see section \ref{sec:4 Particle Updates} for more details).

Of particular relevance to this paper, and our previous work \cite{christliebPIC2022pt2},  is a theorem that establishes an equivalence, at the semi-discrete level, between the continuity equation
\begin{equation}
    \label{eq:staggered semi-discrete continuity equation}
    \frac{\rho^{n+\frac{3}{2}} - \rho^{n+\frac{1}{2}}}{\Delta t} + \nabla \cdot \mathbf{J}^{n+1} = 0,
\end{equation}
and the Lorenz gauge condition \eqref{eq:semi-discrete Lorenz} (see \cite{christliebPIC2022pt2}, Theorem 2.1). To establish similar results for these methods, we will need to identify groupings of terms. For convenience, we shall introduce the notation
\begin{equation}
    \epsilon_1^{k} = \frac{\phi^{k+\frac{1}{2}} - \phi^{k-\frac{1}{2}}}{c^{2} \Delta t} + \nabla \cdot \mathbf{A}^{k}, \;\;\; \epsilon_2^{k} = \frac{\rho^{k+\frac{1}{2}} - \rho^{k-\frac{1}{2}}}{\Delta t} + \nabla \cdot \mathbf{J}^{k},
\end{equation}
to simplify the presentation. Our first task is to establish the interdependence of satisfying the gauge condition with satisfying the continuity equation.  Then we can establish that if our method is time-consistent, and the gauge condition is satisfied, then we will ensure that Gauss's law for electricity is satisfied.  In the semi-discrete case, the derivatives are analytical, meaning Gauss's law for magnetism is satisfied by definition.  In the next section, we will examine how this looks in the fully-discrete case, and introduce the method for ensuring satisfaction of the continuity equation, and hence we will establish that the method preserves the gauge condition and the involutions.

We now define operator notation that will be used in the proofs of the semi-discrete theorem for the fully-implicit update. Consider now the linear wave equation
\begin{equation}
    \label{eq:diffusion equation}
   \frac{1}{c^2} \partial_{tt} u -  \Delta u = S.
\end{equation}
The coefficient $c > 0$ and assumed to be a constant. Additionally, we denote  $S(\mathbf{x},t)$ as the source function. An implicit second-order Crank-Nicolson method for $u^{n+1}$ centered at $t = t^{n}$ is
\begin{equation}
    \label{eq:semi-discrete wave}
    \left(\mathcal{I} - \frac{1}{\alpha^{2}} \Delta \right)\left( u^{n+1}+u^{n-1} \right) =  2u^{n} + \frac{1}{\alpha^{2}} \Gamma^n, 
\end{equation}
where we use $\alpha = \sqrt{2}/(c \Delta t)$ and $\Gamma^n= S^{n+1}+S^{n-1}$. We define 
$$
\mathcal{L}(\cdot) = \left(\mathcal{I} - \frac{1}{\alpha^{2}} \Delta \right) (\cdot)
$$
and denote the inverse as $\mathcal{L}^{-1}(\cdot)$.  The inverse can be computed with any number of methods. For our purposes, we will compute the inverse using the FFT.

\begin{lemma} \label{lemma:Implicit method-Green’s Function}
(Implicit method-Green’s Function)  For the implicit formulation in \eqref{eq:semi-discrete phi2}-\eqref{eq:semi-discrete A2}, the semi-discrete Lorenz gauge condition \eqref{eq:semi-discrete Lorenz} satisfies
\begin{equation}
    \label{eq:implicit semi-discrete Lorenz residual update}
    \epsilon_1^{n+1} = - \epsilon_1^{n-1} + \mathcal{L}^{-1} \left( 2 \epsilon_1^{n}  + \frac{\mu_0}{\alpha^2}\left ( \epsilon_2^{n+1}+\epsilon_2^{n-1}\right )\right).
\end{equation}
\label{lemma:implicit - residual equivalence}
\end{lemma}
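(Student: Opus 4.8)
The plan is to apply $\mathcal{L}$ to both sides of the claimed identity \eqref{eq:implicit semi-discrete Lorenz residual update}, reducing it to the equivalent product form
$$
\mathcal{L}\left(\epsilon_1^{n+1} + \epsilon_1^{n-1}\right) = 2\epsilon_1^{n} + \frac{\mu_0}{\alpha^2}\left(\epsilon_2^{n+1} + \epsilon_2^{n-1}\right),
$$
which I would then establish directly. Since $\mathcal{L}^{-1}$ exists (it is inverted by the FFT in the periodic setting), the two formulations are equivalent, so it suffices to prove this product form and apply $\mathcal{L}^{-1}$ at the end.

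First I would recast the implicit updates \eqref{eq:semi-discrete phi2} and \eqref{eq:semi-discrete A2} into the operator form \eqref{eq:semi-discrete wave}. Multiplying through by $c^2\Delta t^2$ and using $1/\alpha^2 = c^2\Delta t^2/2$, these become
$$
\mathcal{L}\left(\phi^{n+\frac{3}{2}} + \phi^{n-\frac{1}{2}}\right) = 2\phi^{n+\frac{1}{2}} + \frac{1}{\alpha^2\epsilon_0}\left(\rho^{n+\frac{3}{2}} + \rho^{n-\frac{1}{2}}\right)
$$
and
$$
\mathcal{L}\left(\mathbf{A}^{n+1} + \mathbf{A}^{n-1}\right) = 2\mathbf{A}^n + \frac{\mu_0}{\alpha^2}\left(\mathbf{J}^{n+1} + \mathbf{J}^{n-1}\right).
$$
The same scalar update shifted by one step ($n \mapsto n-1$) supplies $\mathcal{L}(\phi^{n+1/2} + \phi^{n-3/2})$, which will be needed below.

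Next I would expand $\epsilon_1^{n+1} + \epsilon_1^{n-1}$ from its definition and apply $\mathcal{L}$, using that $\mathcal{L}$ is linear and commutes with both $\nabla\cdot$ and the discrete time difference (all are constant-coefficient operators). The key bookkeeping step is to group the four scalar-potential terms as $(\phi^{n+3/2} + \phi^{n-1/2})$ and $(\phi^{n+1/2} + \phi^{n-3/2})$, so the scalar update at levels $n$ and $n-1$ can be substituted, while the $\mathbf{A}$ contribution is handled by taking $\nabla\cdot$ of the vector update. After substitution, the terms $\frac{2}{c^2\Delta t}(\phi^{n+1/2} - \phi^{n-1/2}) + 2\nabla\cdot\mathbf{A}^n$ collapse into exactly $2\epsilon_1^n$, leaving source terms built from $\rho^{n+3/2} - \rho^{n+1/2} + \rho^{n-1/2} - \rho^{n-3/2}$ together with $\nabla\cdot(\mathbf{J}^{n+1} + \mathbf{J}^{n-1})$.

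The crux — and the step I expect to require the most care — is matching these leftover source terms against $\frac{\mu_0}{\alpha^2}(\epsilon_2^{n+1} + \epsilon_2^{n-1})$. Writing out $\epsilon_2^{n+1} + \epsilon_2^{n-1}$ shows the divergence-of-current terms agree immediately, while the charge-density terms agree precisely when the scalar coefficient $1/(c^2\Delta t\,\alpha^2\epsilon_0)$ equals $\mu_0/(\alpha^2\Delta t)$; this reduces to the free-space relation $c^2 = 1/(\mu_0\epsilon_0)$. It is here that the shared wave speed of the potential equations \eqref{eq:scalar potential eqn lorenz}--\eqref{eq:vector potential eqn lorenz} is essential, and verifying this algebraic cancellation is really the heart of the lemma; everything else is linearity and index shifting. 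Applying $\mathcal{L}^{-1}$ then recovers \eqref{eq:implicit semi-discrete Lorenz residual update}.
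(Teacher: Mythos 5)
Your proposal is correct and follows essentially the same route as the paper's proof: both use the scalar update at levels $n$ and $n-1$, the divergence of the vector update, the grouping of source terms into $\epsilon_2^{n+1}+\epsilon_2^{n-1}$, and the relation $c^2 = 1/(\mu_0\epsilon_0)$, with the only difference being that you multiply through by $\mathcal{L}$ at the outset and invert once at the end, whereas the paper carries $\mathcal{L}^{-1}$ through each substitution. As a minor aside, your grouping correctly yields the sources $\rho^{n+\frac{3}{2}}-\rho^{n+\frac{1}{2}}+\rho^{n-\frac{1}{2}}-\rho^{n-\frac{3}{2}}$, which implicitly fixes a typo in the paper's shifted equation \eqref{eq:semi-discrete psi update at t_n} (whose source should read $\rho^{n+\frac{1}{2}}+\rho^{n-\frac{3}{2}}$, not $\rho^{n-\frac{1}{2}}+\rho^{n-\frac{3}{2}}$).
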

\begin{proof}  This proof is a direct construction.  
We start with equation \eqref{eq:semi-discrete phi2}. Inverting the modified Helmholtz operator $\mathcal{L}$ gives the scalar potential as
\begin{equation}
    \phi^{n+\frac{3}{2}} = -\phi^{n-\frac{1}{2}} +  \mathcal{L}^{-1}  \left( 2 \phi^{n+\frac{1}{2}} +  \frac{1}{\alpha^{2} \epsilon_0} \left ( \rho^{n+\frac{3}{2}}+\rho^{n-\frac{1}{2}}\right) \right), \label{eq:semi-discrete psi update}
\end{equation}
which can be shifted to obtain
\begin{equation}
    \label{eq:semi-discrete psi update at t_n}
    \phi^{n+\frac{1}{2}} = -\phi^{n-\frac{3}{2}} +  \mathcal{L}^{-1}  \left( 2 \phi^{n-\frac{1}{2}} +  \frac{1}{\alpha^{2} \epsilon_0} \left ( \rho^{n-\frac{1}{2}}+\rho^{n-\frac{3}{2}}\right) \right).
\end{equation}
Next, we take the divergence of $\mathbf{A}$ in equation \eqref{eq:semi-discrete A2} and find that
\begin{equation*}
    \mathcal{L} \left( \nabla \cdot \mathbf{A}^{n+1} + \nabla \cdot \mathbf{A}^{n-1}\right) = 2 \nabla \cdot \mathbf{A}^{n}  + \frac{\mu_0}{\alpha^2} \left ( \nabla \cdot \mathbf{J}^{n+1} + \nabla \cdot \mathbf{J}^{n-1} \right ) .
\end{equation*}
Formally inverting the operator $\mathcal{L}$, we obtain the relation
\begin{equation}\label{eq:semi-discrete div A update}
    \nabla \cdot \mathbf{A}^{n+1}  = - \nabla \cdot \mathbf{A}^{n-1}+ \mathcal{L}^{-1} \left( 2 \nabla \cdot \mathbf{A}^{n}  + \frac{\mu_0}{\alpha^2} \left ( \nabla \cdot \mathbf{J}^{n+1} + \nabla \cdot \mathbf{J}^{n-1} \right ) \right).
\end{equation}

With the aid of equations \eqref{eq:semi-discrete psi update}, \eqref{eq:semi-discrete psi update at t_n}, and \eqref{eq:semi-discrete div A update}, along with the linearity of the operator $\mathcal{L}$, a direct calculation reveals that the residual at time level is given by
\begin{align*}
    \frac{\phi^{n+\frac{3}{2}} - \phi^{n+\frac{1}{2}}}{c^{2} \Delta t} &+ \nabla \cdot \mathbf{A}^{n+1} = \\ &- \epsilon_1^{n-1}+ \mathcal{L}^{-1} \Bigg( 2\epsilon_1^{n} + \left (\frac{1}{\alpha^{2} \epsilon_0} \left ( \frac{\rho^{n+\frac{3}{2}}-\rho^{n+\frac{1}{2}}}{c^2\Delta t}\right)+\frac{\mu_0}{\alpha^2}  \nabla \cdot \mathbf{J}^{n+1}  \right) + \left (\frac{1}{\alpha^{2} \epsilon_0} \left ( \frac{\rho^{n-\frac{1}{2}}-\rho^{n-\frac{3}{2}}}{c^2\Delta t}\right)+\frac{\mu_0}{\alpha^2}  \nabla \cdot \mathbf{J}^{n-1}  \right)  \Bigg).
\end{align*} 
From these calculations, we see that the above expression is the same \eqref{eq:implicit semi-discrete Lorenz residual update}, with the observation that we have used the following relation $c^2 = 1/(\mu_0 \epsilon_0)$.
We observe that the corresponding consistent semi-discrete continuity equation
\begin{equation}
    \label{eq:semi-discrete continuity}
    \frac{\rho^{n+\frac{3}{2}} - \rho^{n+\frac{1}{2}}}{\Delta t} + \nabla \cdot \mathbf{J}^{n+1} = \epsilon_2^{n+1},
\end{equation}
acts as a source for the residual, which completes the proof. 
\end{proof}
With the aid of Lemma \ref{lemma:implicit - residual equivalence}, we are now prepared to prove the following theorem that establishes the time-consistency of the semi-discrete system.
\begin{theorem}\label{thm:Implicit method-Green’s Function} (Implicit method-Green’s Function) Assuming $\epsilon_1^{n} = \epsilon_2^{n} = 0, \, \forall \, n \leq 0$, the semi-discrete Lorenz gauge formulation of Maxwell's equations \eqref{eq:semi-discrete phi2}-\eqref{eq:semi-discrete Lorenz} is time-consistent in the sense that the semi-discrete Lorenz gauge condition \eqref{eq:semi-discrete Lorenz} is satisfied at any discrete time if and only if the corresponding semi-discrete continuity equation \eqref{eq:semi-discrete continuity} is also satisfied.
\end{theorem}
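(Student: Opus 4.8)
The plan is to treat the recurrence \eqref{eq:implicit semi-discrete Lorenz residual update} supplied by Lemma \ref{lemma:implicit - residual equivalence} as the entire engine of the proof, and to run two inductions on the time index $n$, one for each implication, both seeded by the vanishing initial data $\epsilon_1^n = \epsilon_2^n = 0$ for $n \le 0$. The key structural fact I would exploit is that the recurrence expresses $\epsilon_1^{n+1}$ as a combination of $\epsilon_1^{n}$, $\epsilon_1^{n-1}$ and the continuity residuals $\epsilon_2^{n+1}, \epsilon_2^{n-1}$, so that each gauge residual is determined by strictly earlier gauge residuals together with the current continuity residuals.

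For the direction ``continuity $\Rightarrow$ gauge,'' I would assume $\epsilon_2^{k} = 0$ for every $k$ and show $\epsilon_1^{k}=0$ for every $k$ by strong induction. The base values $\epsilon_1^{0} = \epsilon_1^{-1} = 0$ are exactly the hypothesis on nonpositive indices. For the inductive step, suppose $\epsilon_1^{k} = 0$ for all $k \le n$; then in \eqref{eq:implicit semi-discrete Lorenz residual update} the terms $\epsilon_1^{n-1}$ and $\epsilon_1^{n}$ vanish by the inductive hypothesis while $\epsilon_2^{n+1}+\epsilon_2^{n-1}=0$ by assumption, forcing $\epsilon_1^{n+1}=0$. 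Hence the semi-discrete Lorenz gauge condition \eqref{eq:semi-discrete Lorenz} holds at every discrete time.

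For the converse ``gauge $\Rightarrow$ continuity,'' I would assume $\epsilon_1^{k}=0$ for every $k$ and deduce $\epsilon_2^{k}=0$ for every $k$. Substituting $\epsilon_1^{n+1}=\epsilon_1^{n}=\epsilon_1^{n-1}=0$ into \eqref{eq:implicit semi-discrete Lorenz residual update} collapses it to
\begin{equation*}
    \mathcal{L}^{-1}\!\left( \frac{\mu_0}{\alpha^2}\left( \epsilon_2^{n+1}+\epsilon_2^{n-1} \right) \right) = 0 .
\end{equation*}
Applying $\mathcal{L}$ to both sides and using $\mu_0/\alpha^2 \neq 0$ gives $\epsilon_2^{n+1} = -\epsilon_2^{n-1}$. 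Starting from $\epsilon_2^{0}=\epsilon_2^{-1}=0$ and stepping through $n = 0, 1, 2, \dots$, each relation $\epsilon_2^{n+1} = -\epsilon_2^{n-1}$ propagates the vanishing of the continuity residual to the next index, so $\epsilon_2^{k}=0$ for all $k$ by induction.

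The one step requiring genuine justification rather than bookkeeping is the passage from $\mathcal{L}^{-1}(\cdots)=0$ to $\epsilon_2^{n+1}+\epsilon_2^{n-1}=0$ in the converse, which needs $\mathcal{L} = \mathcal{I} - \alpha^{-2}\Delta$ to be injective so that its inverse has trivial kernel. On the periodic domain considered here this is immediate: the Fourier symbol of $\mathcal{L}$ is $1 + |\mathbf{k}|^2/\alpha^2 \ge 1 > 0$ for every wavenumber $\mathbf{k}$, so $\mathcal{L}$ (and hence $\mathcal{L}^{-1}$) is a bijection and $\mathcal{L}^{-1}(w)=0$ forces $w=0$. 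The remaining care is purely in the index bookkeeping — verifying that the initial-data hypothesis on $n \le 0$ furnishes exactly the two base values each induction requires — which I expect to be routine.
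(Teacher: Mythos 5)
Your proposal is correct and follows essentially the same route as the paper's proof: both directions are inductions on the time index driven entirely by the recurrence of Lemma \ref{lemma:implicit - residual equivalence}, seeded by the vanishing residuals at $n \le 0$, with $\mathcal{L}$ applied to both sides to extract the continuity residual in the converse direction. Your explicit verification that $\mathcal{L} = \mathcal{I} - \alpha^{-2}\Delta$ is injective (via its positive Fourier symbol on the periodic domain) is a small but welcome addition that the paper leaves implicit.
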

\begin{proof}

We use a simple inductive argument to prove both directions. In the case of the forward direction, we assume that the semi-discrete gauge condition is satisfied, so $\epsilon_1^{n} \equiv 0, \quad \forall n \geq 0$. From our assumption $\epsilon_2^{-1} \equiv 0$. Combining this with equation \eqref{eq:implicit semi-discrete Lorenz residual update} established by Lemma \ref{lemma:implicit - residual equivalence}, it follows that the next time level satisfies
\begin{equation*}
    0 = \mathcal{L}^{-1} \Bigg( \frac{\mu_0}{\alpha^2} \left( \frac{\rho^{\frac{1}{2}} - \rho^{-\frac{1}{2}}}{\Delta t} + \nabla \cdot \mathbf{J}^{0} \right) \Bigg).
\end{equation*}
Applying the operator $\mathcal{L}$ to both sides leads to
\begin{equation*}
    \frac{\rho^{\frac{1}{2}} - \rho^{-\frac{1}{2}}}{\Delta t} + \nabla \cdot \mathbf{J}^{0}  = 0.
\end{equation*}
This argument can be iterated $n$ times to show that
\begin{equation*}
    \frac{\rho^{n+\frac{1}{2}} - \rho^{n-\frac{1}{2}}}{\Delta t} + \nabla \cdot \mathbf{J}^{n} = 0,
\end{equation*}
also holds at any discrete time level $n$, which establishes the forward direction.

A similar argument can be used for the converse. Here, we show that if the semi-discrete continuity equation \eqref{eq:semi-discrete continuity} is satisfied for any time level $n$, i.e., $\epsilon_2^{n}  \equiv 0$, then the residual for the semi-discrete gauge condition also satisfies $\epsilon_1^{n+1} \equiv 0$. First, we assume that the initial data and starting values satisfy $\epsilon_1^{-1} \equiv \epsilon_1^{0} \equiv 0$. Appealing to equation \eqref{eq:implicit semi-discrete Lorenz residual update} with this initial data, it is clear that after a single time step, the residual in the gauge condition satisfies
\begin{equation*}
    \epsilon_1^{1} =  - \epsilon_1^{-1}+\mathcal{L}^{-1} \Bigg( 2\epsilon_1^{0} + \frac{\mu_0}{\alpha^2} \left( \epsilon_2^1+ \epsilon_2^{-1}\right) \Bigg) \equiv \mathcal{L}^{-1} ( 0 ).
\end{equation*}
This argument can also be iterated $n$ more times to obtain the result, which finishes the proof.
\end{proof}

\begin{lemma} (Semi-Implicit method-Green’s Function)
\label{lemma:semi-implicit residual equivalence}
For the semi-implicit formulation in \eqref{eq:semi-discrete phi3}-\eqref{eq:semi-discrete A3}, the semi-discrete Lorenz gauge condition \eqref{eq:semi-discrete Lorenz3} satisfies
\begin{equation}
    \label{eq:semi-implicit semi-discrete Lorenz residual update}
    \epsilon_1^{n+1} = - \epsilon_1^{n-1} + \mathcal{L}^{-1} \left( 2 \epsilon_1^{n}  + \frac{\mu_0}{\alpha^2} \epsilon_2^{n}\right).
\end{equation}
\end{lemma}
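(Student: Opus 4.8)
The plan is to mirror the direct construction used in Lemma~\ref{lemma:implicit - residual equivalence}, now applied to the semi-implicit update \eqref{eq:semi-discrete phi3}--\eqref{eq:semi-discrete A3}; the only structural change is that the sources are evaluated at a single time level rather than averaged across two. First I would recast each evolution equation in terms of the modified Helmholtz operator $\mathcal{L} = \mathcal{I} - \tfrac{1}{\alpha^2}\Delta$. Multiplying \eqref{eq:semi-discrete phi3} through by $c^2\Delta t^2$ and using $\tfrac{1}{\alpha^2} = \tfrac{1}{2}c^2\Delta t^2$ puts the scalar update in the form $\mathcal{L}(\phi^{n+\frac{3}{2}}+\phi^{n-\frac{1}{2}}) = 2\phi^{n+\frac{1}{2}} + \tfrac{2}{\alpha^2\epsilon_0}\rho^{n+\frac{1}{2}}$, and formally inverting $\mathcal{L}$ yields $\phi^{n+\frac{3}{2}} = -\phi^{n-\frac{1}{2}} + \mathcal{L}^{-1}\big(2\phi^{n+\frac{1}{2}} + \tfrac{2}{\alpha^2\epsilon_0}\rho^{n+\frac{1}{2}}\big)$, the analogue of \eqref{eq:semi-discrete psi update}.

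I would then shift this relation down by one time level to express $\phi^{n+\frac{1}{2}}$, and separately take the divergence of \eqref{eq:semi-discrete A3}, invert $\mathcal{L}$, and use its linearity to obtain $\nabla\cdot\mathbf{A}^{n+1} = -\nabla\cdot\mathbf{A}^{n-1} + \mathcal{L}^{-1}\big(2\nabla\cdot\mathbf{A}^n + \tfrac{2\mu_0}{\alpha^2}\nabla\cdot\mathbf{J}^n\big)$, the analogue of \eqref{eq:semi-discrete div A update}. With these three expressions in hand, I would substitute directly into the gauge residual $\epsilon_1^{n+1} = (\phi^{n+\frac{3}{2}}-\phi^{n+\frac{1}{2}})/(c^2\Delta t) + \nabla\cdot\mathbf{A}^{n+1}$.

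The core of the argument is the regrouping. The terms lying outside $\mathcal{L}^{-1}$ reassemble into $-\epsilon_1^{n-1}$, and the $\phi$ and $\nabla\cdot\mathbf{A}$ terms inside $\mathcal{L}^{-1}$ reassemble into $2\epsilon_1^n$, exactly as in the implicit lemma. The genuinely new feature is the source: because $\rho$ enters only at level $n+\frac{1}{2}$ in the update for $\phi^{n+\frac{3}{2}}$ and at level $n-\frac{1}{2}$ after the downward shift, the scalar source contributes only the difference $\rho^{n+\frac{1}{2}}-\rho^{n-\frac{1}{2}}$, while the vector source contributes only $\nabla\cdot\mathbf{J}^n$. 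Dividing the scalar difference by $c^2\Delta t$ and invoking $c^2 = 1/(\mu_0\epsilon_0)$ then collapses these two pieces into a single residual proportional to $\epsilon_2^n = (\rho^{n+\frac{1}{2}}-\rho^{n-\frac{1}{2}})/\Delta t + \nabla\cdot\mathbf{J}^n$ --- in contrast to the symmetric combination $\epsilon_2^{n+1}+\epsilon_2^{n-1}$ that arose in the fully-implicit case. This localization of the source at the single level $n$ is precisely what produces the compact form \eqref{eq:semi-implicit semi-discrete Lorenz residual update}, and it is also what makes the subsequent time-consistency induction even cleaner than for the implicit scheme.

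The step I expect to demand the most care is this final collapse of the source. One must track the competing factors of two --- the $c^2\Delta t^2 = 2/\alpha^2$ coming from clearing denominators, the $1/(c^2\Delta t)$ from the discrete time derivative in $\epsilon_1$, and the identity $1/(\epsilon_0 c^2) = \mu_0$ --- and verify that they combine into exactly the coefficient of $\epsilon_2^n$ appearing in \eqref{eq:semi-implicit semi-discrete Lorenz residual update}; because the single-level sources are not pre-averaged, this prefactor differs in origin from its implicit counterpart and is worth reconciling against the fully-implicit computation. No new analytic difficulty arises beyond this bookkeeping, since every operator manipulation is justified by the linearity of $\mathcal{L}$ and $\mathcal{L}^{-1}$ already exploited in Lemma~\ref{lemma:implicit - residual equivalence}.
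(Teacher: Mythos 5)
Your proposal is correct and takes essentially the same approach as the paper, which omits this proof entirely with the remark that it is nearly identical to that of Lemma \ref{lemma:implicit - residual equivalence}; your construction (invert $\mathcal{L}$ in \eqref{eq:semi-discrete phi3}, shift down one time level, take the divergence of \eqref{eq:semi-discrete A3}, and regroup into residuals) is exactly that argument adapted to single-level sources. One point on the bookkeeping you deliberately deferred: your intermediate expressions are right, and carrying them through gives the coefficient $\tfrac{2\mu_0}{\alpha^2}$ on $\epsilon_2^{n}$, because the factor $c^2\Delta t^2 = 2/\alpha^2$ is no longer halved by an averaged source; so the reconciliation you call for does not land on the printed coefficient $\tfrac{\mu_0}{\alpha^2}$ in \eqref{eq:semi-implicit semi-discrete Lorenz residual update}. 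That discrepancy is a factor-of-two slip in the paper's stated lemma rather than an error in your derivation, and it is immaterial downstream, since the time-consistency theorem only uses that the source term vanishes if and only if $\epsilon_2^{n}$ does.
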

\begin{proof}
The proof is nearly identical to that of Lemma \ref{lemma:Implicit method-Green’s Function}, so we exclude it.
\end{proof}

\begin{theorem}\label{thm:semi-implicit method-Green’s Function}
(Semi-Implicit method-Green’s Function)
Assuming $\epsilon_1^{n} = \epsilon_2^{n} = 0, \quad \forall n \leq 0$, the semi-discrete Lorenz gauge formulation of Maxwell's equations \eqref{eq:semi-discrete phi3}-\eqref{eq:semi-discrete Lorenz3} is time-consistent in the sense that the semi-discrete Lorenz gauge condition \eqref{eq:semi-discrete Lorenz3} is satisfied at any discrete time if and only if the corresponding semi-discrete continuity equation \eqref{eq:semi-discrete continuity} is also satisfied.
\end{theorem}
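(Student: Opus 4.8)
The plan is to mirror the proof of Theorem~\ref{thm:Implicit method-Green’s Function}, replacing the implicit residual recurrence with the semi-implicit one \eqref{eq:semi-implicit semi-discrete Lorenz residual update} supplied by Lemma~\ref{lemma:semi-implicit residual equivalence}. The structural ingredients are identical: the gauge residual $\epsilon_1^{n+1}$ is determined from the two preceding gauge residuals $\epsilon_1^{n}$, $\epsilon_1^{n-1}$ and the single continuity residual $\epsilon_2^{n}$, all coupled through the invertible modified Helmholtz operator $\mathcal{L} = \mathcal{I} - \frac{1}{\alpha^2}\Delta$. Both implications then follow from an induction seeded by the hypotheses $\epsilon_1^{n} = \epsilon_2^{n} = 0$ for $n \leq 0$.

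For the forward direction I would assume the semi-discrete gauge \eqref{eq:semi-discrete Lorenz3} holds at every level, so $\epsilon_1^{n} \equiv 0$. Substituting $\epsilon_1^{n+1} = \epsilon_1^{n} = \epsilon_1^{n-1} = 0$ into \eqref{eq:semi-implicit semi-discrete Lorenz residual update} collapses it to $0 = \mathcal{L}^{-1}\!\left(\tfrac{\mu_0}{\alpha^2}\epsilon_2^{n}\right)$, and applying $\mathcal{L}$ to both sides yields $\epsilon_2^{n} = 0$. This direction is in fact cleaner than in the implicit case: since the semi-implicit source in \eqref{eq:semi-implicit semi-discrete Lorenz residual update} contains only the single term $\epsilon_2^{n}$ rather than the pair $\epsilon_2^{n+1} + \epsilon_2^{n-1}$, no appeal to vanishing of neighbouring continuity residuals is required and the conclusion $\epsilon_2^{n} \equiv 0$ holds level-by-level rather than by an iterated argument.

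For the converse I would assume the continuity equation \eqref{eq:semi-discrete continuity} holds everywhere, so $\epsilon_2^{n} \equiv 0$, and prove $\epsilon_1^{n} \equiv 0$ by a two-step induction. The base cases $\epsilon_1^{-1} = \epsilon_1^{0} = 0$ come directly from the starting-value hypothesis. For the inductive step, assuming $\epsilon_1^{n-1} = \epsilon_1^{n} = 0$, equation \eqref{eq:semi-implicit semi-discrete Lorenz residual update} with $\epsilon_2^{n} = 0$ reduces to $\epsilon_1^{n+1} = -\epsilon_1^{n-1} + \mathcal{L}^{-1}(2\epsilon_1^{n}) = \mathcal{L}^{-1}(0) = 0$, closing the induction. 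Because the recurrence is two-step, I would be careful to carry both $\epsilon_1^{n-1}$ and $\epsilon_1^{n}$ through each stage, which is precisely why two vanishing seed values are needed.

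Since Lemma~\ref{lemma:semi-implicit residual equivalence} already does the analytical work of establishing the recurrence, there is no genuine obstacle left in the theorem itself; the only points needing care are the invertibility of $\mathcal{L}$ (a positive-definite modified Helmholtz operator, so $\mathcal{L}^{-1}(0)=0$ and the cancellation of $\mathcal{L}^{-1}$ in the forward direction is legitimate) and the bookkeeping of the two-step induction with its two base cases. The argument is otherwise a near-verbatim transcription of the proof of Theorem~\ref{thm:Implicit method-Green’s Function}.
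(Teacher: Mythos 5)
Your proof is correct and takes exactly the route the paper intends: the paper omits this proof, stating only that it is nearly identical to that of Theorem~\ref{thm:Implicit method-Green’s Function}, and your argument is precisely that adaptation, substituting the recurrence of Lemma~\ref{lemma:semi-implicit residual equivalence} and running the same two-sided induction from the seed values $\epsilon_1^{-1}=\epsilon_1^{0}=0$. Your observation that the forward direction becomes level-by-level (because the source term is the single residual $\epsilon_2^{n}$ rather than the pair $\epsilon_2^{n+1}+\epsilon_2^{n-1}$) is a correct minor refinement of the iterated argument used in the implicit case.
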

\begin{proof}
The proof is nearly identical to that of Theorem \ref{thm:Implicit method-Green’s Function}, so we exclude it.
\end{proof}

\begin{lemma} \label{lemma:explicit - residual equivalence}
(Explicit method)
For the explicit formulation in \eqref{eq:semi-discrete phi}-\eqref{eq:semi-discrete A}, the semi-discrete Lorenz gauge condition \eqref{eq:semi-discrete Lorenz1} satisfies
\begin{equation}
    \label{eq:explicit semi-discrete Lorenz residual update}
    \epsilon_1^{n+1} = 2 \epsilon_1^{n} - \epsilon_1^{n-1} + \left ( c^2 \Delta t^2 \right ) \Delta \epsilon_1^{n}  +     \left ( c^2 \Delta t^2 \right )\mu_0 \epsilon_2^{n}.
\end{equation}
\end{lemma}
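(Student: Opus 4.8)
The plan is to recognize that the claimed identity \eqref{eq:explicit semi-discrete Lorenz residual update} is just the statement that the gauge residual $\epsilon_1$ itself obeys the \emph{same} explicit discrete wave equation that governs the potentials in \eqref{eq:semi-discrete phi}--\eqref{eq:semi-discrete A}, now driven by the continuity residual $\epsilon_2$ as its source. Indeed, collecting the first three terms of \eqref{eq:explicit semi-discrete Lorenz residual update} and dividing by $c^2 \Delta t^2$ recasts the target as
\begin{equation*}
\frac{\epsilon_1^{n+1} - 2\epsilon_1^{n} + \epsilon_1^{n-1}}{c^2 \Delta t^2} - \Delta \epsilon_1^{n} = \mu_0 \epsilon_2^{n}.
\end{equation*}
So it suffices to apply the explicit wave operator $\mathcal{W}^{n} u := (u^{n+1} - 2u^{n} + u^{n-1})/(c^2 \Delta t^2) - \Delta u^{n}$ to $\epsilon_1^{n}$ and show the result collapses to $\mu_0 \epsilon_2^{n}$. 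In contrast to the implicit Lemma \ref{lemma:implicit - residual equivalence}, no inversion of $\mathcal{L}$ is required here; this is precisely why the Laplacian term $c^2 \Delta t^2 \Delta \epsilon_1^{n}$ remains explicit in the residual update rather than being absorbed into an $\mathcal{L}^{-1}$.

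First I would use linearity of $\mathcal{W}^{n}$ to split $\epsilon_1^{n} = (\phi^{n+\frac{1}{2}} - \phi^{n-\frac{1}{2}})/(c^2 \Delta t) + \nabla \cdot \mathbf{A}^{n}$ into its scalar-potential piece and its vector-potential piece and handle them separately. For the vector piece, since $\mathcal{W}^{n}$ has constant coefficients it commutes with $\nabla \cdot$, so $\mathcal{W}^{n}(\nabla \cdot \mathbf{A}) = \nabla \cdot (\mathcal{W}^{n} \mathbf{A})$; the vector potential update \eqref{eq:semi-discrete A} gives $\mathcal{W}^{n} \mathbf{A} = \mu_0 \mathbf{J}^{n}$, so this piece contributes $\mu_0 \nabla \cdot \mathbf{J}^{n}$ immediately.

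The scalar piece carries the real content. Applying $\mathcal{W}^{n}$ to the first time-difference $(\phi^{n+\frac{1}{2}} - \phi^{n-\frac{1}{2}})/(c^2 \Delta t)$ and using that the constant-coefficient operator $\mathcal{W}$ commutes with the time-shift, I would reorganize the result as $(c^2 \Delta t)^{-1}$ times the difference of the left-hand side of \eqref{eq:semi-discrete phi} evaluated at levels $n+\frac{1}{2}$ and $n-\frac{1}{2}$. Each of these is replaced by $\rho / \epsilon_0$ through \eqref{eq:semi-discrete phi} applied at the two consecutive half-integer levels, so the scalar piece contributes $(c^2 \Delta t\, \epsilon_0)^{-1}(\rho^{n+\frac{1}{2}} - \rho^{n-\frac{1}{2}})$, which equals $(\mu_0 / \Delta t)(\rho^{n+\frac{1}{2}} - \rho^{n-\frac{1}{2}})$ after invoking $c^2 = 1/(\mu_0 \epsilon_0)$. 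Adding the two pieces and factoring out $\mu_0$ assembles exactly $\mu_0 \epsilon_2^{n}$, and undoing the initial rearrangement recovers \eqref{eq:explicit semi-discrete Lorenz residual update}.

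I expect the scalar-potential piece to be the main obstacle. Because $\epsilon_1$ contains a \emph{first} difference of $\phi$ rather than $\phi$ itself, the computation cannot be closed using the update for $\phi^{n+\frac{3}{2}}$ alone: the identity holds only once \eqref{eq:semi-discrete phi} is also used at the shifted index $n-\frac{1}{2}$. In a purely mechanical substitution of the updates for $\phi^{n+\frac{3}{2}}$ and $\mathbf{A}^{n+1}$ into the definition of $\epsilon_1^{n+1}$, this appears as a leftover group of terms that must be recognized as (minus $c^2 \Delta t$ times) the scalar wave equation at level $n-\frac{1}{2}$ before it cancels. Keeping the time-difference/wave-operator commutation explicit, as above, is the clean way to avoid losing track of that shifted contribution.
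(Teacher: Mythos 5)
Your proposal is correct and is essentially the paper's own argument: both prove the identity by a direct computation that uses the scalar-potential scheme \eqref{eq:semi-discrete phi} at the two consecutive half-integer levels $n+\tfrac{1}{2}$ and $n-\tfrac{1}{2}$, the vector-potential scheme \eqref{eq:semi-discrete A} at level $n$, linearity, and the relation $c^{2} = 1/(\mu_0\epsilon_0)$. The only difference is organizational rather than mathematical --- the paper substitutes the solved-for updates for $\phi^{n+\frac{3}{2}}$, $\phi^{n+\frac{1}{2}}$, and $\mathbf{A}^{n+1}$ into the definition of $\epsilon_1^{n+1}$ and then regroups terms into residuals, whereas you apply the explicit discrete wave operator to $\epsilon_1^{n}$ and commute it past $\nabla\cdot$ and the time difference; the underlying algebra is identical.
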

\begin{proof}
Start with 
\begin{equation*}
 \epsilon_1^{n+1}=\frac{\phi^{n+\frac{3}{2}} - \phi^{n+\frac{1}{2}}}{c^{2} \Delta t} + \nabla \cdot \mathbf{A}^{n+1},
\end{equation*}
and substitute equation \eqref{eq:semi-discrete phi} in for $\phi^{n+\frac{3}{2}}$ and $\phi^{n+\frac{1}{2}}$, then substitute equation \eqref{eq:semi-discrete A} in for $\mathbf{A}^{n+1}$. We have
\begin{align*}
    \frac{\phi^{n+\frac{3}{2}} - \phi^{n+\frac{1}{2}}}{c^2\Delta t}& = \\ 
    &\frac{2 \phi^{n+\frac{1}{2}} - \phi^{n-\frac{1}{2}}  -(c^2\Delta t^2)\Delta \phi^{n+\frac{1}{2}} + (c^2\Delta t^2)\frac{1}{\epsilon_0} \rho^{n+\frac{1}{2}}
    -\left (2 \phi^{n-\frac{1}{2}} - \phi^{n-\frac{3}{2}}  - (c^2\Delta t^2)\Delta \phi^{n-\frac{1}{2}}+ (c^2\Delta t^2)\frac{1}{\epsilon_0} \rho^{n-\frac{1}{2}} \right )}{c^2 \Delta t},
\end{align*}
and 
\begin{equation*}
\nabla \cdot \mathbf{A}^{n+1}= 2 \nabla \cdot\mathbf{A}^{n} - \nabla \cdot\mathbf{A}^{n-1}- c^2\Delta t^2 \Delta \nabla \cdot\mathbf{A}^{n} + c^2 \Delta t^2 \mu_0 \nabla \cdot \mathbf{J}^{n}.
\end{equation*}
After grouping terms in the form of $\epsilon_1^{k}$ and $\epsilon_2^{k}$ as in Lemma \ref{lemma:Implicit method-Green’s Function}, we arrive at \eqref{eq:explicit semi-discrete Lorenz residual update}.  This completes the proof.
\end{proof}

\begin{theorem} \label{thm:explicit method-Green’s Function} (Explicit method) 
Assuming $\epsilon_1^{n} = \epsilon_2^{n} = 0, \, \forall \, n \leq 0$, the explicit semi-discrete Lorenz gauge formulation of Maxwell's equations \eqref{eq:semi-discrete phi}-\eqref{eq:semi-discrete Lorenz1} is time-consistent in the sense that the semi-discrete Lorenz gauge condition \eqref{eq:semi-discrete Lorenz1} is satisfied at any discrete time if and only if the corresponding semi-discrete continuity equation \eqref{eq:semi-discrete continuity} is also satisfied.
\end{theorem}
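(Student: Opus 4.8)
The plan is to mirror the inductive structure of Theorem~\ref{thm:Implicit method-Green’s Function}, now driven by the explicit residual relation \eqref{eq:explicit semi-discrete Lorenz residual update} established in Lemma~\ref{lemma:explicit - residual equivalence}. The essential simplification relative to the implicit case is that \eqref{eq:explicit semi-discrete Lorenz residual update} is a purely local three-term recurrence for $\epsilon_1^{k}$ into which $\epsilon_2^{k}$ enters as a source; because no inversion of $\mathcal{L}$ appears, both implications reduce to direct algebra rather than to successive applications of $\mathcal{L}^{-1}$ followed by $\mathcal{L}$.

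For the forward direction I would assume the gauge condition holds, so $\epsilon_1^{k} \equiv 0$ for every $k \geq 0$. Setting $\epsilon_1^{n+1} = \epsilon_1^{n} = \epsilon_1^{n-1} = 0$ in \eqref{eq:explicit semi-discrete Lorenz residual update} and using $\Delta \epsilon_1^{n} = \Delta(0) = 0$ collapses the relation to $0 = (c^2 \Delta t^2)\mu_0\,\epsilon_2^{n}$. Since $c^2 \Delta t^2 \mu_0 \neq 0$, this yields $\epsilon_2^{n} \equiv 0$ pointwise at each $n$, so in this direction the conclusion follows without any induction at all.

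For the converse I would assume the continuity equation holds, i.e. $\epsilon_2^{k} \equiv 0$ for all $k$, whereupon \eqref{eq:explicit semi-discrete Lorenz residual update} degenerates to the homogeneous recurrence $\epsilon_1^{n+1} = 2\epsilon_1^{n} - \epsilon_1^{n-1} + (c^2 \Delta t^2)\Delta \epsilon_1^{n}$. The starting hypothesis $\epsilon_1^{-1} \equiv \epsilon_1^{0} \equiv 0$ supplies the base case $\epsilon_1^{1} = 2(0) - 0 + (c^2\Delta t^2)\Delta(0) = 0$, and assuming inductively that $\epsilon_1^{n} \equiv \epsilon_1^{n-1} \equiv 0$, the identical computation forces $\epsilon_1^{n+1} \equiv 0$, closing the induction.

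The one place warranting a second look is the term $(c^2\Delta t^2)\Delta\epsilon_1^{n}$ in the converse recurrence: unlike the implicit relation, where the residual source is smoothed by $\mathcal{L}^{-1}$, here it is acted on by the unbounded operator $\Delta$. I expect this to be the only conceivable obstacle, but it dissolves immediately, because the induction propagates the \emph{exact} identity $\epsilon_1^{n} \equiv 0$ rather than mere smallness; consequently $\Delta\epsilon_1^{n} = 0$ holds exactly at every step and the Laplacian never feeds anything back into the recurrence. Under the stated zero-initialization hypothesis the argument therefore goes through cleanly, and the theorem follows.
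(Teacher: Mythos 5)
Your proof is correct and follows essentially the same route as the paper, which simply invokes the same construction as the implicit-case theorem: apply the residual recurrence of Lemma~\ref{lemma:explicit - residual equivalence} and propagate the zero initial data inductively. Your observations that the forward direction needs no induction and no application of $\mathcal{L}$ (since the explicit recurrence is local and couples only to $\epsilon_2^{n}$) are accurate simplifications of that same argument rather than a different approach.
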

\begin{proof}
The proof is an identical construction to that in Theorem \ref{thm:Implicit method-Green’s Function}.
\end{proof}

\begin{theorem}
    Under the assumption that $\epsilon_1^k = \epsilon_2^k=0, \, \forall \, k \leq n+1$, the implicit semi-discrete methods \eqref{eq:semi-discrete phi2}-\eqref{eq:semi-discrete Lorenz} satisfy Gauss's Law
    \begin{equation}
        \nabla \cdot \mathbf{E}^{n+\frac{1}{2}} = \frac{1}{\epsilon_0}\left(\frac{\rho^{n+\frac{3}{2}} + \rho^{n-\frac{1}{2}}}{2}\right).
    \end{equation}
\end{theorem}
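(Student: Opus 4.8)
The plan is to reduce $\nabla\cdot\mathbf{E}^{n+\frac{1}{2}}$ to the left-hand side of the scalar potential equation \eqref{eq:semi-discrete phi2} and then invoke that equation. Consistent with the Crank--Nicolson structure of the scheme, I would take the discrete field at the half-integer level to be
$$
\mathbf{E}^{n+\frac{1}{2}} = -\nabla\left(\frac{\phi^{n+\frac{3}{2}}+\phi^{n-\frac{1}{2}}}{2}\right) - \frac{\mathbf{A}^{n+1}-\mathbf{A}^{n}}{\Delta t},
$$
so that the $-\nabla\phi$ contribution carries the very same time-average that sits under the Laplacian in \eqref{eq:semi-discrete phi2}, while $-\partial_t\mathbf{A}$ is the centered difference straddling $t^{n+\frac{1}{2}}$. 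Taking the divergence gives
$$
\nabla\cdot\mathbf{E}^{n+\frac{1}{2}} = -\Delta\left(\frac{\phi^{n+\frac{3}{2}}+\phi^{n-\frac{1}{2}}}{2}\right) - \frac{\nabla\cdot\mathbf{A}^{n+1}-\nabla\cdot\mathbf{A}^{n}}{\Delta t}.
$$

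Next I would eliminate the vector potential using the gauge hypothesis. The assumption $\epsilon_1^{k}=0$ for all $k\le n+1$ supplies, at the two levels $k=n+1$ and $k=n$, the identities $\nabla\cdot\mathbf{A}^{n+1} = -(\phi^{n+\frac{3}{2}}-\phi^{n+\frac{1}{2}})/(c^2\Delta t)$ and $\nabla\cdot\mathbf{A}^{n} = -(\phi^{n+\frac{1}{2}}-\phi^{n-\frac{1}{2}})/(c^2\Delta t)$. Substituting these collapses the time difference to a discrete second derivative,
$$
\frac{\nabla\cdot\mathbf{A}^{n+1}-\nabla\cdot\mathbf{A}^{n}}{\Delta t} = -\frac{\phi^{n+\frac{3}{2}}-2\phi^{n+\frac{1}{2}}+\phi^{n-\frac{1}{2}}}{c^2\Delta t^2},
$$
so that $\nabla\cdot\mathbf{E}^{n+\frac{1}{2}}$ becomes precisely the left-hand side of \eqref{eq:semi-discrete phi2}. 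Invoking that equation then replaces the whole expression by $\tfrac{1}{\epsilon_0}\,\tfrac{1}{2}(\rho^{n+\frac{3}{2}}+\rho^{n-\frac{1}{2}})$, which is the claim.

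The substantive step is the choice of the discrete field in the first paragraph: the identity is exact only because the $-\nabla\phi$ term is built from the time-average $\tfrac{1}{2}(\phi^{n+\frac{3}{2}}+\phi^{n-\frac{1}{2}})$ rather than the pointwise value $\phi^{n+\frac{1}{2}}$. Using the latter leaves a residual $\Delta\big(\tfrac{1}{2}(\phi^{n+\frac{3}{2}}+\phi^{n-\frac{1}{2}})-\phi^{n+\frac{1}{2}}\big)$, which is only $\mathcal{O}(\Delta t^2)$ and not identically zero; pinning down the scheme-consistent definition of $\mathbf{E}^{n+\frac{1}{2}}$ so that it pairs with the averaged Laplacian is therefore the crux, after which the algebra is routine. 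I would also note that this identity uses only the gauge residuals $\epsilon_1^{n}=\epsilon_1^{n+1}=0$, which are directly part of the hypothesis; the continuity residuals $\epsilon_2^{k}=0$ are not actually needed here and appear only as the natural companion assumption because of the time-consistency equivalence established above. Finally, $\nabla\cdot\mathbf{B}^{n+\frac{1}{2}}=0$ is automatic, since $\mathbf{B}=\nabla\times\mathbf{A}$ and spatial derivatives are still analytic in the semi-discrete setting.
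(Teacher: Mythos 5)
Your proposal is correct and follows essentially the same route as the paper's own proof: define $\mathbf{E}^{n+\frac{1}{2}}$ with the Crank--Nicolson time-average of $\phi$ and the centered difference of $\mathbf{A}$, take the divergence, use the gauge residuals $\epsilon_1^{n}=\epsilon_1^{n+1}=0$ to convert the $\mathbf{A}$ terms into the discrete second time derivative of $\phi$, and invoke equation \eqref{eq:semi-discrete phi2}. Your added observations---that the averaged $\phi$ is what makes the identity exact, and that only the gauge residuals (not $\epsilon_2$) are used---are accurate and consistent with the paper's argument.
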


\begin{proof}
    Starting with the identity $\mathbf{E}=-\nabla\phi -\partial_t \mathbf{A}$, we have
    \begin{equation}
        \textbf{E}^{n+\frac{1}{2}} = -\nabla\left(\frac{\phi^{n+\frac{3}{2}}+\phi^{n-\frac{1}{2}}}{2}\right) - \frac{\textbf{A}^{n+1} - \textbf{A}^{n}}{\Delta t}.
    \end{equation}
    Taking the divergence, we obtain
    \begin{equation}
        \nabla \cdot \textbf{E}^{n+\frac{1}{2}} = -\Delta\left(\frac{\phi^{n+\frac{3}{2}}+\phi^{n-\frac{1}{2}}}{2}\right) - \frac{\nabla\cdot\textbf{A}^{n+1} - \nabla\cdot\textbf{A}^{n}}{\Delta t}.
    \end{equation}
    Then, from the assumption, we know that $$\epsilon_1^{n} = 0 \implies \frac{1}{c^2}\frac{\phi^{n+\frac{1}{2}} - \phi^{n-\frac{1}{2}}}{\Delta t} + \nabla\cdot\textbf{A}^{n} = 0.$$ From this, we see that
    \begin{align}
        \begin{split}
            \nabla \cdot \textbf{E}^{n+\frac{1}{2}} &= -\Delta\left(\frac{\phi^{n+\frac{3}{2}}+\phi^{n-\frac{1}{2}}}{2}\right) - \frac{\left(-\dfrac{1}{c^2}\dfrac{\phi^{n+\frac{3}{2}} - \phi^{n+\frac{1}{2}}}{\Delta t}\right) - \left(-\dfrac{1}{c^2}\dfrac{\phi^{n+\frac{1}{2}} - \phi^{n-\frac{1}{2}}}{\Delta t}\right)}{\Delta t} \\
            &= -\Delta\left(\frac{\phi^{n+\frac{3}{2}}+\phi^{n-\frac{1}{2}}}{2}\right) + \frac{1}{c^2}\frac{\phi^{n+\frac{3}{2}} - 2\phi^{n+\frac{1}{2}} + \phi^{n-\frac{1}{2}}}{\Delta t^2} \\
            &= \frac{1}{\epsilon_0}\left(\frac{\rho^{n+\frac{3}{2}} + \rho^{n-\frac{1}{2}}}{2}\right).
        \end{split}
    \end{align}
    The last step comes from the fully implicit CDF-2 scalar wave equation \eqref{eq:semi-discrete phi2}. We thus see Gauss's Law is satisfied.
\end{proof}

\begin{theorem}  Under the assumption that $\epsilon_2^{k} = 0$, for $k \leq n+1$, both the explicit and semi-implicit semi-discrete methods satisfy Gauss's law
\begin{equation}
    \nabla \cdot \mathbf{E}^{n+\frac{1}{2}} = \frac{\rho^{n+\frac{1}{2}}}{\epsilon_0}.
\end{equation}
\end{theorem}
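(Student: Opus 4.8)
The plan is to follow the same strategy as the preceding implicit Gauss's-law proof, handling the explicit and semi-implicit schemes in parallel, with the only difference being that the discrete electric field is defined so that its $\phi$-part matches the spatial discretization appearing in each scheme's scalar wave equation. First I would convert the hypothesis into a statement about the gauge: since $\epsilon_2^{k}=0$ for all $k\leq n+1$, the explicit and semi-implicit time-consistency theorems (Theorems \ref{thm:explicit method-Green’s Function} and \ref{thm:semi-implicit method-Green’s Function}) give $\epsilon_1^{k}=0$ for all $k\leq n+1$. In particular the semi-discrete gauge condition holds at levels $n$ and $n+1$, which is the only place the hypothesis is used.

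Next I would write the discrete field. For the explicit scheme, matching the Laplacian $\Delta\phi^{n+\frac12}$ in \eqref{eq:semi-discrete phi}, I take
\begin{equation*}
\mathbf{E}^{n+\frac12} = -\nabla\phi^{n+\frac12} - \frac{\mathbf{A}^{n+1}-\mathbf{A}^{n}}{\Delta t},
\end{equation*}
while for the semi-implicit scheme, matching the averaged Laplacian in \eqref{eq:semi-discrete phi3}, I take $\mathbf{E}^{n+\frac12} = -\nabla\big(\tfrac12(\phi^{n+\frac32}+\phi^{n-\frac12})\big) - (\mathbf{A}^{n+1}-\mathbf{A}^{n})/\Delta t$. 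Taking the divergence turns the $\phi$-term into $-\Delta$ of the same combination and leaves the vector-potential term $(\nabla\cdot\mathbf{A}^{n+1}-\nabla\cdot\mathbf{A}^{n})/\Delta t$.

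The central computation is eliminating $\nabla\cdot\mathbf{A}$ with the gauge condition. Using $\epsilon_1^{n+1}=0$ and $\epsilon_1^{n}=0$ to substitute $\nabla\cdot\mathbf{A}^{n+1}=-(\phi^{n+\frac32}-\phi^{n+\frac12})/(c^2\Delta t)$ and $\nabla\cdot\mathbf{A}^{n}=-(\phi^{n+\frac12}-\phi^{n-\frac12})/(c^2\Delta t)$, the vector-potential term telescopes into the centered second difference $\tfrac{1}{c^2\Delta t^2}(\phi^{n+\frac32}-2\phi^{n+\frac12}+\phi^{n-\frac12})$. Adding this to $-\Delta$ of the chosen $\phi$-combination reproduces precisely the left-hand side of the corresponding scalar wave equation, namely \eqref{eq:semi-discrete phi} in the explicit case and \eqref{eq:semi-discrete phi3} in the semi-implicit case, whose right-hand side is $\rho^{n+\frac12}/\epsilon_0$. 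This step is identical for both schemes.

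The part to get right is bookkeeping rather than a genuine obstacle: the explicit and semi-implicit schemes differ only in whether $\Delta$ acts on $\phi^{n+\frac12}$ or on the average $\tfrac12(\phi^{n+\frac32}+\phi^{n-\frac12})$, so the discrete $\mathbf{E}^{n+\frac12}$ must be defined to mirror that choice for the final substitution to close against the wave equation; with the wrong pairing one is left with an uncanceled Laplacian term. The $\nabla\cdot\mathbf{A}$ elimination, where the hypothesis enters through the gauge condition, is common to both cases.
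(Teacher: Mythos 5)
Your proposal is correct and takes essentially the same approach as the paper: invoke the time-consistency theorems to convert $\epsilon_2^{k}=0$ into $\epsilon_1^{k}=0$, define $\mathbf{E}^{n+\frac{1}{2}}$ with the $\phi$-part matched to each scheme's scalar wave equation, and eliminate $\nabla\cdot\mathbf{A}^{n+1}$ and $\nabla\cdot\mathbf{A}^{n}$ via the gauge condition so that the divergence reproduces the left-hand side of \eqref{eq:semi-discrete phi} (explicit) or \eqref{eq:semi-discrete phi3} (semi-implicit). If anything, your bookkeeping is slightly more careful than the paper's, which cites only $\epsilon_1^{n+1}=0$ while also using $\epsilon_1^{n}=0$, and whose intermediate display carries a sign slip on the second-difference term that your telescoping computation gets right.
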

\begin{proof}
    We start by noting, under the assumption that  $\epsilon_2^{k} = 0$, we have   $\epsilon_1^{k}=0$, for $k=n+1$, from the previous theorems. We carry out the proof for the explicit case and note that the semi-implicit case is identical except for the fact that $\nabla \phi^{n+\frac{1}{2}}$ will be replaced by $\nabla \left(\frac{1}{2}(\phi^{n+\frac{3}{2}}+\phi^{n-\frac{1}{2}})\right)$ in the identity $$\mathbf{E}=-\nabla\phi -\partial_t \mathbf{A}.$$ Evaluating this identity at time $t^{n+\frac{1}{2}}$ and making use of the semi-discrete potentials, we find
    \begin{equation*}
        \mathbf{E}^{n+\frac{1}{2}} = -\nabla \phi^{n+\frac{1}{2}} - \frac{\mathbf{A}^{n+1}-\mathbf{A}^{n}}{\Delta t}.
    \end{equation*}                            
    Again, taking the divergence gives
    \begin{equation*}
        \nabla \cdot \mathbf{E}^{n+\frac{1}{2}} = -\Delta \phi^{n+\frac{1}{2}} - \frac{\nabla \cdot\mathbf{A}^{n+1}-\nabla \cdot\mathbf{A}^{n}}{\Delta t}.
    \end{equation*}
    given that $\epsilon_1^{n+1}=0$, then we can replace $\nabla \cdot\mathbf{A}^{n+1}$ with $-\frac{\phi^{n+\frac{3}{2}} - \phi^{n+\frac{1}{2}}}{c^{2} \Delta t}$ and $\nabla \cdot\mathbf{A}^{n}$ with $-\frac{\phi^{n+\frac{1}{2}} - \phi^{n-\frac{1}{2}}}{c^{2} \Delta t}$, so
    \begin{equation*}
        \nabla \cdot \mathbf{E}^{n+\frac{1}{2}} = -\Delta \phi^{n+\frac{1}{2}} - \frac{\phi^{n+\frac{3}{2}}-2\phi^{n+\frac{1}{2}}+\phi^{n-\frac{1}{2}}}{c^2\Delta t^2}=\frac{\rho^{n+\frac{1}{2}}}{\epsilon_0}.
    \end{equation*}
    Hence, we arrive at the conclusion of the theorem.
\end{proof}

\begin{remark} The semi-discrete formulation method automatically satisfies $\nabla\cdot \mathbf{B}=\nabla\cdot (\nabla\times \mathbf{A})=0.$ 
\end{remark}

\begin{remark} All three time-consistent methods satisfy all of the involutions and the gauge condition, as long as the method satisfies the semi-discrete continuity condition.
\end{remark}

\begin{remark}
    For all the methods we discuss, we incorporate the improved asymmetric Euler method for our particle update method. A time-centered wave advance opens the door for time-centering the particle pusher, which increases the accuracy of the solver and can be used to explore aspects of symplecticity. This exploration is beyond the scope of this paper but will be considered in future work.
\end{remark}

%
%
\subsection{The General Semi-Discrete Formulation for BDF Methods Under the Lorenz Gauge}
\label{subsec:3 BDFk}

In \cite{christliebPIC2022pt2}, we introduced a lemma and two theorems that showcased properties associated with the first-order BDF method in a semi-discrete setting. That is, conservation of charge is satisfied if and only if the semi-discrete Lorenz gauge condition is satisfied, and that Gauss's law for electricity is satisfied if said gauge condition is.  In this section, we generalize these properties to any BDF methods constructed with uniform step sizes.  Vital to the original is the consistent nature of the temporal derivatives, that is, applying the first order derivative BDF method twice results in the second order derivative BDF method in the sense that $D_{t}[D_{t}[\cdot]] = D_{t}^{2}[\cdot]$; the generalization will also take advantage of this.  We start with a time-consistent $k$-step BDF formulation
\begin{align}
    &\frac{du}{dt} = \frac{1}{\Delta t}\sum_{i=n-k}^{n}a_{n-i}u^{i} + O(\Delta t^p), \label{eq:BDFk-first-derivative} \\
    &\frac{d^2u}{dt^2} = \frac{1}{\Delta t^2}\sum_{i=n-k}^{n}a_{n-i}\sum_{j=i-k}^{i}a_{i-j}u^{j}+O(\Delta t^p). \label{eq:BDFk-second-derivative}
\end{align}
Note that BDF-1 through BDF-6 have this form.

For purposes of bookkeeping later on, we now prove a lemma regarding the structure of the coefficients in the method, which, absorbing the $\Delta t$ terms into the coefficients, we write as
\begin{equation}
    \sum_{i=n-k}^{n}{a_{n-i}\sum_{j=i-k}^{i}a_{i-j}u^{j}} \equiv \sum_{i=n-2k}^{n}C_iu^i \label{eq:BDFk-update-u}.
\end{equation}
This term arises in the right hand side  of the implicit solution to semi-discrete Maxwell's equations.  It is important that we sort out this term first, as it is embedded inside of other backwards differences we encounter when doing the constructive proof for the theorems regarding preservation of the gauge condition, continuity equation, and Gauss's law. 

\begin{lemma} \label{lemma:BDFk-summation-reordering}

    We consider the BDF-k method applied to $u^n$ twice:
    \begin{equation}
        \sum_{i=n-k}^{n}a_{n-i}\sum_{j=i-k}^{i}a_{i-j}u^j.
    \end{equation}
    Considering \eqref{eq:BDFk-update-u}, the coefficient $C_i$ for a term $u^i$ at the corresponding 
     $\left(n-i\right)$th time level below $n$ takes the form
    \begin{equation} \label{eq:BDFk-coefficients}
        C_i \coloneqq \begin{cases}
                \sum\limits_{j = 0}^{n-i}a_{j}a_{n-i-j}, \quad n-i \leq k, \\
                \sum\limits_{j = 0}^{2k-(n-i)}a_{k-j}a_{n-i-k+j}, \quad n-i > k.
            \end{cases}
    \end{equation}
    
\end{lemma}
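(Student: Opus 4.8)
The plan is to recognize the double sum as the discrete auto-convolution of the coefficient sequence $(a_0,a_1,\dots,a_k)$ with itself, and then to read off $C_i$ by collecting all contributions that land on a fixed time level. The cleanest way to expose this structure is to eliminate the running indices $n$ and $i$ in favor of \emph{shifts}. First I would substitute $\ell = n-i$ in the first-derivative operator, so that
\[
\sum_{i=n-k}^{n} a_{n-i} u^{i} = \sum_{\ell=0}^{k} a_{\ell}\, u^{n-\ell},
\]
exhibiting $D_t[\cdot]$ as convolution with $(a_0,\dots,a_k)$. Applying this operator twice and substituting $m$ for the inner shift gives the symmetric double sum
\[
\sum_{i=n-k}^{n} a_{n-i}\sum_{j=i-k}^{i} a_{i-j} u^{j}
= \sum_{\ell=0}^{k}\sum_{m=0}^{k} a_{\ell}\,a_{m}\, u^{\,n-\ell-m}.
\]
Setting $p = \ell+m$, the term $u^{n-p}$ collects all pairs with $\ell+m=p$, and $p$ ranges over $0,\dots,2k$, which is exactly the index range $i=n-2k,\dots,n$ claimed in \eqref{eq:BDFk-update-u}. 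Hence, writing $p=n-i$, the coefficient of $u^{i}$ is
\[
C_i = \sum_{\substack{0\le \ell,m\le k \\ \ell+m = n-i}} a_{\ell}\,a_{m}.
\]

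The remaining work is purely a matter of resolving the box constraint $0\le\ell,m\le k$ against the diagonal constraint $\ell+m=n-i$, and this is where the two cases arise. Writing $m=(n-i)-\ell$, the feasible range for $\ell$ is $\max(0,\,n-i-k)\le \ell\le \min(k,\,n-i)$. When $n-i\le k$, the upper bound $m\le k$ and the bound $\ell\le k$ are both automatically satisfied, so $\ell$ simply runs from $0$ to $n-i$; relabeling $\ell=j$ yields the first branch $C_i=\sum_{j=0}^{n-i} a_j\,a_{n-i-j}$. When $n-i>k$, the binding constraints become $\ell\ge n-i-k$ (from $m\le k$) and $\ell\le k$, so $\ell$ runs from $n-i-k$ to $k$. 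Substituting $j=k-\ell$ reverses the sum into $j=0,\dots,2k-(n-i)$ and converts $a_\ell a_{n-i-\ell}$ into $a_{k-j}\,a_{n-i-k+j}$, which is precisely the second branch of \eqref{eq:BDFk-coefficients}.

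I expect the only genuine obstacle to be bookkeeping: tracking which of the four inequalities defining the admissible $(\ell,m)$ region is active in each regime, and executing the index reflection $j=k-\ell$ in the second case without an off-by-one error in the summation limits. Everything else follows from the observation that $D_t^2 = D_t\circ D_t$ is convolution of $(a_\ell)$ with itself, together with the symmetry $a_\ell a_m = a_m a_\ell$, which guarantees that the collected coefficients depend only on the total shift $n-i$ and not on the order of application. No analytic estimates or properties of the specific BDF weights are needed; the statement is a combinatorial identity about convolution supports.
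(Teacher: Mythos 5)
Your proof is correct, but it follows a genuinely different route from the paper's. The paper proves the lemma by induction on $k$: it verifies the base case $k=1$ explicitly, and in the inductive step it peels the outer and inner $(m+1)$-th terms off the double sum and then checks the coefficient formula in three separate regimes ($n-i < m+1$, $n-i = m+1$, $n-i > m+1$), reassembling the sums case by case. You instead eliminate the induction entirely by re-indexing both sums by their shifts ($\ell = n-i$, $m = i-j$), which turns the double sum into the auto-convolution $\sum_{\ell=0}^{k}\sum_{m=0}^{k} a_\ell a_m u^{n-\ell-m}$, so that $C_i$ is just the sum of $a_\ell a_m$ over the intersection of the line $\ell+m=n-i$ with the box $0\le \ell,m\le k$; the two branches of \eqref{eq:BDFk-coefficients} then fall out of which constraints of the box are active, with the relabeling $j=k-\ell$ producing the second branch exactly (your endpoint check $\ell=n-i-k \mapsto j=2k-(n-i)$ is right). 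Your argument is shorter, avoids the inductive bookkeeping where off-by-one errors are easiest to make, and makes the case split conceptually transparent as the geometry of a convolution support; it also yields the index range $n-2k\le i\le n$ of \eqref{eq:BDFk-update-u} for free. What the paper's induction buys is stylistic continuity: the peeling-apart of nested sums in its inductive step mirrors the manipulations reused verbatim in Corollary \ref{corollary:BDFk-coefficients-lemma} and Lemma \ref{lemma:BDFk-residual-equivalence}, whereas your convolution viewpoint would require restating those later manipulations in shift coordinates to get the same economy there. Both proofs establish the identical combinatorial identity, and neither uses any analytic property of the BDF weights.
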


\begin{proof}

    We prove by induction. The base case $k=1$ results in the sum
    \begin{equation}
        a_0a_0u^n + (a_0a_1 + a_1a_0)u^{n-1} + a_1a_1u^{n-2}.
    \end{equation}
    This clearly has the form given in equation \eqref{eq:BDFk-coefficients}.
    
    For the inductive step, we assume the hypothesis is true for $k \leq m$. Next, let us consider $k=m+1$ for which we have the summation
    \begin{equation}
        \sum_{j=n-\left(m+1\right)}^{n}a_{n-j}\sum_{\ell=j-\left(m+1\right)}^{j}a_{j-\ell}u^\ell.
    \end{equation}
    We can easily peel the outer $(m+1)$-th term out, which gives
    \begin{equation}
        \left( \sum_{j=n-m}^{n}a_{n-j}\sum_{\ell=j-\left(m+1\right)}^{j}a_{j-\ell}u^\ell \right) + \left( a_{m+1}\sum_{j=n-2(m+1)}^{n-(m+1)}a_{n-(m+1)-j}u^j \right).
    \end{equation} 
    Similarly, we then peel the inner $(m+1)$-th term out to obtain
    \begin{equation}
        \left( \sum_{j=n-m}^{n}a_{n-j}\sum_{\ell=j-m}^{j}a_{j-\ell}u^\ell \right) + \left( a_{m+1}\sum_{j=n-2(m+1)}^{n-(m+1)}a_{n-(m+1)-j}u^{j} \right) + \left( a_{m+1}\sum_{j=n-m}^{n}a_{n-j}u^{j-(m+1)} \right).
    \end{equation}
    To prove the lemma, we must consider three cases, namely, $n-i < m+1$, $n-i = m+1$, and $n-i > m+1$.

    Consider $C_i$ for $n-i < m+1$. We know from the inductive hypothesis that the first nested sum is equal to $$\sum_{\substack{j=0 \\ n-i < m+1}}^{n-i}a_j a_{n-i-j}.$$ The second summation contains $u^j$ for $n-2(m+1) \leq j \leq n-(m+1)$. Since $n-(m+1) < i$, this summation cannot contain $u^i$. The third summation contains $u^j$ for $n-m \leq j \leq n$, and therefore contains $u^i$ when $j-(m+1) = i$, or $j=i+(m+1)$. So
    \begin{align*}
        C_i &= \sum_{\substack{j=0 \\ n-i < m+1}}^{n-i}a_j a_{n-i-j} + a_{n-i-(m+1)}a_{m+1}, \\
            &= \sum_{j=0}^{n-i}a_ja_{n-i-j}.
    \end{align*}

    Next, consider $C_i$ for $n-i = m+1$. Again, from the inductive hypothesis, the first nested sum is equal to $$\sum_{\substack{j=0 \\ n-i < m+1}}^{n-i}a_j a_{n-i-j}.$$ In the second summation, the term containing $u^{i}$ occurs when $j = i$ or $j = n-(m+1)$. Therefore, the coefficients in second sum reduce to
    \begin{equation*}
        a_{m+1}\sum_{j=n-2(m+1)}^{n-(m+1)}a_{n-(m+1)-j} = a_{m+1} a_{0}.
    \end{equation*}
    Similar, we find that the terms containing $u^{i}$ in third summation reduce to
    \begin{equation*}
        \sum_{j=n-m}^{n}a_{n-j}a_{m+1}u^{j-(m+1)} = a_{0}a_{m+1}.
    \end{equation*}
    Combining these terms, we find that this is equivalent to
    \begin{align*}
        C_i &= \sum_{\substack{j=0 \\ n-i < m+1}}^{n-i}a_j a_{n-i-j} + a_{m+1}a_{0} + a_{0}a_{m+1}, \\
            &= \sum_{j=0}^{n-i}a_ja_{n-i-j},
    \end{align*}
    when $n-i = m+1$.

     Consider $C_i$ for $n-i > k = m+1$. We know from the inductive hypothesis that the first nested sum is equal to $\sum_{j=0}^{2m-(n-i)}a_{m-j}a_{n-i-m+j}$ for $n-i \geq m$. We can adjust this index by one without any change to the actual value:

     \begin{equation}
         \sum_{j=0}^{2m-(n-i)}a_{m-j}a_{n-i-m+j} = \sum_{j=1}^{2m-(n-i)+1}a_{m+1-j}a_{n-i-(m+1)+j}
     \end{equation}
     
     Given $i < n-k = n-(m+1)$, we know the corresponding $u^j$ term is in both the second and third summation term and have the coefficients $a_{m+1}a_{n-(m+1)-i}$ and $a_{m+1}a_{n-i-(m+1)}$, respectively. The former corresponds to $a_{m+1-j}a_{n-i-(m+1)+j}$ for $j=2(m+1)-(n-i)$, the latter corresponds to $j=0$. We thus fill in our summation and see

     \begin{equation}
        C_i = \sum_{j=0}^{2(m+1)-(n-i)}a_{m+1-j}a_{n-i-(m+1)+j}.
     \end{equation}

    In all cases we see the nested sum takes the following structure.

    \begin{equation}
        C_i \coloneqq \begin{cases}
                \sum_{j = 0}^{n-i}a_{j}a_{n-i-j}, n-i \leq k, \\
                \sum_{j = 0}^{2k-(n-i)}a_{k-j}a_{n-i-k+j}, n-i > k.
            \end{cases}
    \end{equation}

    The base case and inductive step have been demonstrated, completing the proof.    
\end{proof}

\begin{table}[t]
    \centering
    \begin{tabular}{||c|c|c||}
        \hline
        $i$ & $n-i$ & $C_i$ \\
        \hline
        $n$ & $0$ & $a_0a_0$ \\
        $n-1$ & $1$ & $a_1a_0$ + $a_0a_1$ \\
        $n-2$ & $2$ & $a_2a_0 + a_1a_1 + a_0a_2$ \\
        \vdots & \vdots & \vdots \\
        $n-k$ & $k$ & $a_ka_0 + a_{k-1}a_1 + \cdots + a_1a_{k-1} + a_0a_k$ \\
        \vdots & \vdots & \vdots \\
        $n-2k+1$ & $2k-1$ & $a_{k-1}a_k + a_ka_{k-1}$ \\
        $n-2k$ & $2k$ & $a_ka_k$ \\
        \hline
    \end{tabular}
    \caption{Table of coefficients for $i$th time level of the $k$th order Backward Difference method for timestep $n$.}
    \label{tab:coefficients}
\end{table}

An immediate consequence is the following corollary:

\begin{corollary} \label{corollary:BDFk-coefficients-lemma}
    The following identity holds:
    \begin{equation}
        \sum_{i=n-k}^{n-1}a_{n-i}\sum_{j=i-k}^{i}a_{i-j}u^{j} + a_0\sum_{i-n-k}^{n-1}a_{n-i}u^{i} = \sum_{i=n-2k}^{n-1}C_iu^i.
    \end{equation}
\end{corollary}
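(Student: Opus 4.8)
The plan is to derive this corollary directly from Lemma~\ref{lemma:BDFk-summation-reordering} by isolating and cancelling the top time level $i=n$ on both sides of the identity already established there. The corollary is nothing more than that identity with its $u^n$ contributions removed, after relabelling a dummy index. (I read the lower limit of the second sum on the left, printed as $\sum_{i-n-k}^{n-1}$, as the intended $\sum_{i=n-k}^{n-1}$.)

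First I would start from the Lemma's conclusion, $\sum_{i=n-k}^{n}a_{n-i}\sum_{j=i-k}^{i}a_{i-j}u^j = \sum_{i=n-2k}^{n}C_iu^i$, and peel the $i=n$ term off the outer sum on the left. Since $a_{n-i}=a_0$ at $i=n$ and the inner sum then runs over $j=n-k,\dots,n$, this contributes $a_0\sum_{j=n-k}^{n}a_{n-j}u^j$, leaving the residual double sum $\sum_{i=n-k}^{n-1}a_{n-i}\sum_{j=i-k}^{i}a_{i-j}u^j$. Next I would split that isolated single sum by extracting its own $j=n$ term: because $a_{n-j}=a_0$ at $j=n$, this term equals $a_0^2u^n$, so that $a_0\sum_{j=n-k}^{n}a_{n-j}u^j = a_0^2u^n + a_0\sum_{j=n-k}^{n-1}a_{n-j}u^j$.

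On the right of the Lemma, I would likewise peel the $i=n$ term, using the Lemma's coefficient formula (or equivalently the first row of Table~\ref{tab:coefficients}) to identify $C_n=a_0a_0=a_0^2$, giving $\sum_{i=n-2k}^{n}C_iu^i = a_0^2u^n + \sum_{i=n-2k}^{n-1}C_iu^i$. The two copies of $a_0^2u^n$ then cancel, and renaming the dummy index $j\to i$ in the surviving single sum produces exactly the claimed identity.

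I do not expect any genuine obstacle here: the content is entirely bookkeeping, and the only things requiring care are the consistency of the summation limits across the peeling steps, in particular verifying that the residual double sum's smallest index is $j=n-2k$ (matching the right-hand lower limit) and that $C_n=a_0^2$ matches the extracted $u^n$ coefficient. Since every manipulation is an exact rearrangement of finite sums, no approximation or fresh induction beyond the already-proved Lemma is needed.
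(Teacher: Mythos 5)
Your proposal is correct and is essentially the paper's own proof: both start from Lemma~\ref{lemma:BDFk-summation-reordering}, peel off the $i=n$ term of the double sum (whose $j=n$ piece is $a_0^2u^n$), identify this with $C_nu^n$ on the right, and cancel it from both sides. Your write-up is actually cleaner than the paper's, which contains index typos in the intermediate peeled sum, and your reading of the misprinted lower limit $\sum_{i-n-k}^{n-1}$ as $\sum_{i=n-k}^{n-1}$ is the intended one.
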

\begin{proof}
    From Lemma \ref{lemma:BDFk-summation-reordering} we know
    \begin{align}
        \begin{split}
            \sum_{i=n-2k}^{n}C_i &= \sum_{i=n-k}^{n}{a_{n-i}\sum_{j=i-k}^{i}a_{i-j}u^{j}} \\
            &= \sum_{i=n-k}^{n-1}{a_{n-i}\sum_{j=i-k}^{i}a_{i-j}u^{j}} + a_0\sum_{j=i-k}^{n-1}a_{i-j}u^{j} + a_0a_0u^n.
        \end{split}
    \end{align}
    Subtracting $a_0a_0u^n \equiv C_{n}u^n$ from each side demonstrates the identity, concluding the proof.
\end{proof}

Applying a general backward difference method equations to the vector potential, scalar potential, gauge condition and continuity equation yields:
\begin{align}
    &\frac{1}{\Delta c^2}\frac{1}{\Delta t^2}\sum_{i=n-k}^{n}a_{n-i}\sum_{j=i-k}^{i}a_{i-j}\phi^{j} - \Delta \phi^{n} = \frac{\rho^{n}}{\epsilon_0}, \label{eq:BDFk-phi} \\
    &\frac{1}{\Delta c^2}\frac{1}{\Delta t^2}\sum_{i=n-k}^{n}a_{n-i}\sum_{j=i-k}^{i}a_{i-j}\textbf{A}^{j} - \Delta \textbf{A}^{n} = \mu_0\textbf{J}^{n}, \label{eq:BDFk-A} \\
    &\frac{1}{c^2}\frac{1}{\Delta t}\sum_{i=n-k}^{n}a_{n-i}\phi^{i} + \nabla\cdot\textbf{A}^{n} = 0, \label{eq:BDFk-gauge} \\
    &\frac{1}{\Delta t}\sum_{i=n-k}^{n}a_{n-i}\rho^{i} + \nabla\cdot\textbf{J}^{n} = 0 \label{eq:BDFk-continuity}.
\end{align}
where here we assume $\mathbf{J}^{n}$ is given and compute $\rho^{n}$ and the fields for the particle update.  Additionally, we define the residuals
\begin{align}
    &\epsilon_{1}^{n} \coloneqq \frac{1}{c^2}\frac{1}{\Delta t}\sum_{i=n-k}^{n}a_{n-i}\phi^{i} + \nabla\cdot\textbf{A}^{n}, \label{eq:BDFk-gauge-residual} \\
    &\epsilon_{2}^{n} \coloneqq \frac{1}{\Delta t}\sum_{i=n-k}^{n}a_{n-i}\rho^{i} + \nabla\cdot\textbf{J}^{n} \label{eq:BDFk-continuity-residual}.
\end{align}

We now prove an additional lemma which will link the residuals of the semi-discrete continuity equation and semi-discrete gauge condition.

\begin{lemma} \label{lemma:BDFk-residual-equivalence}
    In the system \eqref{eq:BDFk-phi} - \eqref{eq:BDFk-A}, the residual $\epsilon_1^n$ is a linear combination of $\epsilon_2^n$ and $\epsilon_2^i$, where $n-2k \leq i < n$.
\end{lemma}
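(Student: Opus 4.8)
The plan is to show that the gauge residual $\epsilon_1$ obeys exactly the same semi-discrete wave equation as the potentials, but with the continuity residual $\epsilon_2$ as its source, and then to invert that wave equation for $\epsilon_1^n$. Writing $D_t[u]^n=\tfrac1{\Delta t}\sum_{i=n-k}^{n}a_{n-i}u^i$ and recalling the time-consistency property \eqref{eq:BDFk-second-derivative} that $D_t\!\left[D_t[\cdot]\right]=D_t^2[\cdot]$, the residuals \eqref{eq:BDFk-gauge-residual}-\eqref{eq:BDFk-continuity-residual} read $\epsilon_1^n=\tfrac1{c^2}D_t[\phi]^n+\nabla\cdot\mathbf A^n$ and $\epsilon_2^n=D_t[\rho]^n+\nabla\cdot\mathbf J^n$. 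The engine of the proof is the identity
\[
\frac1{c^2}D_t^2[\epsilon_1]^n-\Delta\epsilon_1^n=\mu_0\,\epsilon_2^n.
\]
To obtain it I apply the linear wave operator $\tfrac1{c^2}D_t^2-\Delta$ to $\epsilon_1^n$; since this operator commutes with both $D_t$ and $\nabla\cdot$, it acts as $\tfrac1{c^2}D_t$ on the scalar wave equation \eqref{eq:BDFk-phi} and as $\nabla\cdot$ on the vector wave equation \eqref{eq:BDFk-A}, producing $\tfrac1{c^2\epsilon_0}D_t[\rho]^n+\mu_0\nabla\cdot\mathbf J^n$. Using $c^2=1/(\mu_0\epsilon_0)$ the two pieces combine into $\mu_0\big(D_t[\rho]^n+\nabla\cdot\mathbf J^n\big)=\mu_0\epsilon_2^n$. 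This is the BDF analogue of the cancellation in Lemma \ref{lemma:Implicit method-Green’s Function}, where the $\rho$-difference generated by the scalar equation pairs with the $\nabla\cdot\mathbf J$ of the vector equation to reconstitute the continuity residual.

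Second, I would expand $D_t^2[\epsilon_1]^n$ with Lemma \ref{lemma:BDFk-summation-reordering}, rewriting the nested sum as $\tfrac1{\Delta t^2}\sum_{i=n-2k}^{n}C_i\epsilon_1^i$ with the coefficients $C_i$ of \eqref{eq:BDFk-coefficients} and $C_n=a_0^2$. Peeling off the $i=n$ term turns the identity into
\[
\left(\frac{a_0^2}{c^2\Delta t^2}\,\mathcal I-\Delta\right)\epsilon_1^n=\mu_0\,\epsilon_2^n-\frac1{c^2\Delta t^2}\sum_{i=n-2k}^{n-1}C_i\,\epsilon_1^i.
\]
The operator on the left is a modified Helmholtz operator of the same type as $\mathcal L$, hence boundedly invertible and computable by the FFT; applying its inverse solves for $\epsilon_1^n$ and exhibits it as a fixed linear image of $\epsilon_2^n$ together with the block of residuals at levels $n-2k\le i<n$.

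Third, to cast the result in the stated form, a linear combination of the continuity residuals $\epsilon_2^n$ and $\epsilon_2^i$ with $n-2k\le i<n$, I would eliminate the remaining gauge residuals by strong induction under the standing hypothesis (as in Theorem \ref{thm:Implicit method-Green’s Function}) that all residuals vanish at nonpositive time levels. Assuming every earlier $\epsilon_1^i$ has already been written as a combination of continuity residuals and substituting into the inverted relation above, the fixed coefficients $C_i$ cataloged in Lemma \ref{lemma:BDFk-summation-reordering} and Corollary \ref{corollary:BDFk-coefficients-lemma} reorganize the contributions into continuity residuals alone, with the vanishing initial data truncating the lower end of the recursion.

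The main obstacle is precisely this last reorganization and the accounting of which continuity residuals survive: a naive unwinding of the recursion generically feeds $\epsilon_2$ from the entire past history into $\epsilon_1^n$, so the delicate point is to verify, using the explicit block structure of the $C_i$ together with the vanishing of all residuals for nonpositive indices, that the combination collapses onto the claimed window $n-2k\le i\le n$. Everything upstream of this, namely the wave identity and the modified-Helmholtz inversion, is routine linearity once Lemma \ref{lemma:BDFk-summation-reordering} is available; the bookkeeping that pins down the window of surviving continuity residuals is where the real work lies.
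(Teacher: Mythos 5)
Your first two steps are correct and in fact establish exactly what the paper's own proof establishes, by a genuinely cleaner route. Note first that the lemma as printed contains a typo: the paper's proof, and the way the theorem that follows invokes it, show the intended conclusion is
\begin{equation*}
    \epsilon_1^n = \mathcal{L}^{-1}\left[\frac{\mu_0}{\alpha^2}\,\epsilon_2^n + \sum_{i=n-2k}^{n-1} C_i\,\epsilon_1^i\right],
\end{equation*}
i.e.\ a combination of $\epsilon_2^n$ and the earlier \emph{gauge} residuals $\epsilon_1^i$, not earlier continuity residuals. Your identity $\tfrac{1}{c^2}D_t^2[\epsilon_1]^n-\Delta\epsilon_1^n=\mu_0\,\epsilon_2^n$ is precisely this relation before inversion: it is legitimate because $D_t$, $D_t^2=D_t\circ D_t$ (the time-consistency property \eqref{eq:BDFk-second-derivative}), $\Delta$, and $\nabla\cdot$ are constant-coefficient linear operators that commute, and expanding $D_t^2[\epsilon_1]^n=\tfrac{1}{\Delta t^2}\sum_{i=n-2k}^{n}C_i\epsilon_1^i$ via Lemma \ref{lemma:BDFk-summation-reordering}, peeling off $C_n=a_0^2$, and inverting the modified Helmholtz operator reproduces the paper's conclusion. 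The paper instead substitutes the update formulas \eqref{eq:BDFk-update-phi}--\eqref{eq:BDFk-update-A} into the definition \eqref{eq:BDFk-gauge-residual} and reorganizes the nested sums by hand, applying Corollary \ref{corollary:BDFk-coefficients-lemma} repeatedly across a page of regrouping; your commutation argument buys the same identity with far less bookkeeping, at the mild cost of making explicit that the wave equations must hold at the levels $n-k,\dots,n$ (the paper's substitution uses the same assumption implicitly).

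Your third step, however, contains a genuine gap, and it is a gap created by chasing the typo. Unwinding the recursion to eliminate the $\epsilon_1^i$ in favor of continuity residuals does not collapse onto the window $n-2k\le i\le n$: with vanishing residuals at nonpositive levels one gets $\epsilon_1^1=\mathcal{L}^{-1}\bigl[\tfrac{\mu_0}{\alpha^2}\epsilon_2^1\bigr]$, then $\epsilon_1^2=\mathcal{L}^{-1}\bigl[\tfrac{\mu_0}{\alpha^2}\epsilon_2^2+C_1\mathcal{L}^{-1}\bigl[\tfrac{\mu_0}{\alpha^2}\epsilon_2^1\bigr]\bigr]$, and in general $\epsilon_1^n$ involves $\epsilon_2^i$ for \emph{every} $0<i\le n$, with nested $\mathcal{L}^{-1}$ coefficients that do not vanish for $i<n-2k$; there is no cancellation among the $C_i$ that truncates the history, so the verification you defer ("where the real work lies") would fail. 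You have also imported the hypothesis that all residuals vanish at nonpositive time levels, which belongs to the subsequent theorem, not to this lemma, whose content is an unconditional one-step algebraic identity. The fix is simply to drop the third step: the relation you derived in steps one and two is the actual content of the lemma and is all that the downstream induction uses.
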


\begin{proof}
    
    From \eqref{eq:BDFk-phi} and \eqref{eq:BDFk-A}, we get the update schemes
    \begin{align}
        &\phi^{n} = \mathcal{L}^{-1}\left[\frac{1}{\alpha^2}\frac{\rho^n}{\epsilon_0} - \sum_{i=n-k}^{n-1}{a_{n-i}\sum_{j=i-k}^{i}a_{i-j}\phi^{j}} - a_0\sum_{i=n-k}^{n-1}a_{n-i}\phi^{i}\right], \label{eq:BDFk-update-phi} \\
        &\textbf{A}^{n} = \mathcal{L}^{-1}\left[\frac{1}{\alpha^2}\frac{\rho^n}{\epsilon_0} - \sum_{i=n-k}^{n-1}{a_{n-i}\sum_{j=i-k}^{i}a_{i-j}\textbf{A}^{j}} - a_0\sum_{i=n-k}^{n-1}a_{n-i}\textbf{A}^{i}\right]. \label{eq:BDFk-update-A}
    \end{align}
    We plug these into the definition of $\epsilon_1^n$, \eqref{eq:BDFk-gauge-residual}, and see
    \begin{align} \label{eq:BDFk-residual-eps1-1}
        \begin{split}
            \epsilon_1^n &= \frac{1}{c^2}\frac{1}{\Delta t}\sum_{i=n-k}^{n}a_{n-i}\phi^{i} + \nabla\cdot\textbf{A}^{n} \\
            &= \frac{1}{c^2}\frac{1}{\Delta t}\sum_{i=n-k}^{n}a_{n-i}\mathcal{L}^{-1}\left[\frac{1}{\alpha^2}\frac{\rho^{i}}{\epsilon_0} - \sum_{j=i-k}^{i-1}{a_{i-j}\sum_{l=j-k}^{j}{a_{j-l}\phi^l}} - a_0\sum_{j=i-k}^{i-1}a_{i-j}\phi^{j}\right] \\
            &+ \mathcal{L}^{-1}\left[\frac{\mu_0}{\alpha^2}\textbf{J}^{n} - \sum_{i=n-k}^{n-1}{a_{n-i}\sum_{j=i-k}^{i}a_{i-j}\nabla\cdot\textbf{A}^{j}} - a_0\sum_{i=n-k}^{n-1}a_{n-i}\nabla\cdot\textbf{A}^{i}\right]            
        \end{split}
    \end{align}
    This becomes
    \begin{subequations}  \label{eq:BDFk-residual-eps1-2}
        \begin{align}
            &= \mathcal{L}^{-1}\left[\frac{\mu_0}{\alpha^2}\left(\frac{1}{\Delta t}\sum_{i=n-k}^{n}a_{n-i}\rho^i + \nabla\cdot\textbf{J}^{n}\right)\right] \label{eq:BDFk-residual-eps1-2a} \\
            &- \frac{1}{c^2}\frac{1}{\Delta t}\sum_{i=n-k}^{n}a_{n-i}\mathcal{L}^{-1}\left[\sum_{j=i-k}^{i-1}a_{i-j}\sum_{l=j-k}^{j}a_{j-l}\phi^{l} + a_0\sum_{j=i-k}^{i-1}a_{i-j}\phi^{j}\right] \label{eq:BDFk-residual-eps1-2b} \\
            &- \mathcal{L}^{-1}\left[\sum_{i=n-k}^{n-1}a_{n-i}\sum_{j=i-k}^{i}a_{i-j}\nabla\cdot\textbf{A}^{i} + a_0\sum_{i=n-k}^{n-1}a_{n-i}\nabla\cdot\textbf{A}^{i}\right]. \label{eq:BDFk-residual-eps1-2c}
        \end{align}
    \end{subequations}
    The first term is clearly $\mathcal{L}^{-1}\left[\frac{\mu_0}{\alpha^2}\epsilon_2^n\right]$. Disregarding the $\frac{1}{c^2}\frac{1}{\Delta t}$ coefficient and $\mathcal{L}^{-1}$, we apply Corollary \ref{corollary:BDFk-coefficients-lemma} to the argument passed to $\mathcal{L}^{-1}$, and, breaking up the outermost summation, we see:
    \begin{align} \label{eq:BDFk-residual-term-2-rearrange-1}
        \begin{split}
              &a_0\left(\sum_{j=n-k}^{n-1}a_{n-j}\sum_{l=j-k}^{j}a_{j-l}\phi^{l} + a_0\sum_{j=n-k}^{n-1}a_{n-j}\phi^{j}\right) \\
            + &a_1\left(\sum_{j=n-k-1}^{n-2}a_{n-1-j}\sum_{l=j-k}^{j}a_{j-l}\phi^{l} + a_0\sum_{j=n-k-1}^{n-2}a_{n-1-j}\phi^{j}\right) \\
            + &a_2\left(\sum_{j=n-k-2}^{n-3}a_{n-2-j}\sum_{l=j-k}^{j}a_{j-l}\phi^{l} + a_0\sum_{j=n-k-2}^{n-3}a_{n-2-j}\phi^{j}\right) \\
            + &\cdots \\
            + &a_k\left(\sum_{j=n-2k}^{n-k-1}a_{n-k-j}\sum_{l=j-k}^{j}a_{j-l}\phi^{l} + a_0\sum_{j=n-2k}^{n-k-1}a_{n-k-j}\phi^{j}\right).
        \end{split}
    \end{align}
    We apply Corollary \ref{corollary:BDFk-coefficients-lemma} to each nested summation, and we see
    \begin{align}\label{eq:BDFk-residual-term-2-rearrange-2}
        \begin{split}
            &a_0\left(C_{n-2k}\phi^{n-2k} + C_{n-2k+1}\phi^{n-2k+1} + \cdots + C_{n-k}\phi^{n-k} + \cdots + C_{n-1}\phi^{n-1}\right) \\
            + &a_1\left(C_{n-2k}\phi^{n-2k-1} + C_{n-2k+1}\phi^{n-2k} + \cdots + C_{n-k}\phi^{n-k-1} + \cdots + C_{n-1}\phi^{n-2}\right) \\
            + &\cdots \\
            + &a_k\left(C_{n-2k}\phi^{n-3k} + C_{n-2k+1}\phi^{n-3k+1} + \cdots + C_{n-k}\phi^{n-k-2} + \cdots + C_{n-1}\phi^{n-k}\right)
        \end{split}
    \end{align}
    We then group like terms with like, and find that we have a linear combination of first order BDF-k derivatives.
    \begin{align} \label{eq:BDFk-residual-term-2-rearrange-3}
        \begin{split}
            &C_{n-1}\left(a_0\phi^{n-1} + a_1\phi^{n-2} + \cdots + a_k\phi^{n-k-1}\right) \\
            + &C_{n-2}\left(a_0\phi^{n-2} + a_1\phi^{n-3} + \cdots + a_k\phi^{n-k-2}\right) \\
            + &\cdots \\
            + &C_{n-k+1}\left(a_0\phi^{n-k+1} + a_1\phi^{n-k} + \cdots + a_k\phi^{n-2k+1}\right) \\
            + &C_{n-k}\left(a_0\phi^{n-k} + a_1\phi^{n-k-1} + \cdots + a_k\phi^{n-2k}\right) \\
            + &C_{n-k-1}\left(a_0\phi^{n-k-1} + a_1\phi^{n-k-2} + \cdots + a_k\phi^{n-2k-1}\right) \\
            + &\cdots \\
            + &C_{n-2k+1}\left(a_0\phi^{n-2k+1} + a_1\phi^{n-2k} + \cdots + a_k\phi^{n-3k+1}\right) \\
            + &C_{n-2k}\left(a_0\phi^{n-2k} + a_1\phi^{n-2k-1} + \cdots + a_k\phi^{n-3k}\right).
        \end{split}
    \end{align}
    This may be more compactly written as
    \begin{equation}  \label{eq:BDFk-residual-term-2-rearrange-4}
        \sum_{i=n-2k}^{n-1}C_{i}\sum_{j=0}^{k}a_j\phi^{i-j}
    \end{equation}
    A simpler process of just applying Corollary \ref{corollary:BDFk-coefficients-lemma} to the argument in \eqref{eq:BDFk-residual-eps1-2c}, gives similar results:
    \begin{align}\label{eq:BDFk-residual-term-2-rearrange-5}
        \sum_{i=n-2k}^{n-1}C_{i}\nabla\cdot\mathbf{A}^{i}
    \end{align}
    Multiplying the $\frac{1}{c^2}\frac{1}{\Delta t}$ coefficients we left out from \eqref{eq:BDFk-residual-eps1-2b} against \eqref{eq:BDFk-residual-term-2-rearrange-4}, we add this to \eqref{eq:BDFk-residual-term-2-rearrange-5} and see \eqref{eq:BDFk-residual-eps1-2b} + \eqref{eq:BDFk-residual-eps1-2c} becomes
    \begin{equation}
        \mathcal{L}^{-1}\left[\sum_{i=n-2k}^{n-1}C_i\left(\frac{1}{c^2}\frac{1}{\Delta t}\sum_{j=0}^{k}\left(a_j\phi^{i-j}\right) + \nabla\cdot\mathbf{A}^{i}\right)\right].
    \end{equation}
    This contains the exact definitions for the gauge residuals \eqref{eq:BDFk-gauge-residual}, $\epsilon_1^{i}$. Thus we see combining \eqref{eq:BDFk-residual-eps1-2a}, \eqref{eq:BDFk-residual-eps1-2b}, and \eqref{eq:BDFk-residual-eps1-2c} yields
    \begin{align}
        \begin{split}
            &\epsilon_1^n = \mathcal{L}^{-1}\left[\frac{\mu_0}{\alpha^2}\epsilon_2^n + \sum_{i=n-2k}^{n-1}C_i\epsilon_1^i\right], \\
            &C_i \coloneqq \begin{cases}
                \sum_{j = 0}^{i}a_{j}a_{i-j}, i \leq k, \\
                \sum_{j = 0}^{r}a_{k-j}a_{k-r+j}, i > k, r \coloneqq 2k-i.
            \end{cases}
        \end{split}
    \end{align}
    We have demonstrated $\epsilon_1^n$ is a linear combination of $\epsilon_2^n$ and previous steps of the gauge residual.
\end{proof}

Having proven this lemma, we now can prove that any BDF method satisfies the semi-discrete continuity equation satisfies the semi-discrete gauge condition and vice versa.

\begin{theorem}
    Under the system \eqref{eq:BDFk-phi}-\eqref{eq:BDFk-A}, \eqref{eq:BDFk-continuity} is satisfied if \eqref{eq:BDFk-gauge} is, and, assuming the initial conditions of \eqref{eq:BDFk-gauge} are satisfied, \eqref{eq:BDFk-gauge} is satisfied if \eqref{eq:BDFk-continuity} is satisfied.
\end{theorem}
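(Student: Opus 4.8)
The plan is to run a two-directional induction driven entirely by the master identity furnished by Lemma \ref{lemma:BDFk-residual-equivalence}, namely
\begin{equation*}
    \epsilon_1^n = \mathcal{L}^{-1}\left[\frac{\mu_0}{\alpha^2}\epsilon_2^n + \sum_{i=n-2k}^{n-1}C_i\epsilon_1^i\right].
\end{equation*}
My first move is to apply $\mathcal{L}$ to both sides, turning this into the clean recurrence
\begin{equation*}
    \mathcal{L}\,\epsilon_1^n = \frac{\mu_0}{\alpha^2}\epsilon_2^n + \sum_{i=n-2k}^{n-1}C_i\epsilon_1^i,
\end{equation*}
which expresses the current gauge residual \eqref{eq:BDFk-gauge-residual} in terms of the current continuity residual \eqref{eq:BDFk-continuity-residual} and the $2k$ preceding gauge residuals. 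Since $\mathcal{L} = \mathcal{I} - \alpha^{-2}\Delta$ is an invertible (coercive) modified Helmholtz operator, $\mathcal{L}\,\epsilon_1^n = 0$ forces $\epsilon_1^n = 0$, and I will lean on this injectivity throughout.

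For the forward direction I would assume the semi-discrete gauge condition \eqref{eq:BDFk-gauge} holds at every time level, i.e.\ $\epsilon_1^i \equiv 0$ for all $i$. Substituting into the recurrence annihilates both the left-hand side and the history sum, leaving $\frac{\mu_0}{\alpha^2}\epsilon_2^n = 0$, hence $\epsilon_2^n = 0$; as $n$ is arbitrary, the continuity equation \eqref{eq:BDFk-continuity} holds at every level. In contrast to the centered-difference case of Theorem \ref{thm:Implicit method-Green’s Function}, the BDF recurrence involves the continuity residual only at the single level $n$, so no separate start-up assumption on $\epsilon_2$ is required here.

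For the converse I would assume the continuity equation \eqref{eq:BDFk-continuity} holds, so $\epsilon_2^n \equiv 0$, together with the stated initial conditions, which I interpret as the vanishing of the gauge residual on the full start-up window: $\epsilon_1^i = 0$ for all $i$ in the first $2k$ levels (precisely the range the twice-applied $k$-step stencil reaches back to). The induction then proceeds on $n$: assuming $\epsilon_1^i = 0$ for every $i$ with $n-2k \le i < n$, the recurrence collapses to $\mathcal{L}\,\epsilon_1^n = 0$, and invertibility of $\mathcal{L}$ yields $\epsilon_1^n = 0$, advancing the window by one step and closing the induction.

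The step I expect to require the most care is the bookkeeping of the start-up data rather than any analytic estimate: because the doubled stencil reaches back $2k$ levels, the induction must be seeded with $2k$ vanishing gauge residuals, and one must confirm that the coefficient structure catalogued in Lemma \ref{lemma:BDFk-summation-reordering} and Corollary \ref{corollary:BDFk-coefficients-lemma} — already used to derive the master identity — produces a history sum supported on exactly the indices $n-2k \le i \le n-1$, with no stray terms outside this window. Once that support is verified to match the assumed initial data, both directions follow immediately from the invertibility of $\mathcal{L}$.
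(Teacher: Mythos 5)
Your proposal is correct and follows essentially the same route as the paper: both directions rest on the master identity of Lemma \ref{lemma:BDFk-residual-equivalence} combined with the invertibility of $\mathcal{L}$, with the converse run as an induction seeded by vanishing start-up gauge residuals. If anything, your bookkeeping is slightly more careful than the paper's own write-up, which drops the $\mu_0/\alpha^2$ factor and writes the history sum over $n-k-1 \le i \le n-1$ rather than the full $2k$-level window $n-2k \le i \le n-1$ that the lemma actually delivers.
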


\begin{proof}

    Assume \eqref{eq:BDFk-gauge} is satisfied, that is, $\epsilon_1^i = 0 \, \forall \, i$. From Lemma \ref{lemma:BDFk-residual-equivalence} we have
    \begin{align}
        \begin{split}
            &\epsilon_1^n = \mathcal{L}^{-1}\left[\epsilon_2^{n} + \sum_{i=n-k-1}^{n-1}C_i\epsilon_1^i\right], \\
            \implies &0 = \mathcal{L}^{-1}\left[\epsilon_2^{n} + \sum_{i=n-k-1}^{n-1}C_i\left(0\right)\right], \\
            \implies &0 = \mathcal{L}^{-1}\left[\epsilon_2^{n}\right].
        \end{split}
    \end{align}
    Clearly $\mathcal{L}^{-1}$ is invertible, so $\epsilon_2^n = 0$.

    Assume both \eqref{eq:BDFk-continuity} and the initial conditions of \eqref{eq:BDFk-gauge} are satisfied. That is, $\epsilon_2^i = 0 \forall i$ and $\epsilon_1^n = 0$ $\forall$ $n \leq 0$. We have
    \begin{align}
        \begin{split}
            &\epsilon_1^n = \mathcal{L}^{-1}\left[\epsilon_2^{n} + \sum_{i=n-k-1}^{n-1}C_i\epsilon_1^i\right], \\
            \implies &\epsilon_1^n = \mathcal{L}^{-1}\left[\left(0\right) + \sum_{i=n-k-1}^{n-1}C_i\epsilon_1^i\right], \\
            \implies &\epsilon_1^n = \mathcal{L}^{-1}\left[\sum_{i=n-k-1}^{n-1}C_i\epsilon_1^i\right].
        \end{split}
    \end{align}
    Given the initial conditions are all zero, we know $\epsilon_1^1$ is zero, as is $\epsilon_1^2$, and so on. This logic can be stepped to arbitrary $n$.

    We have demonstrated \eqref{eq:BDFk-gauge} $\implies$ \eqref{eq:BDFk-continuity} if \eqref{eq:BDFk-gauge} is satisfied initially, and \eqref{eq:BDFk-continuity} $\implies$ \eqref{eq:BDFk-gauge}.    
\end{proof}

We have linked the semi-discrete continuity equation with the semi-discrete gauge condition for arbitrary order BDF methods. Now we will show Gauss's law for electricity follows from the satisfaction of the semi-discrete gauge condition \eqref{eq:BDFk-gauge}.

\begin{theorem}
    If the semi-discrete gauge condition \eqref{eq:BDFk-gauge} is satisfied for time level $n+1$, then $\nabla\cdot\textbf{E}^{n+1} = \frac{\rho^{n+1}}{\epsilon_0}$.
\end{theorem}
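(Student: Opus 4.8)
The plan is to mirror the direct-substitution strategy used in the two preceding Gauss's law theorems (for the CDF and explicit/semi-implicit cases), now carried out with the general BDF first- and second-derivative operators. Write $D_t[u^{n+1}] = \frac{1}{\Delta t}\sum_{i=n+1-k}^{n+1} a_{n+1-i} u^i$ for the first-derivative stencil and $D_t^2[\cdot]$ for the second-derivative stencil of \eqref{eq:BDFk-second-derivative}. First I would evaluate the field identity $\mathbf{E} = -\nabla\phi - \partial_t\mathbf{A}$ at level $n+1$, discretizing $\partial_t\mathbf{A}$ with the same BDF operator that appears in the gauge, giving $\mathbf{E}^{n+1} = -\nabla\phi^{n+1} - D_t[\mathbf{A}^{n+1}]$, and then take the divergence to obtain $\nabla\cdot\mathbf{E}^{n+1} = -\Delta\phi^{n+1} - D_t[\nabla\cdot\mathbf{A}^{n+1}]$.

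The crux is to eliminate $\mathbf{A}$ in favor of $\phi$ using the gauge. The gauge residual definition \eqref{eq:BDFk-gauge-residual} gives $\nabla\cdot\mathbf{A}^i = -\frac{1}{c^2} D_t[\phi^i]$ whenever $\epsilon_1^i = 0$. Substituting this for every $i$ in the stencil of $D_t[\nabla\cdot\mathbf{A}^{n+1}]$ and pulling out the constant yields $D_t[\nabla\cdot\mathbf{A}^{n+1}] = -\frac{1}{c^2} D_t[D_t[\phi^{n+1}]]$. Here I invoke the consistency property stressed in this section, $D_t[D_t[\cdot]] = D_t^2[\cdot]$, to collapse the two first-derivative applications into exactly the second-derivative stencil of \eqref{eq:BDFk-phi}. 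Thus $\nabla\cdot\mathbf{E}^{n+1} = -\Delta\phi^{n+1} + \frac{1}{c^2} D_t^2[\phi^{n+1}]$, and the scalar wave equation \eqref{eq:BDFk-phi} evaluated at level $n+1$ identifies the right-hand side as $\rho^{n+1}/\epsilon_0$, completing the argument.

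The one point requiring care — and the main obstacle — is the scope of the hypothesis. The substitution $\nabla\cdot\mathbf{A}^i = -\frac{1}{c^2}D_t[\phi^i]$ must hold not merely at $i = n+1$ but at every level $i = n+1-k,\ldots,n+1$ appearing in the outer first-derivative stencil of $D_t[\nabla\cdot\mathbf{A}^{n+1}]$. I would therefore read the hypothesis in the sense of the preceding time-consistency theorem, namely that the gauge holds at level $n+1$ together with the prescribed zero initial data, which by that theorem forces $\epsilon_1^i = 0$ for all $i \le n+1$ and legitimizes the termwise substitution. The remaining work is purely bookkeeping: verifying that the doubled first-derivative operator $D_t[D_t[\cdot]]$ carries the same index ranges as the second-derivative operator in \eqref{eq:BDFk-second-derivative}, which is exactly the content of the consistency identity and requires no fresh computation.
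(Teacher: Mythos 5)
Your proposal is correct and follows essentially the same route as the paper's proof: discretize $\partial_t\mathbf{A}$ with the BDF first-derivative stencil, take the divergence, substitute the gauge condition termwise for each $\nabla\cdot\mathbf{A}^{i}$, collapse the nested first-derivative stencils into the second-derivative stencil of \eqref{eq:BDFk-second-derivative}, and conclude via the scalar wave equation \eqref{eq:BDFk-phi}. The only difference is that you explicitly flag and resolve the hypothesis-scope issue (the substitution needs $\epsilon_1^{i}=0$ at every level $i$ in the outer stencil, not just at $n+1$), a point the paper's proof passes over by silently substituting at all levels.
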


\begin{proof}
    We assume \eqref{eq:BDFk-gauge}:
    \begin{equation}
        \frac{1}{c^2}\frac{1}{\Delta t}\sum_{i=n-k}^{n}a_{n-i}\phi^{i} + \nabla\cdot\textbf{A}^{n} = 0.
    \end{equation}
    By definition we have $\mathbf{E} = -\nabla\phi - \frac{\partial\mathbf{A}}{\partial t}$, whose BDF-k time discrete form is as follows:
    \begin{equation}
        \textbf{E}^{n} = -\nabla\phi^n - \frac{1}{\Delta t}\sum_{i=n-k}^{n}a_{n-i}\textbf{A}^{i}.
    \end{equation}
    We take the divergence, which yields
    \begin{equation}
        \nabla\cdot\textbf{E}^{n} = -\Delta \phi^n - \frac{1}{\Delta t}\sum_{i=n-k}^{n}a_{n-i}\nabla\cdot\textbf{A}^{i}
    \end{equation}
    Using \eqref{eq:BDFk-gauge}, we replace $\nabla\cdot\textbf{A}^{i}$ with the BDF-k of $\phi^{i}$:
    \begin{align}
        \begin{split}
            \nabla\cdot\textbf{E}^{n} &= -\Delta \phi^n - \frac{1}{\Delta t}\sum_{i=n-k}^{n}a_{n-i}\left(-\frac{1}{c^2}\frac{1}{\Delta t}\sum_{j=i-k}^{i}a_{i-j}\phi^{j}\right) \\
            &= \frac{1}{c^2}\frac{1}{\Delta t^2}\sum_{i=n-k}^{n}a_{n-i}\sum_{j=i-k}^{i}a_{i-j}\phi^{j} - \Delta \phi^n \\
            &= \frac{\rho^n}{\epsilon_0}.
        \end{split}
    \end{align}
    The last step was taken by definition of the scalar potential wave equation \eqref{eq:BDFk-phi}. 
    
    We have demonstrated Gauss's Law for electricity is satisfied, completing the proof.
\end{proof}

\begin{remark}
    We note that all BDF methods satisfy this theorem.  However, while first and second order BDF methods are A-stable, third order BDF methods and above are A$\alpha$-stable.  In practice, this means that using first and second order BDF methods can be used with any CFL one wants, while third order BDF and above need to take time steps that are greater than $t_{min}$ to be stable when solving a wave equation, which has eigenvalues on the imaginary axis.  This is because third order BDF and above have stability diagrams that slightly move into the left half of the plane of the region of absolute stability and then back out again. A large enough CFL will avoid this issue.    
\end{remark}

\begin{remark}
For our numerical experiments we use the second order BDF method, BDF-2. The semi-discrete wave equations, Lorenz gauge condition, and continuity equation take the following form:
\begin{align*}
    \frac{1}{c^2}\frac{\phi^{n+1} - \frac{8}{3}\phi^{n} + \frac{22}{9}\phi^{n-1} - \frac{8}{9}\phi^{n-2} + \frac{1}{9}\phi^{n-3}}{\left((2/3)\Delta t\right)^2} - \Delta \phi^{n+1} &= \frac{\rho^{n+1}}{\epsilon_0}, \\
    \frac{1}{c^2}\frac{\textbf{A}^{n+1} - \frac{8}{3}\textbf{A}^{n} + \frac{22}{9}\textbf{A}^{n-1} - \frac{8}{9}\textbf{A}^{n-2} + \frac{1}{9}\textbf{A}^{n-3}}{\left((2/3)\Delta t\right)^2} - \Delta \textbf{A} &= \mu_0\textbf{J}^{n+1}, \\
    \frac{1}{c^2}\frac{\phi^{n+1}-\frac{4}{3}\phi^{n} + \frac{1}{3}\phi^{n-1}}{\left(2/3\right)\Delta t} + \nabla\cdot\textbf{A}^{n+1} &= 0, \\
    \frac{\rho^{n+1} - \frac{4}{3}\rho^{n} + \frac{1}{3}\rho^{n-1}}{\left(2/3\right)\Delta t} + \nabla\cdot\textbf{J}^{n+1} &= 0.
\end{align*}
\end{remark}

%
%
\subsection{The Generalized s-stage Diagonal Implicit Runge-Kutta Method}
\label{subsec:DIRKs-formulation}

In this section we will first establish an $s$-stage method for solving the wave equations and continuity equation. We will then prove the properties this method, namely that conservation of charge implies satisfaction of the Lorenz gauge condition.  As with the CDF and BDF methods, we will assume we are given $\mathbf{J}$ during the construction.  In practice, $\mathbf{J}$ will come from a particle advance, and there are several ways to go about constructing $\mathbf{J}$ at stage values.  
\subsubsection{The s-stage DIRK Method}
Consider $\frac{1}{c^2}\frac{\partial^2u}{\partial t}-\Delta u = S$, which we rewrite as a first order system: 
\begin{align}
    \frac{\partial}{\partial t}
    \begin{pmatrix}
        v \\ u
    \end{pmatrix}
    =
    \begin{pmatrix}
        c^2\left(S + \Delta u\right) \\
        v
    \end{pmatrix}.
\end{align}
We apply an $s$-stage Runge-Kutta method, 
%
%
where defining $S^{(i)} \coloneqq S\left(x,t^n + c_ih\right)$, we arrive at
\begin{align}
    k_i^v &= c^2\left(S^{(i)} + \Delta\left(u^n + h\sum_{j=1}^ia_{ij}k_j^u\right)\right), \label{eq:DIRKs-kv-def} \\
    k_i^u &= v^n + h\sum_{j=1}^{i}a_{ij}k_j^v, \label{eq:DIRKs-ku-def} \\
    u^{n+1} &= u^n + h\sum_{i=1}^{s}b_ik_i^u, \\
    v^{n+1} &= v^n + h\sum_{i=1}^{s}b_ik_i^v.
\end{align}
Here the $a_{ij}$, $b_i$, and $c_i$ terms come from whatever Butcher tableau with which our Runge-Kutta method is working.  When solving for $k_i^v$ we use the definition of $k_i^u$ and vice versa to arrive at,
%
%
\begin{align}
    \mathcal{L}_i\left[k_i^u\right] &= v^n + h\sum_{j=1}^{i-1}a_{ij}k_j^v + ha_{ii}\left(c^2\left(S^{(i)} + \Delta\left(u^n + h\sum_{j=1}^{i-1}a_{ij}k_j^u\right)\right)\right), \\
    \mathcal{L}_i\left[k_i^v\right] &= c^2\left(S^{(i)} + \Delta\left(u^n + h\sum_{j=1}^{i-1}a_{ij}k_j^u\right)\right) + ha_{ii}c^2\Delta \left(v^n + h\sum_{j=1}^{i-1}a_{ij}k_j^v\right).
\end{align}
Here we have defined $\mathcal{L}_{i} \coloneqq \left(\mathcal{I} - \frac{1}{\alpha_i^2}\Delta\right)$ and $\alpha_i \coloneqq \frac{1}{ha_{ii}c}$. Note the $k_j$ values are known for $j < i$.


We apply this to our system of wave equations, the Lorenz gauge, and the continuity equation. To ease the comparison, we define $v_\phi \coloneqq \frac{\partial \phi}{\partial t}$ and $v_\mathbf{A} \coloneqq \frac{\partial \mathbf{A}}{\partial t}$. Doing so, we see the update statements for $\phi$ and $\mathbf{A}$ are

\begin{align}
    \phi^{n+1} &= \phi^n + h\sum_{i=1}^{s}b_ik_i^\phi, \label{eq:DIRKs-phi-update} \\
    \mathbf{A}^{n+1} &= \mathbf{A}^n + h\sum_{i=1}^{s}b_ik_i^\mathbf{A}, \label{eq:DIRKs-A-update} \\
    v_\phi^{n+1} &= v_\phi^n + h\sum_{i=1}^{s}b_ik_i^{v_\phi}, \label{eq:DIRKs-v_phi-update} \\
    \mathbf{v}_\mathbf{A}^{n+1} &= \mathbf{v}_\mathbf{A}^n + h\sum_{i=1}^{s}b_ik_i^{\mathbf{v}_\mathbf{A}}, \label{eq:DIRKs-v_A-update}
\end{align}
with the $i=1,\cdots,s$ stage values computed according to
\begin{align}
    k_i^{\phi} &= \mathcal{L}^{-1}_i\left[v_\phi^n + h\sum_{j=1}^{i-1}a_{ij}k_j^{v_\phi} + ha_{ii}c^2\frac{\rho^{(i)}}{\epsilon_0} + ha_{ii}c^2\Delta\left(\phi^n + h\sum_{j=1}^{i-1}a_{ij}k_j^\phi\right)\right], \label{eq:DIRKs-k-phi-update} \\
    k_i^{v_\phi} &= \mathcal{L}^{-1}_i\left[\frac{\rho^{(i)}}{\epsilon_0} + \Delta\left(\phi^n + h\sum_{j=1}^{i-1}a_{ij}k_j^\phi\right) + ha_{ii}c^2\Delta\left(v_\phi^n + h\sum_{j=1}^{i-1}a_{ij}k_j^{v_\phi}\right)\right], \label{eq:DIRKs-k-v_phi-update} \\
    \mathbf{k}_i^{\mathbf{A}} &= \mathcal{L}^{-1}_i\left[\mathbf{v}_\mathbf{A}^n + h\sum_{j=1}^{i-1}a_{ij}\mathbf{k}_j^{\mathbf{v}_\mathbf{A}} + ha_{ii}c^2\mu_0\mathbf{J}^{(i)} + ha_{ii}c^2\Delta\left(\mathbf{A}^n + h\sum_{j=1}^{i-1}a_{ij}\mathbf{k}_j^\mathbf{A}\right)\right], \label{eq:DIRKs-k-A-update} \\
    \mathbf{k}_i^{v_\mathbf{A}} &= \mathcal{L}^{-1}_i\left[\mu_0\mathbf{J}^{(i)} + \Delta\left(\mathbf{A}^n + h\sum_{j=1}^{i-1}a_{ij}\mathbf{k}_j^\mathbf{A}\right) + ha_{ii}c^2\Delta\left(\mathbf{v}_\mathbf{A}^n + h\sum_{j=1}^{i-1}a_{ij}\mathbf{k}_j^{\mathbf{v}_\mathbf{A}}\right)\right]. \label{eq:DIRKs-k-v_A-update}
\end{align}
We do a much more straightforward update statement for the gauge and continuity update equations:
\begin{align}
    \phi^{n+1} &= \phi^{n} - c^2h\sum_{i=1}^{s}b_i\nabla\cdot\mathbf{A}^{(i)}, \label{eq:lorenz-gauge-res-DIRKs} \\
    \rho^{n+1} &= \rho^{n} - h\sum_{i=1}^{s}b_i\nabla\cdot\mathbf{J}^{(i)}. \label{eq:lorenz-continuity-res-DIRKs}
\end{align}

In the above updates we have defined $\left\{\mathbf{A},\mathbf{J},\rho\right\}^{(i)}$ as the $i$-th linear interpolation between time levels $n$ and $n+1$ given by the Runge-Kutta method, i.e.,

\begin{equation}
    X^{(i)} = \left(1-c_i\right)X^{n} + c_iX^{n+1}.
\end{equation}
So it follows
\begin{align}
    \begin{split}
        \sum_{i=1}^{r}b_iX^{(i)} &= \sum_{i=1}^{r}b_i\left(\left(1-c_i\right)X^{n} + c_iX^{n+1}\right) \\
        &\equiv R^{r,n}X^{n} + R^{r,n+1}X^{n+1}.
    \end{split}
\end{align}
\subsubsection{The Properties of the s-stage Method}

We wish to link the satisfaction of the DIRK-$s$ formulation of the Lorenz gauge condition \eqref{eq:lorenz-gauge-res-DIRKs} with the DIRK-$s$ continuity equation \eqref{eq:lorenz-continuity-res-DIRKs}.  To do so, we define the following five residuals:

\begin{align}
    \epsilon_1^{n+1} &\coloneqq -\phi^{n+1} + \phi^{n} - c^2h\left(R^{s,n}\nabla\cdot\mathbf{A}^{n} + R^{s,n+1}\nabla\cdot\mathbf{A}^{n+1}\right) \label{eq:DIRKs-gauge-res} \\
    \epsilon_2^{n+1} &\coloneqq -v_\phi^{n+1} + v_\phi^{n} - c^2h\left(R^{s,n}\nabla\cdot\mathbf{v}_\mathbf{A}^{n} + R^{s,n+1}\nabla\cdot\mathbf{v}_\mathbf{A}^{n+1}\right) \label{eq:DIRKs-ddt-gauge-res} \\
    \epsilon_3^{n+1} &\coloneqq -\rho^{n+1} + \rho^{n} - h\left(R^{s,n}\nabla\cdot\mathbf{J}^{n} + R^{s,n+1}\nabla\cdot\mathbf{J}^{n+1}\right) \label{eq:DIRKs-continuity-res} \\
    \epsilon_4^{n+1} &\coloneqq -k_i^{\phi,n+1} + k_i^{\phi,n} - c^2h\left(R^{s,n}\nabla\cdot\mathbf{k}_i^{\mathbf{A},n} + R^{s,n}\nabla\cdot\mathbf{k}_i^{\mathbf{A},n+1}\right) \label{eq:DIRKs-k-res} \\
    \epsilon_5^{n+1} &\coloneqq -k_i^{v_\phi,n+1} + k_i^{v_\phi,n} - c^2h\left(R^{s,n}\nabla\cdot\mathbf{k}_i^{\mathbf{v}_\mathbf{A},n} + R^{s,n+1}\nabla\cdot\mathbf{k}_i^{\mathbf{v}_\mathbf{A},n+1}\right) \label{eq:DIRKs-ddt-k-res}
\end{align}

We have a host of lemmas to prove that will assist with the main theorem. The first, Lemma \ref{lemma:continuity-res-identity-DIRKs}, is partially related to the residual of the continuity equation \eqref{eq:DIRKs-continuity-res} and will assist in proving Lemmas \ref{lemma:k-linear-combination-DIRKs} and \ref{lemma:ddt-k-linear-combination-DIRKs}, which are concerned with the residuals of the $k$ variables \eqref{eq:DIRKs-k-res} and \eqref{eq:DIRKs-ddt-k-res}. These lemmas assist in proving Lemmas \ref{lemma:gauge-linear-combination-DIRKs} and \ref{lemma:ddt-gauge-linear-combination-DIRKs}, two additional lemmas concerned with the residuals of the gauge condition \eqref{eq:DIRKs-gauge-res} and its time derivative \eqref{eq:DIRKs-ddt-gauge-res}, respectively, and these will finally assist in proving Theorem \ref{thm:DIRKs-continuity-implies-gauge}, that satisfaction of the semi-discrete continuity equation implies satisfaction of the semi-discrete gauge condition.

\begin{lemma} \label{lemma:continuity-res-identity-DIRKs}
    For any substep $i$, the following identity holds:
    \begin{equation}
        -\frac{\rho^{(i),n}}{\epsilon_0} + \frac{\rho^{(i),n-1}}{\epsilon_0} - c^2h\mu_0\left(R^{s,n}\nabla\cdot\mathbf{J}^{(i),n-1} + R^{s,n+1}\nabla\cdot\mathbf{J}^{(i),n}\right) = \mu_0c^2\left(\left(1-c_i\right)\epsilon_3^{n} + c_i\epsilon_3^{n+1}\right).
    \end{equation}
\end{lemma}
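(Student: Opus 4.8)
The plan is to verify this identity by direct substitution of the linear-interpolation definitions followed by a regrouping of terms. The key is to read the double-index notation correctly: $X^{(i),m} = (1-c_i)X^{m} + c_i X^{m+1}$ for any quantity $X$, so that the stage-$i$ interpolant $\rho^{(i),n-1}$ uses the pair $(\rho^{n-1}, \rho^n)$ while $\rho^{(i),n}$ uses $(\rho^n, \rho^{n+1})$, and similarly for $\mathbf{J}$. Throughout, $R^{s,n}$ and $R^{s,n+1}$ are the fixed Butcher-tableau constants $\sum_i b_i(1-c_i)$ and $\sum_i b_i c_i$, which do not depend on the time step.

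First I would expand the density difference on the left-hand side. Substituting the interpolants gives
\begin{equation*}
    -\rho^{(i),n} + \rho^{(i),n-1} = (1-c_i)\left(\rho^{n-1} - \rho^n\right) + c_i\left(\rho^n - \rho^{n+1}\right),
\end{equation*}
so that the difference separates cleanly into a $(1-c_i)$-weighted piece and a $c_i$-weighted piece. Expanding the current term the same way, the combination $R^{s,n}\nabla\cdot\mathbf{J}^{(i),n-1} + R^{s,n+1}\nabla\cdot\mathbf{J}^{(i),n}$ splits, by linearity and the constancy of $R^{s,n}, R^{s,n+1}$, into $(1-c_i)$ times $\left(R^{s,n}\nabla\cdot\mathbf{J}^{n-1} + R^{s,n+1}\nabla\cdot\mathbf{J}^n\right)$ plus $c_i$ times $\left(R^{s,n}\nabla\cdot\mathbf{J}^n + R^{s,n+1}\nabla\cdot\mathbf{J}^{n+1}\right)$.

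Collecting the density and current contributions by their common weights $(1-c_i)$ and $c_i$, and using $c^2 = 1/(\mu_0\epsilon_0)$ to convert $c^2 h \mu_0$ into $h/\epsilon_0$, the $(1-c_i)$-group is exactly $\frac{1}{\epsilon_0}$ times the definition of $\epsilon_3^{n}$ in \eqref{eq:DIRKs-continuity-res}, while the $c_i$-group is $\frac{1}{\epsilon_0}$ times $\epsilon_3^{n+1}$. Rewriting the common factor $\frac{1}{\epsilon_0}$ as $\mu_0 c^2$ then produces the claimed right-hand side $\mu_0 c^2\left((1-c_i)\epsilon_3^n + c_i\epsilon_3^{n+1}\right)$.

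There is no substantive obstacle here; the result is an algebraic identity that follows from the linearity of the interpolation and the relation $c^2 = 1/(\mu_0\epsilon_0)$. The only point requiring care is interpreting the stacked superscripts $(i),n$ and $(i),n-1$ as interpolations over the steps $[t^n,t^{n+1}]$ and $[t^{n-1},t^n]$ respectively; a shifted reading would introduce a spurious $\rho^{n-2}$ term that cannot appear in either $\epsilon_3^n$ or $\epsilon_3^{n+1}$.
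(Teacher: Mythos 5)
Your proposal is correct and follows essentially the same route as the paper's own proof: expand the stage interpolants $\rho^{(i),\cdot}$ and $\mathbf{J}^{(i),\cdot}$ by their linear-interpolation definitions, regroup the terms by the common weights $(1-c_i)$ and $c_i$, and use $c^2 = 1/(\mu_0\epsilon_0)$ to identify the two groups with $\mu_0 c^2\,\epsilon_3^{n}$ and $\mu_0 c^2\,\epsilon_3^{n+1}$. The only cosmetic difference is ordering — the paper factors out $\mu_0 c^2$ before expanding, while you expand first and convert the constants at the end — and your remark about reading the stacked superscripts $(i),n-1$ and $(i),n$ as interpolations over $[t^{n-1},t^{n}]$ and $[t^{n},t^{n+1}]$ matches exactly how the paper uses the notation.
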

\begin{proof}
    We use the identity $c^2 = \frac{1}{\mu_0\epsilon_0}$ and see
    \begin{align}
        \begin{split}
            &-\frac{\rho^{(i),n}}{\epsilon_0} + \frac{\rho^{(i),n-1}}{\epsilon_0} - c^2h\mu_0\left(R^{s,n}\nabla\cdot\mathbf{J}^{(i),n-1} + R^{s,n+1}\nabla\cdot\mathbf{J}^{(i),n}\right) \\
            &= \mu_0\left(-\frac{\rho^{(i),n}}{\mu_0\epsilon_0} + \frac{\rho^{(i),n-1}}{\mu_0\epsilon_0} - c^2h\mu_0\left(R^{s,n}\nabla\cdot\mathbf{J}^{(i),n-1} + R^{s,n+1}\nabla\cdot\mathbf{J}^{(i),n}\right)\right) \\
            &= \mu_0c^2\left(-\rho^{(i),n} + \rho^{(i),n-1} - h\left(R^{s,n}\nabla\cdot\mathbf{J}^{(i),n-1} + R^{s,n+1}\nabla\cdot\mathbf{J}^{(i),n}\right)\right).
        \end{split}
    \end{align}
    Taking the interior portion, we further derive
    \begin{align}
        \begin{split}
            - &\left(\left(1-c_i\right)\rho^{n} + c_i\rho^{n+1}\right) + \left(\left(1-c_i\right)\rho^{n-1} + c_i\rho^{n}\right) - h\left(R^{s,n}\nabla\cdot\left(\left(1-c_i\right)\mathbf{J}^{n-1} + c_i\mathbf{J}^{n}\right) + R^{s,n+1}\nabla\cdot\left(\left(1-c_i\right)\mathbf{J}^{n} + c_i\mathbf{J}^{n+1}\right)\right) \\
            = &\left(1-c_i\right)\left(-\rho^{n} + \rho^{n-1}\right) + c_i\left(-\rho^{n+1} + \rho^{n}\right) - h\left(\left(1-c_i\right)\left(R^{s,n}\nabla\cdot \mathbf{J}^{n-1} + R^{s,n+1}\nabla\cdot \mathbf{J}^{n}\right) + c_i\left(R^{s,n}\nabla\cdot \mathbf{J}^{n} + R^{s,n+1}\nabla\cdot \mathbf{J}^{n+1}\right)\right) \\
            = &\left(1-c_i\right)\left(-\rho^{n} + \rho^{n-1} - h\left(R^{s,n}\nabla\cdot \mathbf{J}^{n-1} + R^{s,n+1}\nabla\cdot \mathbf{J}^{n}\right)\right) + c_i\left(-\rho^{n+1} + \rho^{n} - h\left(R^{s,n}\nabla\cdot \mathbf{J}^{n} + R^{s,n+1}\nabla\cdot \mathbf{J}^{n+1}\right)\right) \\
            = &\left(1-c_i\right)\epsilon_3^{n} + c_i\epsilon_3^{n+1}.
        \end{split}
    \end{align}
    We have thus demonstrated the identity.
\end{proof}

\begin{lemma} \label{lemma:k-linear-combination-DIRKs}
    The $\epsilon_4$ residual \eqref{eq:DIRKs-k-res} at time level $n$ is a linear combination of the residuals of the Lorenz gauge \eqref{eq:DIRKs-gauge-res} at time level $n$, the time derivative of the Lorenz gauge \eqref{eq:DIRKs-ddt-gauge-res} at time level $n$, and the continuity equation at time levels $n$ and $n+1$.
\end{lemma}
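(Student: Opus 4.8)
The plan is to substitute the closed-form stage updates \eqref{eq:DIRKs-k-phi-update} and \eqref{eq:DIRKs-k-A-update} into the definition \eqref{eq:DIRKs-k-res} of $\epsilon_4$ and then exploit the linearity of $\mathcal{L}_i^{-1}$. Because $k_i^{\phi}$ carries a factor $\mathcal{L}_i^{-1}$ and, since $\mathcal{L}_i = \mathcal{I} - \frac{1}{\alpha_i^2}\Delta$ has constant coefficients, $\nabla\cdot$ commutes with $\mathcal{L}_i^{-1}$ so that $\nabla\cdot\mathbf{k}_i^{\mathbf{A}} = \mathcal{L}_i^{-1}[\nabla\cdot(\cdots)]$, the whole residual can be pulled into the form $\epsilon_4 = \mathcal{L}_i^{-1}[\,\cdots\,]$ with a single bracketed argument. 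My first task is to assemble that bracket by differencing the stage expressions at the two time levels appearing in \eqref{eq:DIRKs-k-res}, taking care that the $v_\phi^n$, $\rho^{(i)}$, $\phi^n$ data sitting inside $k_i^{\phi}$ are themselves shifted by one step when the stage is computed at the earlier level.

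Next I would group the bracketed terms by which field produced them and recognize each group as a known residual. The $v_\phi$ pieces paired against the $\nabla\cdot\mathbf{v}_\mathbf{A}$ pieces assemble into a multiple of $\epsilon_2^n$ once one checks that the interpolation weights $R^{s,n}=\sum_i b_i(1-c_i)$ and $R^{s,n+1}=\sum_i b_i c_i$ are attached to the correct endpoints. The Laplacian terms $ha_{ii}c^2\Delta\phi$ against $ha_{ii}c^2\Delta\nabla\cdot\mathbf{A}$ factor as $h a_{ii}c^2\,\Delta\epsilon_1^n$, which is admissible as part of a linear combination since $\Delta$ is linear. The source pieces $ha_{ii}c^2\rho^{(i)}/\epsilon_0$ and $ha_{ii}c^2\mu_0\nabla\cdot\mathbf{J}^{(i)}$ are exactly the quantity controlled by Lemma \ref{lemma:continuity-res-identity-DIRKs}, which rewrites them as $\mu_0c^2\big((1-c_i)\epsilon_3^{n} + c_i\epsilon_3^{n+1}\big)$, supplying the continuity contributions at levels $n$ and $n+1$.

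The leftover pieces are the lower-stage sums: $h\sum_{j<i}a_{ij}k_j^{v_\phi}$ paired with its $\mathbf{v}_\mathbf{A}$-counterpart, and $h^2 a_{ii}c^2\Delta\sum_{j<i}a_{ij}k_j^{\phi}$ paired with its $\mathbf{A}$-counterpart. With the weights $a_{ij}$ matching across the $\phi$ and $\mathbf{A}$ updates, these fuse into $h\sum_{j<i}a_{ij}\epsilon_{5,j}$ and $h^2a_{ii}c^2\Delta\sum_{j<i}a_{ij}\epsilon_{4,j}$, the lower-stage residuals \eqref{eq:DIRKs-ddt-k-res} and \eqref{eq:DIRKs-k-res}. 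This is where the induction enters: I would run the argument as a simultaneous induction on the stage index $i$, jointly with Lemma \ref{lemma:ddt-k-linear-combination-DIRKs} for $\epsilon_5$, whose inductive hypothesis guarantees each $\epsilon_{4,j}$ and $\epsilon_{5,j}$ with $j<i$ is already a linear combination of $\epsilon_1^n$, $\epsilon_2^n$, $\epsilon_3^n$, $\epsilon_3^{n+1}$. The base case $i=1$ has empty lower-stage sums, so only the first three mechanisms survive.

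The main obstacle I expect is the bookkeeping in this last step: confirming that the $\phi$ and $\mathbf{A}$ stage contributions carry identical coefficients $a_{ij}$ so that they genuinely collapse into the residual combinations $\epsilon_4$ and $\epsilon_5$ rather than leaving an uncancelled remainder, and tracking the endpoint weights $R^{s,n}$, $R^{s,n+1}$ together with the interpolation weights $(1-c_i)$, $c_i$ consistently through the one-step shift between the two time levels. Once the coefficient matching is verified and the coupled inductive hypothesis is applied inside $\mathcal{L}_i^{-1}$, collecting terms exhibits $\epsilon_4$ as the asserted linear combination of $\epsilon_1^n$, $\epsilon_2^n$, $\epsilon_3^n$, and $\epsilon_3^{n+1}$, completing the proof.
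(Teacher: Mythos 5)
Your proposal is correct and follows essentially the same route as the paper: substitute the stage updates \eqref{eq:DIRKs-k-phi-update} and \eqref{eq:DIRKs-k-A-update} into \eqref{eq:DIRKs-k-res}, pull everything inside $\mathcal{L}_i^{-1}$ using linearity and the fact that $\nabla\cdot$ commutes with the constant-coefficient operator, regroup so that the $v_\phi$/$\mathbf{v}_\mathbf{A}$ terms give $\epsilon_2$, the $\Delta\phi$/$\Delta\nabla\cdot\mathbf{A}$ terms give $\Delta\epsilon_1$, the source terms give the continuity residuals via Lemma \ref{lemma:continuity-res-identity-DIRKs}, and then induct on the stage index with an empty base case at $i=1$. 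The one substantive difference is your treatment of the lower-stage sums: you correctly observe that the difference of the $h\sum_{j<i}a_{ij}k_j^{v_\phi}$ terms, paired with their $\mathbf{k}_j^{\mathbf{v}_\mathbf{A}}$ counterparts, produces lower-stage $\epsilon_5$-type residuals in addition to the $\epsilon_4$-type residuals coming from the $h^2a_{ii}c^2\Delta\sum_{j<i}a_{ij}k_j^{\phi}$ terms, and you handle this by running the induction simultaneously with Lemma \ref{lemma:ddt-k-linear-combination-DIRKs}. The paper's displayed grouping in the inductive step lists only the lower-stage $k_j^{\phi}$/$\mathbf{k}_j^{\mathbf{A}}$ combination (absorbed into $\alpha_l^{n-1}$ by its own inductive hypothesis) and the lower-stage $k_j^{v_\phi}$ terms, which are visibly present in the expansion two displays earlier, silently drop out of the final grouping; the paper then proves the $\epsilon_5$ lemma separately afterwards, citing this lemma. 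Your joint induction is the clean way to close that loop, since the two families of stage residuals are genuinely coupled through the stage equations; it yields the same conclusion, namely that $\epsilon_4$ at level $n$ is a linear combination of $\epsilon_1^{n}$, $\epsilon_2^{n}$, $\epsilon_3^{n}$, and $\epsilon_3^{n+1}$, and in fact repairs a small bookkeeping gap in the paper's own argument.
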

\begin{proof}
    We prove by induction. For the base case, we see
    \begin{align}
        \begin{split}
            \epsilon_4^n &= - k_1^{\phi,n} + k_1^{\phi,n-1} - hc^2\left(R^{n}\nabla\cdot \mathbf{k}_1^{\mathbf{A},n-1} + R^{n+1}\nabla\cdot \mathbf{k}_1^{\mathbf{A},n}\right) \\
            &= -\mathcal{L}_1^{-1}\left[v_\phi^n + ha_{11}c^2\left(\Delta \phi^n + \frac{\rho^{(1),n}}{\epsilon_0}\right)\right] \\
            &+ \mathcal{L}_1^{-1}\left[v_\phi^{n-1} + ha_{11}c^2\left(\Delta \phi^{n-1} + \frac{\rho^{(1),n-1}}{\epsilon_0}\right)\right] \\
            &- hc^2\left(R^{n}\nabla\cdot\mathcal{L}_1^{-1}\left[\mathbf{v}_\mathbf{A}^{n-1} + ha_{11}c^2\left(\Delta \mathbf{A}^{n-1} + \mu_0 \mathbf{J}^{(1),n-1}\right)\right] + R^{n+1}\nabla\cdot\mathcal{L}_1^{-1}\left[\mathbf{v}_\mathbf{A}^{n} + ha_{11}c^2\left(\Delta \mathbf{A}^n + \mu_0 \mathbf{J}^{(1),n}\right)\right]\right) \\
            &= \mathcal{L}_1^{-1}\left[-v_\phi^{n} + v_\phi^{n-1} - hc^2\left(R^n \nabla\cdot \mathbf{v}_\mathbf{A}^{n-1} + R^{n+1} \nabla\cdot \mathbf{v}_\mathbf{A}^n\right)\right] \\
            &+ ha_{11}c^2\mathcal{L}_1^{-1}\left[-\Delta \phi^n + \Delta \phi^{n-1} - hc^2\left(R^{n}\Delta\left(\nabla\cdot \mathbf{A}^{n-1}\right) + R^{n+1}\Delta\left(\nabla\cdot \mathbf{A}^{n}\right)\right)\right] \\
            &+ ha_{11}c^2\mathcal{L}_1^{-1}\left[-\frac{\rho^{(1),n-1}}{\epsilon_0} + \frac{\rho^{(1),n}}{\epsilon_0} - hc^2\mu_0\left(R^{n}\nabla\cdot \mathbf{J}^{(1),n-1} + R^{n+1}\nabla\cdot \mathbf{J}^{(1),n}\right)\right] \\
            &= \mathcal{L}_1^{-1}\left[\epsilon_2^n + ha_{11}c^2\Delta\epsilon_1^n + ha_{11}c^4\mu_0\left(\left(1-c_1\right)\epsilon_3^{n} + c_1\epsilon_3^{n+1}\right)\right]
        \end{split}
    \end{align}
    The last step is justified definitionally and by Lemma \ref{lemma:continuity-res-identity-DIRKs}. We have shown the base case, now we wish to show the inductive step. Assuming true for $i=l-1$, we now consider $i=l$:
    \begin{align}
        \begin{split}
            \epsilon_4^n &= -k_l^{\phi,n} + k_l^{\phi,n-1} - hc^2\left(R^{s,n}\nabla\cdot\mathbf{k}_l^{\mathbf{A},n-1} + R^{s,n+1}\nabla\cdot\mathbf{k}_l^{\mathbf{A},n}\right) \\
            = &-\mathcal{L}_l^{-1}\left[v_\phi^{n-1} + h\sum_{j=1}^{l-1}a_{lj}k_j^{v_\phi,n-1} + ha_{ll}c^2\frac{\rho^{(l),n-1}}{\epsilon_0} + ha_{ll}c^2\Delta\left(\phi^{n-1} + h\sum_{j=1}^{l-1}a_{lj}k_j^{\phi,n-1}\right)\right] \\
            + &\mathcal{L}_l^{-1}\left[v_\phi^{n-2} + h\sum_{j=1}^{l-1}a_{lj}k_j^{v_\phi,n-2} + ha_{ll}c^2\frac{\rho^{(l),n-2}}{\epsilon_0} + ha_{ll}c^2\Delta\left(\phi^{n-2} + h\sum_{j=1}^{l-1}a_{lj}k_j^{\phi,n-2}\right)\right] \\
            - &hc^2R^{s,n}\nabla\cdot\left(\mathcal{L}_l^{-1}\left[\mathbf{v}_\mathbf{A}^{n-2} + h\sum_{j=1}^{l-1}a_{lj}\mathbf{k}_j^{\mathbf{v}_\mathbf{A},n-2} + ha_{ll}c^2\mu_0\mathbf{J}^{(l),n-2} + ha_{ll}c^2\Delta\left(\mathbf{A}^{n-2} + h\sum_{j=1}^{l-1}a_{lj}\mathbf{k}_j^{\mathbf{A},n-2}\right)\right]\right) \\
            - &hc^2R^{s,n+1}\nabla\cdot\left(\mathcal{L}_l^{-1}\left[\mathbf{v}_\mathbf{A}^{n-1} + h\sum_{j=1}^{l-1}a_{lj}\mathbf{k}_j^{\mathbf{v}_\mathbf{A},n-1} + ha_{ll}c^2\mu_0\mathbf{J}^{(l),n-1} + ha_{ll}c^2\Delta\left(\mathbf{A}^{n-1} + h\sum_{j=1}^{l-1}a_{lj}\mathbf{k}_j^{\mathbf{A},n-1}\right)\right]\right).
        \end{split}
    \end{align}
    Grouping all of these inside $\mathcal{L}^{-1}_l$, we can break all these apart, rearrange, and either using definitions or Lemma \ref{lemma:continuity-res-identity-DIRKs} to see the argument passed to this operator takes the form:
    \begin{align}
        \begin{split}
            &-v_\phi^{n-1} + v_\phi^{n-2} - hc^2\left(hc^2R^{s,n}\mathbf{v}_\mathbf{A}^{n-1} + R^{s,n+1}\mathbf{v}_\mathbf{A}^{n-1}\right) \\
            &+ha_{ll}c^2\left(-\frac{\rho^{(l),n-1}}{\epsilon_0} + \frac{\rho^{(l),n-2}}{\epsilon_0} - hc^2\mu_0\left(R^{s,n}\nabla\cdot\mathbf{J}^{(l),n-2} + R^{s,n+1}\nabla\cdot\mathbf{J}^{(l),n-1}\right)\right) \\
            &+h^2a_{ll}c^2\Delta\left(-\phi^{n-1} + \phi^{n-2} -hc^2\left(R^{s,n}\nabla\cdot\mathbf{A}^{n-2} + R^{s,n+1}\nabla\cdot\mathbf{A}^{n-1}\right)\right) \\
            &+h^2a_{ll}c^2\Delta\sum_{j=1}^{l-1}{a_{lj}\left(-k_j^{\phi,n-1} + k_j^{\phi,n-2} - hc^2\left(R^{s,n}\nabla\cdot\mathbf{k}^{\mathbf{A},n-2} + R^{s,n+1}\nabla\cdot\mathbf{k}_j^{\mathbf{A},n-1}\right)\right)} \\
            &=\epsilon_2^{n-1} + ha_{ll}c^2\mu_0c^2\left(\left(1-c_l\right)\epsilon_3^{n-1} + c_l\epsilon_3^{n}\right) + ha_{ll}c^2\Delta\epsilon_1^{n-1} + h^2a_{11}c^2\Delta\sum_{j=1}^{l-1}a_{lj}\alpha_l^{n-1}.
        \end{split}
    \end{align}
    We define $\alpha_l^{n-1}$ to be some residual that is a linear combination of $\epsilon_1^{n-1}$, $\epsilon_2^{n-1}$, and $\epsilon_3^{n-1}$. This is justified by our inductive hypothesis. We have demonstrated $\epsilon_4^n$ is a linear combination of residuals of previous timesteps for any number of substeps.
\end{proof}

\begin{lemma} \label{lemma:ddt-k-linear-combination-DIRKs}
    The $\epsilon_5$ residual \eqref{eq:DIRKs-ddt-k-res} at time level $n$ is a linear combination of the residuals of the Lorenz gauge \eqref{eq:DIRKs-gauge-res} at time level $n$, the time derivative of the Lorenz gauge \eqref{eq:DIRKs-ddt-gauge-res} at time level $n$, the continuity equation at time levels $n$ and $n+1$, and the $\epsilon_4$ residual (itself a linear combination of the other three residuals) at time level $n$.
\end{lemma}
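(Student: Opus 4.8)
The plan is to run an induction on the stage index $i$ that is structurally identical to the proof of Lemma \ref{lemma:k-linear-combination-DIRKs}, the only difference being that $\epsilon_5$ is assembled from the time-derivative stages $k_i^{v_\phi}$ and $\mathbf{k}_i^{\mathbf{v}_\mathbf{A}}$ in \eqref{eq:DIRKs-k-v_phi-update}--\eqref{eq:DIRKs-k-v_A-update} rather than from $k_i^{\phi}$ and $\mathbf{k}_i^{\mathbf{A}}$. The mechanism is the same throughout: substitute the stage definitions into \eqref{eq:DIRKs-ddt-k-res}, pull everything under the single operator $\mathcal{L}_i^{-1}$ using its linearity and the fact that it commutes with $\nabla\cdot$ and $\Delta$, and then match the resulting blocks against the residual definitions \eqref{eq:DIRKs-gauge-res}--\eqref{eq:DIRKs-continuity-res}.

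For the base case $i=1$, I would insert the sum-free forms of $k_1^{v_\phi}$ and $\mathbf{k}_1^{\mathbf{v}_\mathbf{A}}$ into the definition of $\epsilon_5$. Three pairings then appear inside $\mathcal{L}_1^{-1}$. The source pair $\rho^{(1)}/\epsilon_0$ against $\mu_0\mathbf{J}^{(1)}$ is exactly the left-hand side of Lemma \ref{lemma:continuity-res-identity-DIRKs} and collapses to $\mu_0 c^2\big((1-c_1)\epsilon_3^{n} + c_1\epsilon_3^{n+1}\big)$; the pair $\Delta\phi^n$ against $\Delta\nabla\cdot\mathbf{A}^n$ assembles into $\Delta\epsilon_1^{n}$; and the pair $h a_{11} c^2 \Delta v_\phi^n$ against $h a_{11} c^2 \Delta\nabla\cdot\mathbf{v}_\mathbf{A}^n$ assembles into $h a_{11} c^2 \Delta\epsilon_2^{n}$. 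This yields $\epsilon_5^{n} = \mathcal{L}_1^{-1}\big[\mu_0 c^2((1-c_1)\epsilon_3^{n} + c_1\epsilon_3^{n+1}) + \Delta\epsilon_1^{n} + h a_{11} c^2 \Delta\epsilon_2^{n}\big]$, which carries no $\epsilon_4$ contribution at the first stage.

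For the inductive step I would assume the claim for every stage $j < l$ and substitute the full expressions \eqref{eq:DIRKs-k-v_phi-update}--\eqref{eq:DIRKs-k-v_A-update} into $\epsilon_5$ at stage $l$. The three non-sum pairings reproduce the base-case terms (now with $a_{ll}$ and $c_l$), while the two nested sums split cleanly according to their potential type. The block $h\Delta\sum_{j<l} a_{lj} k_j^{\phi}$, paired against $h\Delta\sum_{j<l} a_{lj}\nabla\cdot\mathbf{k}_j^{\mathbf{A}}$, reconstitutes $h\Delta\sum_{j<l} a_{lj}\,\epsilon_4^{(j)}$, where $\epsilon_4^{(j)}$ denotes the residual \eqref{eq:DIRKs-k-res} formed from the stage-$j$ values; the block $h^2 a_{ll} c^2 \Delta\sum_{j<l} a_{lj} k_j^{v_\phi}$, paired against its $\mathbf{k}_j^{\mathbf{v}_\mathbf{A}}$ counterpart, reconstitutes $h^2 a_{ll} c^2 \Delta\sum_{j<l} a_{lj}\,\epsilon_5^{(j)}$. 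By Lemma \ref{lemma:k-linear-combination-DIRKs} each $\epsilon_4^{(j)}$ is already a linear combination of $\epsilon_1^{n}$, $\epsilon_2^{n}$, and $\epsilon_3^{\{n,n+1\}}$, and by the inductive hypothesis each lower-stage $\epsilon_5^{(j)}$ is a linear combination of those three together with $\epsilon_4$. Collecting everything under $\mathcal{L}_l^{-1}$ then exhibits $\epsilon_5^{n}$ as a linear combination of $\epsilon_1^{n}$, $\epsilon_2^{n}$, $\epsilon_3^{\{n,n+1\}}$, and $\epsilon_4$, which closes the induction.

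The main obstacle is organizational rather than conceptual: one must keep straight which of the four blocks in $k_i^{v_\phi}$ and $\mathbf{k}_i^{\mathbf{v}_\mathbf{A}}$ pairs with which block of the vector stage, and in particular confirm that the $k_j^{\phi}$-sum lands on the $\epsilon_4$ residual (so that Lemma \ref{lemma:k-linear-combination-DIRKs} applies) while the $k_j^{v_\phi}$-sum lands on $\epsilon_5$ (so that the inductive hypothesis of the present lemma applies). The only genuine identity required beyond linearity and the commutation of $\mathcal{L}_l^{-1}$ with the spatial operators is the source collapse, which is supplied verbatim by Lemma \ref{lemma:continuity-res-identity-DIRKs}; every remaining manipulation is term rearrangement.
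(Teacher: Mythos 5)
Your proposal is correct and follows essentially the same route as the paper's proof: an induction on the stage index in which the stage definitions are substituted into $\epsilon_5$, everything is grouped under $\mathcal{L}_i^{-1}$, the source pair collapses via Lemma \ref{lemma:continuity-res-identity-DIRKs}, the $\left(\phi,\mathbf{A}\right)$ and $\left(v_\phi,\mathbf{v}_\mathbf{A}\right)$ pairs reassemble into $\Delta\epsilon_1$ and $\Delta\epsilon_2$, and the nested $k_j^{\phi}$ and $k_j^{v_\phi}$ sums are identified with lower-stage $\epsilon_4$ and $\epsilon_5$ residuals handled by Lemma \ref{lemma:k-linear-combination-DIRKs} and the inductive hypothesis, respectively (these are the paper's $\alpha_l$ and $\beta_l$ terms). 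The only discrepancies are immaterial constant factors of $c^2$, which stem from a notational inconsistency in the paper's own stage equations and do not affect the linear-combination claim.
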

\begin{proof}
    We prove by induction. For the base case, we see
    \begin{align}
        \begin{split}
            \epsilon_5^n &= - k_1^{v_\phi,n} + k_1^{v_\phi,n-1} - hc^2\left(R^{n}\nabla\cdot \mathbf{k}_1^{\mathbf{v}_\mathbf{A},n-1} + R^{n+1}\nabla\cdot \mathbf{k}_1^{\mathbf{v}_\mathbf{A},n}\right) \\
            &= -\mathcal{L}_1^{-1}\left[c^2\left(\Delta\phi^n + \frac{\rho^{(1),n}}{\epsilon_0} + ha_{11}\Delta v_\phi^{n}\right)\right] + \mathcal{L}_1^{-1}\left[c^2\left(\Delta\phi^{n-1} + \frac{\rho^{(1),n-1}}{\epsilon_0} + ha_{11}\Delta v_\phi^{n-1}\right)\right] \\
            &- hc^2\left(R^{n}\nabla\cdot\mathcal{L}_1^{-1}\left[c^2\left(\Delta\mathbf{A}^{n-1} + \mu_0\mathbf{J}^{(1),n-1} + ha_{11}\Delta\mathbf{v}_\mathbf{A}^{n-1}\right)\right] + R^{n+1}\nabla\cdot\mathcal{L}_1^{-1}\left[c^2\left(\Delta\mathbf{A}^{n} + \mu_0\mathbf{J}^{(1),n} + ha_{11}\Delta\mathbf{v}_\mathbf{A}^{n}\right)\right]\right) \\
            &= c^2\mathcal{L}_1^{-1}\left[\Delta\left(-\phi^{n} + \phi^{n-1} -hc^2\left(R^{n}\nabla\cdot\mathbf{A}^{n-1} + R^{n+1}\nabla\cdot\mathbf{A}^{n}\right)\right)\right] \\
            &+ c^2\mathcal{L}_1^{-1}\left[-\frac{\rho^{(1),n}}{\epsilon_0} + \frac{\rho^{(1),n-1}}{\epsilon_0} - hc^2\mu_0\left(R^{n}\nabla\cdot\mathbf{J}^{(1),n-1} + R^{n+1}\nabla\cdot\mathbf{J}^{(1),n}\right)\right] \\
            &+ ha_{11}c^2\mathcal{L}_1^{-1}\left[\Delta\left(-v_\phi^{n} + v_\phi^{n-1} - hc^2\left(R^n\nabla\cdot\mathbf{A}_\mathbf{A}^{n} + R^{n+1}\nabla\cdot\mathbf{v}_\mathbf{A}^{n-1}\right)\right)\right] \\
            &= c^2\mathcal{L}_1^{-1}\left[\Delta \epsilon_1^{n}\right] + c^4\mu_0\mathcal{L}_1^{-1}\left[\left(1-c_1\right)\epsilon_3^{n} + c_1\epsilon_3^{n+1}\right] + ha_{11}c^2\mathcal{L}_1^{-1}\left[\epsilon_2^{n}\right]
        \end{split}
    \end{align}
    The last step is justified definitionally and by Lemma \ref{lemma:continuity-res-identity-DIRKs}. We have proven the base case, now we assume true for $i = l-1$ substeps. We consider $i=l$ and see
    \begin{align}
        \begin{split}
            &-k_l^{v_\phi,n} + k_l^{v_\phi,n-1} - hc^2\left(R^{s,n}\nabla\cdot\mathbf{k}_l^{\mathbf{v}_\mathbf{A},n-1} + R^{s,n+1}\nabla\cdot\mathbf{k}_l^{\mathbf{v}_\mathbf{A},n}\right) \\
            &= -\mathcal{L}_l^{-1}\left[\frac{\rho^{(l),n-1}}{\epsilon_0} + \Delta\left(\phi^{n-1} + h\sum_{j=1}^{l-1}a_{lj}k_j^{\phi,n-1}\right) + ha_{ll}c^2\Delta\left(v_\phi^{n-1} + h\sum_{j=1}^{l-1}a_{lj}k_j^{v_\phi,n-1}\right)\right] \\
            &+ \mathcal{L}_l^{-1}\left[\frac{\rho^{(l),n-2}}{\epsilon_0} + \Delta\left(\phi^{n-2} + h\sum_{j=1}^{l-1}a_{lj}k_j^{\phi,n-2}\right) + ha_{ll}c^2\Delta\left(v_\phi^{n-2} + h\sum_{j=1}^{l-1}a_{lj}k_j^{v_\phi,n-2}\right)\right] \\
            &- hc^2R^{s,n}\nabla\cdot\mathcal{L}_l^{-1}\left[\mu_0\mathbf{J}^{(l),n-2} + \Delta\left(\mathbf{A}^{n-2} + h\sum_{j=1}^{l-1}a_{lj}\mathbf{k}_j^{\mathbf{A},n-2}\right) + ha_{ll}c^2\Delta\left(\mathbf{v}_\mathbf{A}^{n-2} + h\sum_{j=1}^{l-1}a_{lj}\mathbf{k}_j^{\mathbf{v}_\mathbf{A},n-1}\right)\right] \\
            &- hc^2R^{s,n+1}\nabla\cdot\mathcal{L}_l^{-1}\left[\mu_0\mathbf{J}^{(l),n-1} + \Delta\left(\mathbf{A}^{n-1} + h\sum_{j=1}^{l-1}a_{lj}\mathbf{k}_j^{\mathbf{A},n-1}\right) + ha_{ll}c^2\Delta\left(\mathbf{v}_\mathbf{A}^{n-1} + h\sum_{j=1}^{l-1}a_{lj}\mathbf{k}_j^{\mathbf{v}_\mathbf{A},n-1}\right)\right]
        \end{split}
    \end{align}
    We can group all of this within the $\mathcal{L}_{l}^{-1}$ operator and, taking the entire argument, we rearrange and see
    \begin{align}
        \begin{split}
            &-\frac{\rho^{(l),n-1}}{\epsilon_0} + \frac{\rho^{(l)},n-1}{\epsilon_0} - hc^2\mu_0\left(R^{s,n}\nabla\cdot\mathbf{J}^{(l),n-2} + R^{s,n+1}\nabla\cdot\mathbf{J}^{n-1}\right) \\
            &+\Delta\left(-\phi^{n-1} + \phi^{n-2} - hc^2\left(R^{s,n}\nabla\cdot\mathbf{A}^{n-2} + R^{s,n+1}\nabla\cdot\mathbf{A}^{n-1}\right)\right) \\
            &+h\Delta\sum_{j=1}^{l-1}{a_{lj}\left(-k_j^{\phi,n-1} + k_j^{\phi,n-2} - hc^2\left(R^{s,n}\nabla\cdot\mathbf{k}_j^{\mathbf{A},n-2} + R^{s,n+1}\nabla\cdot\mathbf{k}_j^{\mathbf{A},n-1}\right)\right)} \\
            &+ha_{ll}c^2\Delta\left(-v_\phi^{n-1} + v_\phi^{n-2} - hc^2\left(R^{s,n}\nabla\cdot\mathbf{v}_\mathbf{A}^{n-1} + R^{s,n+1}\nabla\cdot\mathbf{v}_\mathbf{A}^{n-1}\right)\right) \\
            &+h^2a_{ll}c^2\Delta\sum_{j=1}^{l-1}{a_{lj}\left(-k_j^{v_\phi,n-1} + k_j^{v_\phi,n-2} - hc^2\left(R^{s,n}\nabla\cdot\mathbf{k}_j^{\mathbf{v}_\mathbf{A},n-2} + R^{s,n+1}\nabla\cdot\mathbf{k}_j^{\mathbf{v}_\mathbf{A},n-1}\right)\right)} \\
            &=\mu_0c^2\left(\left(1-c_l\right)\epsilon_3^{n-1} + c_l\epsilon_3^n\right) + \Delta\epsilon_1 + h\Delta\sum_{j=1}^{l-1}\alpha_l^{n-1} + ha_{ll}c^2\Delta\epsilon_2^{n-1} + h^2a_{ll}c^2\Delta\sum_{j=1}^{l-1}\beta_l^{n-1}.
        \end{split}
    \end{align}
    
    Here we have set $\beta_l$ as some linear combination of $\epsilon_1^{n-1}$, $\epsilon_2^{n-1}$, and $\epsilon_3^{n-1}$, justified by our inductive hypothesis. $\alpha_l^{n-1}$ is likewise a linear combination of these residuals and is justified by Lemma \ref{lemma:k-linear-combination-DIRKs}. We have demonstrated $\epsilon_5^n$ is a linear combination of residuals of previous timesteps for any amount of substeps.
\end{proof}

With these two lemmas in hand, we will prove the following lemmas further relating the residuals.

\begin{lemma} \label{lemma:gauge-linear-combination-DIRKs}
    The residual of the Lorenz gauge condition \eqref{eq:DIRKs-gauge-res} at time level $n+1$ is a linear combination of the residuals of the Lorenz gauge \eqref{eq:DIRKs-gauge-res}, the time derivative of the Lorenz gauge \eqref{eq:DIRKs-ddt-gauge-res}, and the continuity equation \eqref{eq:DIRKs-continuity-res}.
\end{lemma}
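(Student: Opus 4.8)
The plan is to establish a telescoping identity relating the gauge residual at the new time level to its previous value plus a $b_i$-weighted sum of the stage residuals $\epsilon_4$, and then to close the argument with Lemma \ref{lemma:k-linear-combination-DIRKs}.

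First I would form the difference $\epsilon_1^{n+1} - \epsilon_1^{n}$ directly from the definition \eqref{eq:DIRKs-gauge-res}. The scalar-potential contribution is the second difference $-\phi^{n+1} + 2\phi^{n} - \phi^{n-1}$, while the vector-potential contribution is $-c^2 h$ times the interpolated divergence differences weighted by $R^{s,n}$ and $R^{s,n+1}$. The essential substitution is the Runge--Kutta value-updates \eqref{eq:DIRKs-phi-update} and \eqref{eq:DIRKs-A-update}, which give $h\sum_{i=1}^{s} b_i k_i^{\phi,m} = \phi^{m+1}-\phi^{m}$ and, after taking the divergence, $h\sum_{i=1}^{s} b_i \nabla\cdot\mathbf{k}_i^{\mathbf{A},m} = \nabla\cdot\mathbf{A}^{m+1}-\nabla\cdot\mathbf{A}^{m}$ for $m\in\{n-1,n\}$.

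Substituting these at both levels, I expect the difference to collapse exactly into $\epsilon_1^{n+1} - \epsilon_1^{n} = h\sum_{i=1}^{s} b_i\,\epsilon_4^{n}$, where $\epsilon_4^{n}$ is the stage residual \eqref{eq:DIRKs-k-res} carrying the free stage index $i$, in the shifted form used in Lemma \ref{lemma:k-linear-combination-DIRKs}. Invoking that lemma, each $\epsilon_4^{n}$ is a linear combination of $\epsilon_1^{n}$, $\epsilon_2^{n}$, $\epsilon_3^{n}$, and $\epsilon_3^{n+1}$; since $\epsilon_1^{n}$ is itself a gauge residual, $\epsilon_1^{n+1}$ is therefore a linear combination of the gauge residual \eqref{eq:DIRKs-gauge-res}, the time-derivative gauge residual \eqref{eq:DIRKs-ddt-gauge-res}, and the continuity residual \eqref{eq:DIRKs-continuity-res}, which is the claim.

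The main obstacle is the bookkeeping in the telescoping step: one must verify that the stage sums generated by the value-updates recombine into precisely the form of $\epsilon_4^{n}$, with no mismatched time indices or leftover $\nabla\cdot\mathbf{A}$ terms. This hinges on matching the same constants $R^{s,n}$ and $R^{s,n+1}$ in both $\epsilon_1$ and $\epsilon_4$, pairing $\nabla\cdot\mathbf{A}^{n}-\nabla\cdot\mathbf{A}^{n-1}$ with $\sum_i b_i\nabla\cdot\mathbf{k}_i^{\mathbf{A},n-1}$ and $\nabla\cdot\mathbf{A}^{n+1}-\nabla\cdot\mathbf{A}^{n}$ with $\sum_i b_i\nabla\cdot\mathbf{k}_i^{\mathbf{A},n}$, and using the consistency relation $R^{s,n}+R^{s,n+1}=\sum_i b_i = 1$ so that the zeroth-order divergence terms cancel.
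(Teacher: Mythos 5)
Your proposal is correct and follows essentially the same route as the paper: substitute the Runge--Kutta value updates \eqref{eq:DIRKs-phi-update} and \eqref{eq:DIRKs-A-update} (at levels $n$ and $n-1$) into the definition \eqref{eq:DIRKs-gauge-res}, regroup to obtain $\epsilon_1^{n+1} = \epsilon_1^{n} + h\sum_{i=1}^{s} b_i \epsilon_4^{n}$, and invoke Lemma \ref{lemma:k-linear-combination-DIRKs}. One minor note: the consistency relation $R^{s,n}+R^{s,n+1}=\sum_i b_i = 1$ you invoke at the end is not actually needed, since replacing each difference $\nabla\cdot\mathbf{A}^{m+1}-\nabla\cdot\mathbf{A}^{m}$ by its stage sum already cancels all leftover divergence terms identically.
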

\begin{proof}
    Consider the Lorenz gauge at time $n+1$:
    
    \begin{equation}
        \epsilon_1^{n+1} = -\phi^{n+1} + \phi^{n} - c^2h\left(R^{s,n}\nabla\cdot\mathbf{A}^{n} + R^{s,n+1}\nabla\cdot\mathbf{A}^{n+1}\right)
    \end{equation}
    We plug in our update equations \eqref{eq:DIRKs-phi-update} and \eqref{eq:DIRKs-A-update}:
    \begin{align}
        \begin{split}
            \epsilon_1^{n+1} &= -\left(\phi^{n} + h\sum_{i=1}^{s}b_ik_i^{\phi,n}\right) + \left(\phi^{n-1} + h\sum_{i=1}^{s}b_ik_i^{\phi,n-1}\right) \\
            &- c^2h\left(R^{s,n}\nabla\cdot\left(\mathbf{A}^{n-1} + h\sum_{i=1}^{s}b_i\mathbf{k}_i^{\mathbf{A},n-1}\right) + R^{s,n+1}\nabla\cdot\left(\mathbf{A}^n + h\sum_{i=1}^{s}b_i\mathbf{k}_i^{\mathbf{A},n}\right)\right) \\
            &= -\phi^n + \phi^{n-1} - c^2h\left(R^{s,n}\nabla\cdot\mathbf{A}^{n-1} + R^{s,n+1}\nabla\cdot\mathbf{A}^{n}\right) \\
            &+h\sum_{i=1}^{s}{b_i\left(-k_i^{\phi,n} + k_i^{\phi,n-1} - c^2h\left(R^{s,n}\mathbf{k}_i^{\mathbf{A},n-1} + R^{s,n+1}\nabla\cdot\mathbf{k}_i^{\mathbf{A},n}\right)\right)}
        \end{split}
    \end{align}

    The first term is by definition $\epsilon_1^n$, the second term is a linear combination of $\epsilon_1^n$, $\epsilon_2^n$, $\epsilon_3^n$, and $\epsilon_3^{n+1}$ by Lemma \ref{lemma:k-linear-combination-DIRKs}.
\end{proof}

\begin{lemma} \label{lemma:ddt-gauge-linear-combination-DIRKs}
    The residual of the time derivative of the Lorenz gauge condition \eqref{eq:DIRKs-ddt-gauge-res} at time level $n+1$ is a linear combination of the residuals of the Lorenz gauge \eqref{eq:DIRKs-gauge-res} at time level $n$, the time derivative of the Lorenz gauge \eqref{eq:DIRKs-ddt-gauge-res} at time level $n$, the continuity equation \eqref{eq:DIRKs-continuity-res} at time levels $n$ and $n+1$.
\end{lemma}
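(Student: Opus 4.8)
The plan is to mirror, almost verbatim, the construction used in the proof of Lemma~\ref{lemma:gauge-linear-combination-DIRKs}, replacing the potentials $\phi,\mathbf{A}$ by their time derivatives $v_\phi,\mathbf{v}_\mathbf{A}$ and invoking Lemma~\ref{lemma:ddt-k-linear-combination-DIRKs} in place of Lemma~\ref{lemma:k-linear-combination-DIRKs}. I would begin from the definition of the residual $\epsilon_2^{n+1}$ in \eqref{eq:DIRKs-ddt-gauge-res} and substitute the DIRK update formulas \eqref{eq:DIRKs-v_phi-update} and \eqref{eq:DIRKs-v_A-update} into every term carrying a superscript $n+1$ or $n$. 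Concretely, I would expand $v_\phi^{n+1}$ in terms of $v_\phi^{n}$ and the stage values $k_i^{v_\phi,n}$, expand $v_\phi^{n}$ in terms of $v_\phi^{n-1}$ and $k_i^{v_\phi,n-1}$, and likewise expand $\mathbf{v}_\mathbf{A}^{n+1}$ and $\mathbf{v}_\mathbf{A}^{n}$ one level down, using the linearity of $\nabla\cdot$ throughout.

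The second step is to regroup the resulting expression into a base piece and a stage piece. The terms not carrying a stage index collapse to
$$-v_\phi^{n} + v_\phi^{n-1} - c^2 h\left(R^{s,n}\nabla\cdot\mathbf{v}_\mathbf{A}^{n-1} + R^{s,n+1}\nabla\cdot\mathbf{v}_\mathbf{A}^{n}\right),$$
which is exactly $\epsilon_2^{n}$ by \eqref{eq:DIRKs-ddt-gauge-res}. The terms sharing the common factor $h b_i$ reorganize into
$$h\sum_{i=1}^{s} b_i\left(-k_i^{v_\phi,n} + k_i^{v_\phi,n-1} - c^2 h\left(R^{s,n}\nabla\cdot\mathbf{k}_i^{\mathbf{v}_\mathbf{A},n-1} + R^{s,n+1}\nabla\cdot\mathbf{k}_i^{\mathbf{v}_\mathbf{A},n}\right)\right),$$
whose summand is precisely the residual $\epsilon_5^{n}$ of \eqref{eq:DIRKs-ddt-k-res} for substep $i$. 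Hence I expect to arrive at the clean identity $\epsilon_2^{n+1} = \epsilon_2^{n} + h\sum_{i=1}^{s} b_i\,\epsilon_5^{n}$.

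I would then close the argument by appealing to Lemma~\ref{lemma:ddt-k-linear-combination-DIRKs}, which expresses each $\epsilon_5^{n}$ as a linear combination of $\epsilon_1^{n}$, $\epsilon_2^{n}$, $\epsilon_3^{n}$, $\epsilon_3^{n+1}$, and $\epsilon_4^{n}$. Since Lemma~\ref{lemma:k-linear-combination-DIRKs} already reduces $\epsilon_4^{n}$ to a linear combination of $\epsilon_1^{n}$, $\epsilon_2^{n}$, $\epsilon_3^{n}$, $\epsilon_3^{n+1}$, substituting this back shows that $\epsilon_2^{n+1}$ is a linear combination of those four residuals alone, as claimed.

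There is no genuine analytic obstacle here: the content is the same bookkeeping that drove the previous lemma, and the operators $\nabla\cdot$, $\Delta$, and the scalars $R^{s,n}$ are all linear. The one point requiring care is index alignment, since both the $n+1\!\to\!n$ and the $n\!\to\!n-1$ updates must be applied so that the non-stage remainder lands exactly on $\epsilon_2^{n}$ rather than on a mismatched combination of levels. The only conceptual subtlety, and the hard part to state cleanly, is the closure step: $\epsilon_5^{n}$ ostensibly introduces a dependence on $\epsilon_4^{n}$, but because $\epsilon_4^{n}$ itself closes into the same four-residual set, the final list is not enlarged and the stated conclusion holds.
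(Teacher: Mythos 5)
Your proposal is correct and follows essentially the same route as the paper's own proof: substitute the update equations \eqref{eq:DIRKs-v_phi-update} and \eqref{eq:DIRKs-v_A-update} into $\epsilon_2^{n+1}$, regroup so that the non-stage terms reproduce $\epsilon_2^{n}$ and the stage terms assemble into $h\sum_i b_i$ times the $\epsilon_5$ residual, then close via Lemma \ref{lemma:ddt-k-linear-combination-DIRKs} together with the reduction of $\epsilon_4^{n}$ to the four-residual set from Lemma \ref{lemma:k-linear-combination-DIRKs}. Your closure remark about $\epsilon_4^{n}$ not enlarging the final list is exactly the point the paper makes in its final sentence.
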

\begin{proof}
    Consider the residual of the time derivative of the Lorenz gauge at time $n+1$:
    \begin{equation}
        \epsilon_2^{n+1} = -v_\phi^{n+1} + v_\phi^{n} - c^2h\left(R^{s,n}\nabla\cdot\mathbf{v}_\mathbf{A}^{n} + R^{s,n+1}\nabla\cdot\mathbf{v}_\mathbf{A}^{n+1}\right) 
    \end{equation}
    We plug in our update equations \eqref{eq:DIRKs-v_phi-update} and \eqref{eq:DIRKs-v_A-update}:
    \begin{align}
        \begin{split}
            \epsilon_2^{n+1} &= -\left(v_\phi^{n} + h\sum_{i=1}^sb_ik_i^{v_\phi,n}\right) + \left(v_\phi^{n-1} + h\sum_{i=1}^sb_ik_i^{v_\phi,n-1}\right) \\
            &- c^2h\left(R^{s,n}\nabla\cdot\left(\mathbf{v}_\mathbf{A}^{n-1} + h\sum_{i=1}^sb_i\mathbf{k}_i^{\mathbf{v}_\mathbf{A},n-1}\right) + R^{s,n+1}\nabla\cdot\left(\mathbf{v}_\mathbf{A}^{n} + h\sum_{i=1}^sb_i\mathbf{k}_i^{\mathbf{v}_\mathbf{A},n}\right)\right) \\
            &= -v_\phi^n + v_\phi^{n-1} - c^2h\left(R^{s,n}\nabla\cdot\mathbf{v}_\mathbf{A}^{n-1} + R^{s,n+1}\nabla\cdot\mathbf{v}_\mathbf{A}^{n}\right) \\
            &+h\sum_{i=1}^{s}b_i\left(-k_i^{v_\phi,n} + k_i^{v_\phi,n-1} - c^2h\left(R^{s,n}\mathbf{k}_i^{\mathbf{v}_\mathbf{A},n-1} + R^{s,n+1}\nabla\cdot\mathbf{k}_i^{\mathbf{v}_\mathbf{A},n}\right)\right).
        \end{split}
    \end{align}
    The first term is by definition $\epsilon_2^{n}$, the second term is a linear combination of $\epsilon_1^n$, $\epsilon_2^n$, $\epsilon_3^{n}$, $\epsilon_3^{n+1}$, and $\epsilon_4^{n}$ by Lemma \ref{lemma:ddt-k-linear-combination-DIRKs}. The same lemma shows $\epsilon_4^n$ is a linear combination of the other four residuals.
\end{proof}

With Lemmas \ref{lemma:gauge-linear-combination-DIRKs} and \ref{lemma:ddt-gauge-linear-combination-DIRKs} at hand, we are now ready to prove the following theorem.

\begin{theorem} \label{thm:DIRKs-continuity-implies-gauge}
    Assuming the residuals of the continuity and gauge condition \eqref{eq:DIRKs-gauge-res}-\eqref{eq:DIRKs-continuity-res} are zero for timestep $n=0$, the residuals of the Lorenz gauge condition \eqref{eq:DIRKs-gauge-res}-\eqref{eq:DIRKs-ddt-gauge-res} are zero for all time steps $n>0$ if the residual of the continuity equation \eqref{eq:DIRKs-continuity-res} is zero for all time $n > 0$.
\end{theorem}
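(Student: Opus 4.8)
The plan is to run a single induction on the time index $n$ that simultaneously propagates the two gauge-type residuals $\epsilon_1^n$ and $\epsilon_2^n$ from \eqref{eq:DIRKs-gauge-res} and \eqref{eq:DIRKs-ddt-gauge-res}. All of the genuine work has already been packaged into Lemmas \ref{lemma:gauge-linear-combination-DIRKs} and \ref{lemma:ddt-gauge-linear-combination-DIRKs}, which express $\epsilon_1^{n+1}$ and $\epsilon_2^{n+1}$ as linear combinations of $\epsilon_1^n$, $\epsilon_2^n$, $\epsilon_3^n$, $\epsilon_3^{n+1}$ (and, for $\epsilon_2^{n+1}$, also the stage residual $\epsilon_4^n$). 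Since the standing hypothesis is that the continuity residual $\epsilon_3^n$ from \eqref{eq:DIRKs-continuity-res} vanishes for every $n \geq 0$, both $\epsilon_3^n$ and $\epsilon_3^{n+1}$ drop out of those combinations automatically, so the whole argument collapses to tracking $\epsilon_1$ and $\epsilon_2$.

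For the base case I would simply invoke the hypothesis that the residuals \eqref{eq:DIRKs-gauge-res}--\eqref{eq:DIRKs-continuity-res} vanish at $n=0$, giving $\epsilon_1^0 = \epsilon_2^0 = \epsilon_3^0 = 0$. For the inductive step I would take $\epsilon_1^n = \epsilon_2^n = 0$ as the induction hypothesis, together with $\epsilon_3^n = \epsilon_3^{n+1} = 0$ from the continuity assumption. The first thing to record is that $\epsilon_4^n = 0$: by Lemma \ref{lemma:k-linear-combination-DIRKs} the stage residual $\epsilon_4^n$ is a linear combination of precisely $\epsilon_1^n$, $\epsilon_2^n$, $\epsilon_3^n$, and $\epsilon_3^{n+1}$, each of which now vanishes. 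With $\epsilon_4^n = 0$ in hand, Lemma \ref{lemma:gauge-linear-combination-DIRKs} forces $\epsilon_1^{n+1} = 0$ and Lemma \ref{lemma:ddt-gauge-linear-combination-DIRKs} forces $\epsilon_2^{n+1} = 0$, which closes the induction and yields $\epsilon_1^n = \epsilon_2^n = 0$ for all $n > 0$.

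The key structural point, and the only place to slip, is that the two residuals must be carried jointly: the update for $\epsilon_1^{n+1}$ already requires $\epsilon_2^n$, and the update for $\epsilon_2^{n+1}$ requires both $\epsilon_1^n$ and the intermediate $\epsilon_4^n$. Attempting to propagate $\epsilon_1$ alone would fail because its recursion is not closed without the time-derivative gauge residual. Consequently the induction hypothesis must be the pair $\{\epsilon_1^n = 0,\ \epsilon_2^n = 0\}$, with $\epsilon_4^n = 0$ verified as an intermediate step before the two gauge lemmas are applied. I do not expect any real obstacle beyond this bookkeeping, since the invertibility of the operators $\mathcal{L}_i$ did its work already inside the lemmas; this theorem is merely the clean inductive capstone to the chain of linear-combination lemmas, establishing the single implication (continuity $\Rightarrow$ gauge) rather than the full equivalence proved for the CDF and BDF families.
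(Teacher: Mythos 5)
Your proposal is correct and follows essentially the same route as the paper: both arguments induct on $n$ using Lemmas \ref{lemma:gauge-linear-combination-DIRKs} and \ref{lemma:ddt-gauge-linear-combination-DIRKs}, with the continuity residuals $\epsilon_3^{n}$ and $\epsilon_3^{n+1}$ dropping out by hypothesis so that only $\epsilon_1$ and $\epsilon_2$ need to be propagated. If anything, your explicit joint induction on the pair $\left(\epsilon_1^{n}, \epsilon_2^{n}\right)$, with $\epsilon_4^{n}=0$ verified as an intermediate step via Lemma \ref{lemma:k-linear-combination-DIRKs}, is a slightly tidier rendering of the paper's argument, which treats $\epsilon_1$ and then $\epsilon_2$ sequentially even though, as you observe, the recursion for $\epsilon_1$ alone is not closed without $\epsilon_2$.
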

\begin{proof}
    Assume the residuals are zero for the starting conditions. Additionally, assume the residual for the continuity equation is zero for all time. By Lemma \ref{lemma:gauge-linear-combination-DIRKs}, we know $\epsilon_1^{n+1}$ is a linear combination of $\epsilon_1^n$, $\epsilon_2^n$, $\epsilon_3^n$, and $\epsilon_3^{n+1}$. ie

    \begin{equation}
        \epsilon_1^{n+1} = \mathcal{L}_i^{-1}\left[A\epsilon_1^{n} + B\epsilon_2^{n} + C\epsilon_3^{n} + D\epsilon_3^{n+1}\right].
    \end{equation}
    By our assumption, $\epsilon_3^k = 0 \, \forall \, k$. Thus
    \begin{equation}
        \epsilon_1^{n+1} = \mathcal{L}_i^{-1}\left[A\epsilon_1^{n} + B\epsilon_2^{n}\right].
    \end{equation}

    Clearly, if $\epsilon_1^{0} = \epsilon_2^{0} = 0$, then by induction we have $\epsilon^{n+1}_1 = 0$. Thus $\epsilon_1^k = 0 \, \forall \, k$.

    Lemma \ref{lemma:ddt-gauge-linear-combination-DIRKs} gives that $\epsilon_2^{n+1}$ is a linear combination of $\epsilon_1^n$, $\epsilon_2^n$, $\epsilon_3^n$, and $\epsilon_3^{n+1}$. Similar logic will show that our assumptions imply $\epsilon_2^k = 0 \, \forall \, k$.

    If the residual of the continuity equation \eqref{eq:DIRKs-continuity-res} is zero for all time, then the residuals for the Lorenz gauge \eqref{eq:DIRKs-gauge-res}-\eqref{eq:DIRKs-ddt-gauge-res} are zero for all time.
\end{proof}

\begin{remark}
    In this generalization we have proven only one direction of the ``if and only if" relationship indicated by the above schemes. However, for our purposes we are only interested in satisfaction of the continuity equation implying satisfaction of the Lorenz gauge, and so this suffices for now.  The other direction has been proven for DIRK-$2$, indicating generalization to $s$ stages is possible, but this will require more work.
\end{remark}

\begin{remark}
    Given this unidirectional relation, we were able to prove the relationship using only a generic linear combination of residuals, we did not need any specifics as to the coefficients.  To prove the other direction, these details will be necessary.
\end{remark}

\begin{remark} \label{rem:DIRK-Gauss-Law}
    For neither DIRK-$2$ nor the generalized DIRK-$s$ methods have we established that Gauss's law for electricity is satisfied if the semi-discrete Lorenz gauage condition is satisfied.  Stages beyond the first involve compounded inverse linear operators that render such a property intractable to prove.  This is not to suggest that Runge-Kutta methods exclude this property, for example the implicit midpoint rule gives this relationship. Further work needs to be done to design Runge-Kutta methods that demonstrably have this property.
\end{remark}

%
%
%
\subsection{Summary}
\label{subsec:3 Summary}

In this section, composing the bulk of our work, we generalized the properties proven in \cite{christliebPIC2022pt2} to an arbitrary order of BDF methods as well as a set of second-order time centered methods. Additionally, we proved one direction (continuity implying the gauge) of this relationship to DIRK methods with $s$ stages. Having proven these properties, we will now briefly discuss the way we will be using these fields to solve the Vlasov Maxwell system and then move on to examining the numerical results.

%
%
%
%
\section{Overview of the Algorithm}
\label{sec:4 Particle Updates}


In this section we introduce the particle push we use with all wave solvers and give an overview of the overall algorithm.  As we are interested in how to construct time discrete formulations that ensure Gauge, Gauss and Continuity equations are satisfied, we choose to simplify our discussion from our second paper \cite{christliebPIC2022pt2} by making use of spectral (FFT based) solvers for the fields as opposed to the kernel based methods.  
In the following sections, we first review the IAEM and then go over the algorithm as a whole regardless of the time integration method chosen for the field equations.

%
%
\subsection{The Improved Asymmetrical Euler Method}
\label{subsec:Improved AEM method description}

In parts I and II \cite{christliebPIC2022pt1, christliebPIC2022pt2} we presented a semi-implicit particle update scheme called the Improved Asymmetrical Euler Method (IAEM), which modifies the Asymmetrical Euler Method (AEM) of Gibbon et al \cite{Gibbon2017Hamiltonian} by introducing a Taylor expansion term to the momentum update equation. This method is reproduced as follows:
\begin{empheq}[left=\empheqlbrace]{align}
    \mathbf{x}_{i}^{n+1} &= \mathbf{x}_{i}^{n} + \mathbf{v}_{i}^{n} \Delta t, \label{eq:IAEM x update} \\
    \mathbf{v}_{i}^{*} &= 2\mathbf{v}_{i}^{n} - \mathbf{v}_{i}^{n-1}, \label{eq:IAEM v extrapolation}\\
    \mathbf{P}_{i}^{n+1} &= \mathbf{P}_{i}^{n} + q_i \Bigg( - \nabla \phi^{n+1} + \nabla \mathbf{A}^{n+1} \cdot \mathbf{v}_{i}^{*} \Bigg)\Delta t, \label{eq:IAEM P update} \\
    \mathbf{v}_{i}^{n+1} &\equiv \frac{c^2 \left( \mathbf{P}_{i}^{n+1} - q_{i} \mathbf{A}^{n+1}\right)}{\sqrt{ c^2\left( \mathbf{P}_{i}^{n+1} - q_{i} \mathbf{A}^{n+1}\right)^2 + \left(m_{i}c^2\right)^2}}. \label{eq:IAEM v defn}
\end{empheq}
This method, though not proven to be symplectic, has been shown to have volume preserving behavior and reduces the error significantly when compared to \cite{Gibbon2017Hamiltonian}, these comparisons may be found in \cite{christliebPIC2022pt1}.

There are a number of spatial derivatives that are taken throughout this scheme, like our approach to computing $\rho$ from the continuity equation, (section \ref{sec:3 wave solvers}).  It is important to compute the derivatives in the same manner as the wave equations are solved, ie if the $\mathcal{L}$ operator is inverted using the FFT, the derivatives here need to be computed with an FFT, otherwise the above theorems break.  We will be computing these spatial derivatives spectrally.



For all particle updates, whether the CDF, BDF or DIRK field integrators are used, the particles are advanced using the IAEM.  $\mathbf{J}^{n+1}$ on the mesh comes from a symmetric second order B-Spline weighting of the current to the mesh \cite{birdsall1991particle,huang2006quickpic,Benedetti_2008,cormier2008unphysical,lehe2014laser,Shalaby_2017}, where the particle location at time $t^{n+1}$ determines where the spline weights, and the particle velocity at time $t^{n}$ determines the value that is interpolated (along with charge).  One can alternatively use  $\mathbf{v}^{*}$ when computing $\mathbf{J}^{n+1}$, but in all tests in this paper this made no discernible difference in the solution.  Unlike staggered methods, we do not introduce a reduction in the order of the weighting spline for the current $\mathbf{J}_i$ in directions perpendicular to the direction $i\in{\{x,y,z\}}$ as in \cite{VillasenorChargeConservation92}.  The spline is detailed in \cite{birdsall1991particle} and section 8.8 in \cite{BirdsallLangdon}. In 2D and 3D the spline is given by tensor products of 1D splines in x, y and z directions.  For further details on B-splines themselves, see Chapter 9 of \cite{deBoor}.

We note that for staggered in time fields, $\mathbf{J}$ lives at integrator times and we compute $\rho$ at the half time levels using the time staggered continuity equation.  This lets us update the particles with IAEM, where $\phi^n=\frac{1}{2}(\phi^{n+1/2}+\phi^{n-1/2})$ for second order in time approximation to $\phi^n$, which is all that is needed for the CDF method.

%
%
%
%
\subsection{Algorithm} 
\label{subsec:4 Algorithm-nonstaggered}

In this section we outline the overall algorithm, see Algorithm \ref{Algorithm1}.  The outline covers the choices made that are consistent with the original two papers, but cast the the context of the generalized framework.  

\begin{algorithm}[ht!]
    \caption{Outline of the PIC algorithm with the improved asymmetric Euler method (IAEM).}\label{Algorithm1}
    Perform one time step of the PIC cycle using the improved asymmetric Euler method. 
    \label{alg:IAEM}
    \begin{algorithmic}[1]

    \State \textbf{Given}: $(\mathbf{x}_{i}^{0}, \mathbf{P}_{i}^{0}, \mathbf{v}_{i}^{0})$, as well as the fields $\left( \phi^{0}, \nabla \phi^{0} \right)$ and $ \left( \mathbf{A}^{0}, \nabla \mathbf{A}^{0} \right)$

    \State Initialize $\mathbf{v}_{i}^{-1} = \mathbf{v}_{i}(-\Delta t)$ using a Taylor approximation.

    \While{stepping}
    
    \vspace{10pt}
    
    \State \label{start time loop} Update the particle positions with $$\mathbf{x}_{i}^{n+1} = \mathbf{x}_{i}^{n} + \mathbf{v}_{i}^{n} \Delta t.$$
    
    \State Using the position data $\mathbf{x}_{i}^{n+1}$, and velocity data $\mathbf{v}_{i}^{n}$, compute the current density $\tilde{\mathbf{J}}^{n+1} = \mathbf{J}^{n+1} + \mathcal{O}(\Delta t),$
    by using quadratic weighting of $\tilde{\mathbf{J}}_i^{n+1}$ to the mesh for all $i$.
    
    \State Using the continuity equation compute the charge density from the current density.  Here we employ the  integrator that is consistent with the update the field:
    \begin{itemize}
        \item CDF: $$\rho^{n+3/2} = \rho^{n+1/2} - \Delta t\nabla\cdot\tilde{\mathbf{J}}^{n+1},$$
        \item BDF: 
        $$\rho^{n+1} = \sum_{i=n-k}^{n}\frac{a_{n-i}}{a_0}\rho^{i} + \frac{\Delta t}{a_0} \nabla\cdot\tilde{\textbf{J}}^{n+1} $$
        \item DIRK: 
        $$\rho^{n+1} = \rho^{n} - \Delta t \sum_{i=1}^k\left(b_i\nabla\cdot \textbf{J}^{(i)} \right )$$
        where $b_i$ are the DIRK coefficients and the $\textbf{J}^{(i)} $ are constructed via linear interpolation using the data $\{\tilde{\mathbf{J}}^{n}, \tilde{\mathbf{J}}^{n+1}\}$.  Higher order construction could use more data if desired, or one could substep the IAEM (or other particle stepper) to get $\textbf{J}^{(i)}$ at each stage.
    \end{itemize}
    
    \State Compute the potentials $\phi^{*+1}$ and $\mathbf{A}^{n+1}$ at time level $t^{n+1}$ using the proposed time stepping method (see section \ref{sec:3 wave solvers}) and the FFT solution to the boundary value problem framed as a Helmholtz operator.  Here $*\in \{ n, n+1/2\}$ depending on whether it is BDF and DIRK or the CDF method. 
    
    \State Compute the spatial derivatives of the potentials, $\nabla \phi^{*+1}$ and $\nabla \mathbf{A}^{n+1}$, on the mesh at time level $t^{n+1}$ using an FFT.

    \State Compute $\nabla \tilde{\phi}_i^{n+1}$ and $\nabla \mathbf{A}_i^{n+1}$ at the particles using quadratic weighting from the mesh to the particles, where $\tilde{\phi}_i^{n+1} =  \phi_i^{n+1} $ for BDF and DIRK methods, and $\tilde{\phi}_i^{n+1} =  \frac{1}{2}(\phi_i^{n+3/2}+\phi_i^{n+1/2}) $  for the CDF method.
    
    \State Evaluate the Taylor corrected particle velocities $$\mathbf{v}_{i}^{*} = 2 \mathbf{v}_{i}^{n} - \mathbf{v}_{i}^{n-1}.$$
    
    \State Calculate the new generalized momentum according to $$\mathbf{P}_{i}^{n+1} = \mathbf{P}_{i}^{n} + q_i \Big( - \nabla \tilde{\phi}_i^{n+1} + \nabla \mathbf{A}_i^{n+1} \cdot \mathbf{v}_{i}^{*} \Big)\Delta t.$$

    \State Convert the new generalized momenta into new particle velocities with $$ \mathbf{v}_{i}^{n+1} =  \frac{c^2 \left( \mathbf{P}_{i}^{n+1} - q_{i} \mathbf{A}_i^{n+1}\right)}{\sqrt{ c^2\left( \mathbf{P}_{i}^{n+1} - q_{i} \mathbf{A}_i^{n+1}\right)^2 + \left(m_{i}c^2\right)^2}}.$$
    
    \State Shift the time history data and return to step \ref{start time loop} to begin the next time step. 
    
    \EndWhile
    \end{algorithmic}
\end{algorithm}

%
%
\subsection{Summary}
\label{subsec:4 Summary}

In this section, we outlined the time stepping method used with the particles and the overall algorithm.  The method as outlined conserves the gauge condition for all three types for time discretization methods and preserves Gauss's law for both BDF and CDF methods.  In the next section we present numerical results that support this statement for the BDF-$1$, BDF-$2$, CDF-$2$ and DIRK-$2$ consistent time discretization strategies.   

%
%
%
%
\section{Numerical Experiments}
\label{sec:5 Numerical results}

This section contains the numerical results for our PIC method using several wave solvers described above.  We consider two different test problems with dynamic and steady state qualities. We first consider the Weibel instability \cite{Weibel1959}, which is a streaming instability that occurs in a periodic domain. The second example is a simulation of a non-relativistic drifting cloud of electrons in a periodic domain. For each example, we compare the performance of the different methods by inspecting the fully discrete Lorenz gauge condition and tracking its behavior as a function of time.  Our exploration concentrates on the BDF-$1$, BDF-$2$, CDF-$2$, and DIRK-$2$ methods, typically with a mesh refinement of $128\times 128$. The DIRK-$2$ method's weights are those explored by Qin and Zhang \cite{qin_zhang_1992} and has the following Butcher tableau:

\begin{equation*}
    \begin{array}
    {c|cccc}
    1/4 & 1/2 & 0\\
    1/4 & 1/4 & 1/2\\
    \hline
    & 1/2& 1/2
    \end{array}
\end{equation*}
All methods use a spectral solve to invert the $\mathcal{L}$ operator and compute gradients spectrally as well. As mentioned above, we make use of quadratic particle weighting as in \cite{Benedetti_2008,Shalaby_2017,birdsall1991particle,huang2006quickpic,cormier2008unphysical,lehe2014laser}, and we compute $\mathbf{J}^{n+1}$ using $\mathbf{x}^{n+1}_i$ as the location of the interpolant point and $\mathbf{v}^{n}_i$ and $q_i$ as the value that is interpolated.

We conclude the section by summarizing the key results of the experiments.

%
%
\subsection{Weibel Instability}
\label{subsec:weibel instability}

The Weibel instability \cite{Weibel1959} is a type of filamentation instability known to occur in electromagnetic plasmas in which there is an anistropic distribution of momenta. This anisotropy is prevalent in many applications of high-energy-density physics including astrophysical plasmas \cite{WeibelAstro2021} and fusion applications \cite{FluidKineticICFsims2005}. In such applications, the momenta in different directions can vary by several orders of magnitude. The strong currents resulting from these momenta create filaments that eventually interact due to the growth in the magnetic field. Over time, the magnetic field can become quite turbulent and the currents self-organize into larger connected networks. During this self-organization phase, there is an energy conversion mechanism that transforms the kinetic energy from the beams into turbulent magnetic fields in an attempt to make the momentum isotropic. The resulting instability creates the formation of magnetic islands and other structures which store massive amounts of energy. In such highly turbulent regions, this can lead to the emergence of other plasma phenomena such as magnetic reconnection, in which energy is released from the fields back into the plasma \cite{FonsecaWeibel03}.

We set up the simulation following a setup that is similar to \cite{MorseWeibel1971} which can be summarized as follows. The domain for the problem is the periodic square $[L_x,L_y)^2$, where $L_{x} = L_{y} = 1.153896$ are in units of $c/\omega_{pe}$. Here, $c$ is the speed of light and $\omega_{pe}$ is the angular plasma frequency for electrons. The system is initially current neutral and consists of two infinite counter-propagating sheets of electrons whose velocities are $\pm v_{e}$ in the $x$ component of the velocity. The drift velocity of the electrons $v_{b}$ is chosen to give a Lorentz factor of $\gamma_{e} = 2.5$, which translates to $v_{b} \approx 0.92 c$. The electrons are prescribed a small thermal velocity $v_{th} = 1\times 10^{-4} v_{b}$ along only the $y$ direction, which acts as a small perturbation causing the streams to interact. In other papers, this perturbation can be introduced by changing the velocity of all particles in a given cell \cite{Chen-ImplicitPIC-Darwin2014}. A neutralizing background of stationary ions is present to ensure that the system is initially charge neutral, as well. For simplicity, we shall assume that the sheets have the same number density $\bar{n}$, but this assumption is not necessary. In fact, some earlier studies by \cite{LeeFilamentation1973} explored the structure of the instability for interacting streams with different densities and drift velocities. The particular values of the plasma parameters used in this experiment are summarized in Table \ref{tab:weibel plasma parameters}. In our previous work \cite{christliebPIC2022pt2} we ran this problem to examine the response of the Lorenz gauge in the context of the BDF-$1$ method. A fuller discussion of the derivation of the setup can be found there.

\begin{table}[!ht]
    \centering
    \def\arraystretch{1.2}
    \begin{tabular}{ | c || c | }
        \hline
        \textbf{Parameter}  & \textbf{Value} \\
        \hline
        Average number density ($\bar{n}$) [m$^{-3}$] & $1.0\times 10^{10}$ \\
        Average electron temperature ($\bar{T}$) [K] & $1.0\times 10^{4}$ \\
        Electron angular plasma period ($\omega_{pe}^{-1}$) [s/rad] & $1.772688\times 10^{-7}$ \\
        Electron skin depth ($c/\omega_{pe})$ [m] & $5.314386\times 10^{1}$ \\
        Electron drift velocity in $x$ ($v_{\perp}$) [m/s] & $c/2$ \\
        Maximum electron velocity in $y$ ($v_{\parallel}$) [m/s] & $c/100$ \\
        \hline
    \end{tabular}
    \caption{Plasma parameters used in the simulation of the Weibel instability. All simulation particles are prescribed a drift velocity corresponding to $v_{\perp}$ in the $x$ direction while the $y$ component of their velocities are sampled from a uniform distribution scaled to the interval $[-v_{\parallel}, v_{\parallel})$.}
    \label{tab:weibel plasma parameters}
\end{table}

\begin{figure}[!ht]
    \textbf{Weibel Instability Magnetic Magnitude vs Angular Plasma Period, Mesh Resolution: $128 \times 128$}
    \centering
    \subfloat[][Naive Update, Quadratic Weighting]{
    \includegraphics[clip, trim={0cm, 0cm, 21cm, 0cm}, scale=0.185]{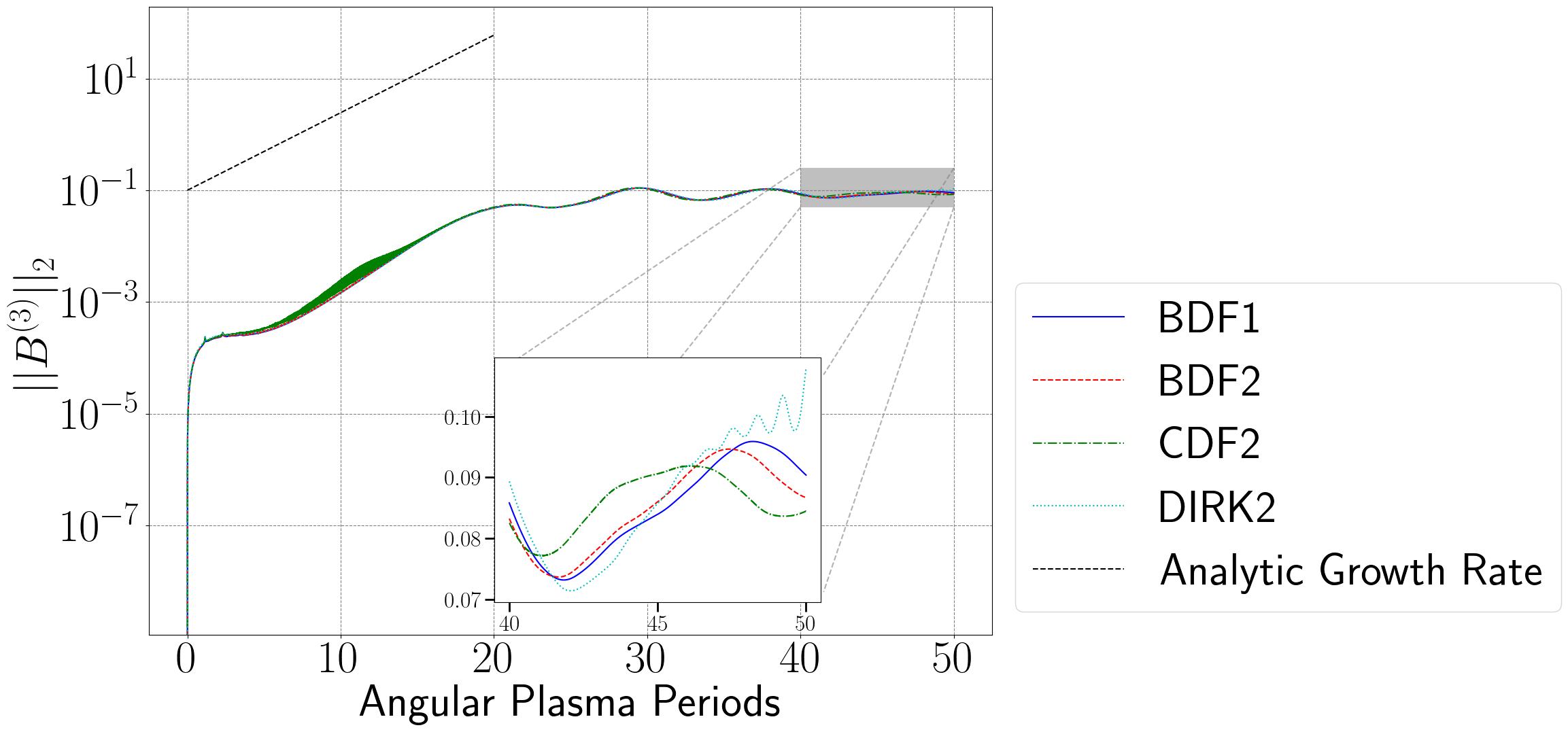}
    \label{fig:naive-quadratic-Bz-growth-rate}
    }
    \subfloat[][Charge Conserving Update, Quadratic Weighting]{
    \includegraphics[clip, trim={0cm, 0cm, 0cm, 0cm}, scale=0.185]{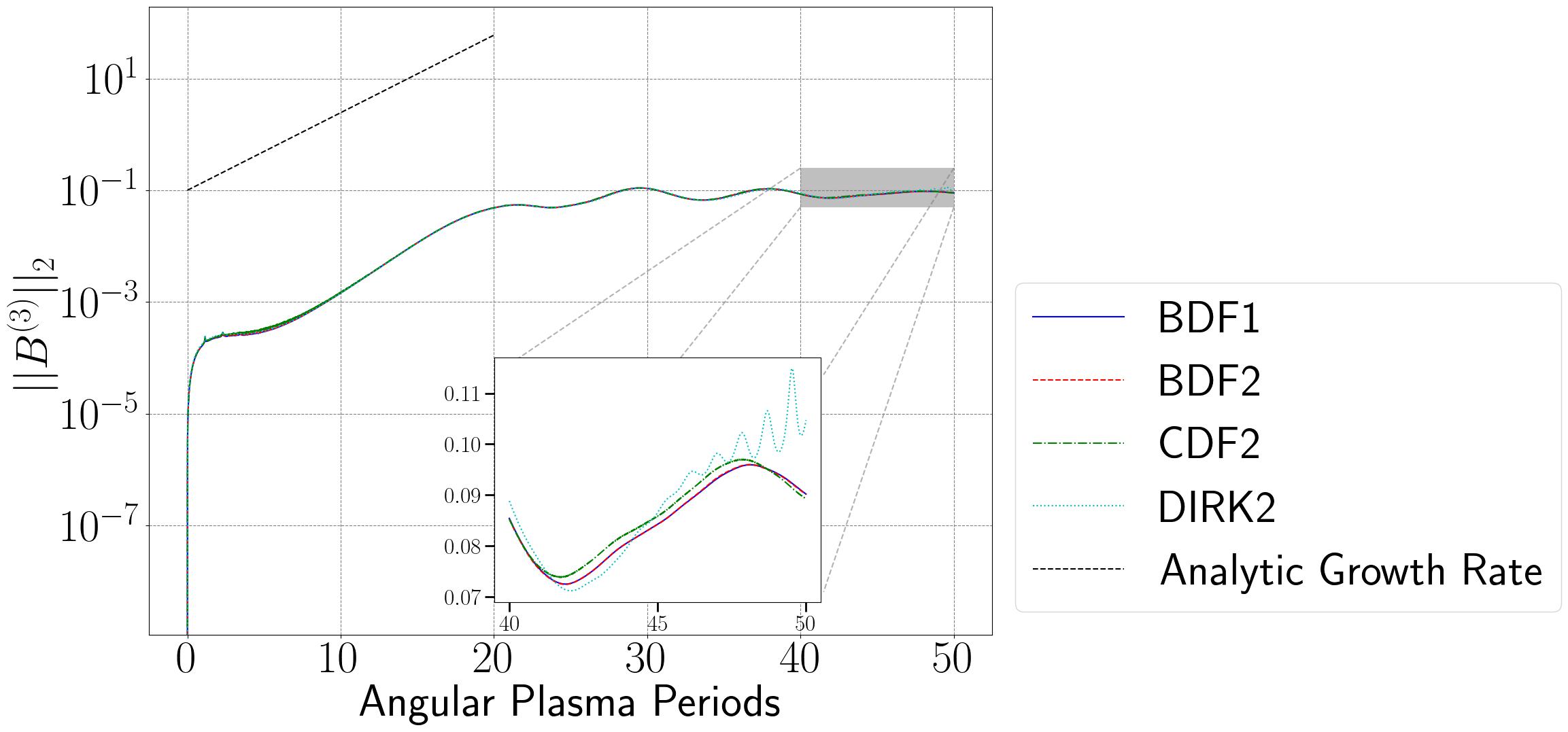}
    \label{fig:conserving-quadratic-Bz-growth-rate}
    } 
    \caption{The growth rate of the magnetic magnitude under the four methods under consideration. We note that both the naive update (left) and charge conserving update (right) follow the theoretical linear growth rate well, though the naive exhibits nonphysical oscillations and is clearly not as smooth. The naive update, unlike the charge conserving update, does not use the continuity equation to update $\rho$, but rather maps both $\mathbf{J}$ and $\rho$ to the mesh using quadratic particle weighting. When we enforce continuity, the oscillations observed in not only the growth period, but also the final state, are removed for all methods except DIRK-$2$, which triggers small two-stream instabilities (see Remark \ref{rem:DIRK2-Oscillations}).}
    \label{fig:quadratic-Bz-growth-rate}
\end{figure}

\begin{figure}[!ht]
    \centering
    \textbf{Weibel Instability Gauge Error vs Angular Plasma Period, Mesh Resolution: $128 \times 128$}
    \subfloat[][Naive Update, Quadratic Weighting]{
    \includegraphics[clip, trim={0cm, 0cm, 11cm, 0cm}, scale=0.185]{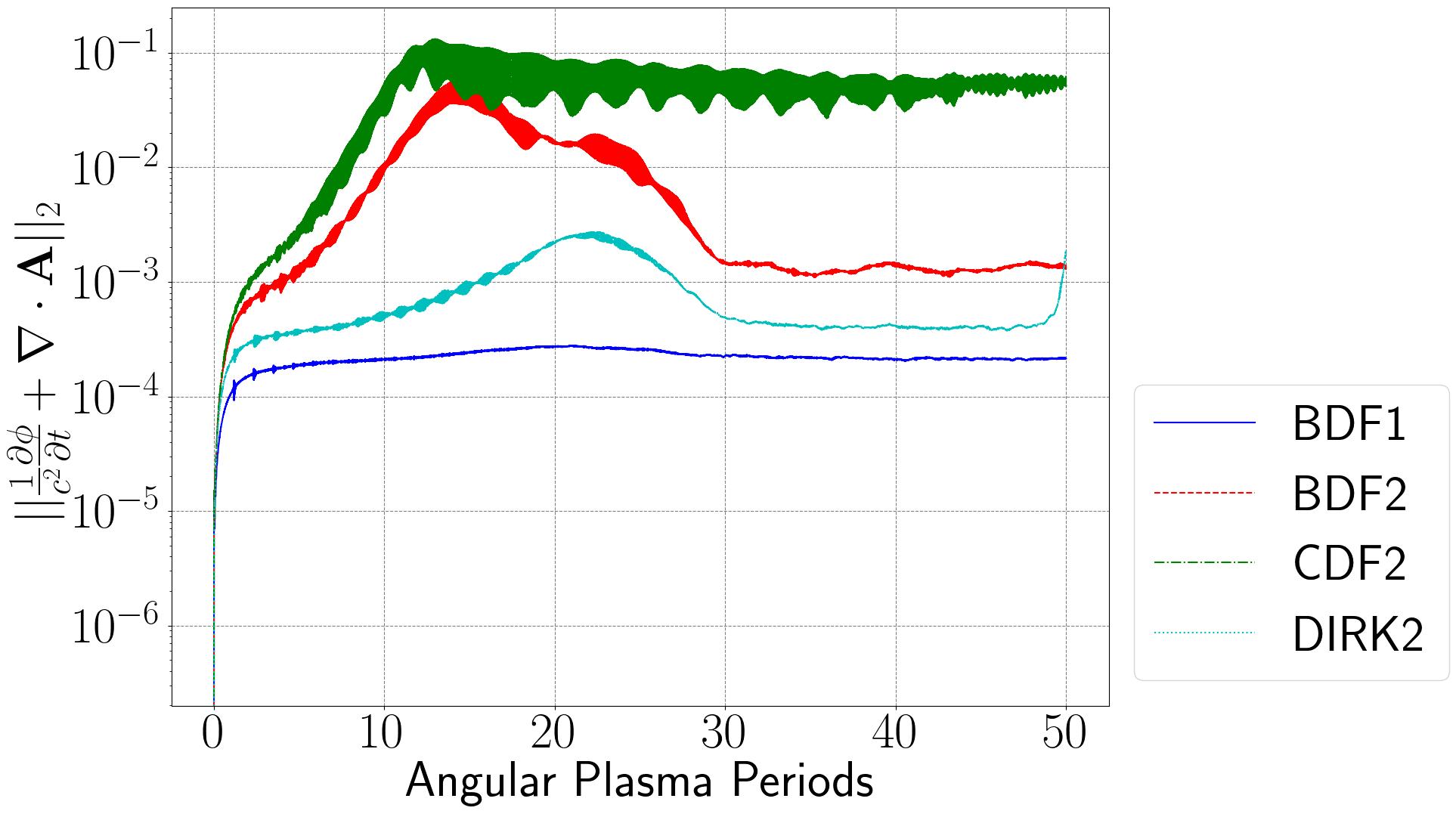}
    \label{fig:naive-conserving-gauge-naive}
    }
    \subfloat[][Charge Conserving Update, Quadratic Weighting]{
    \includegraphics[clip, trim={0cm, 0cm, 0cm, 0cm}, scale=0.185]{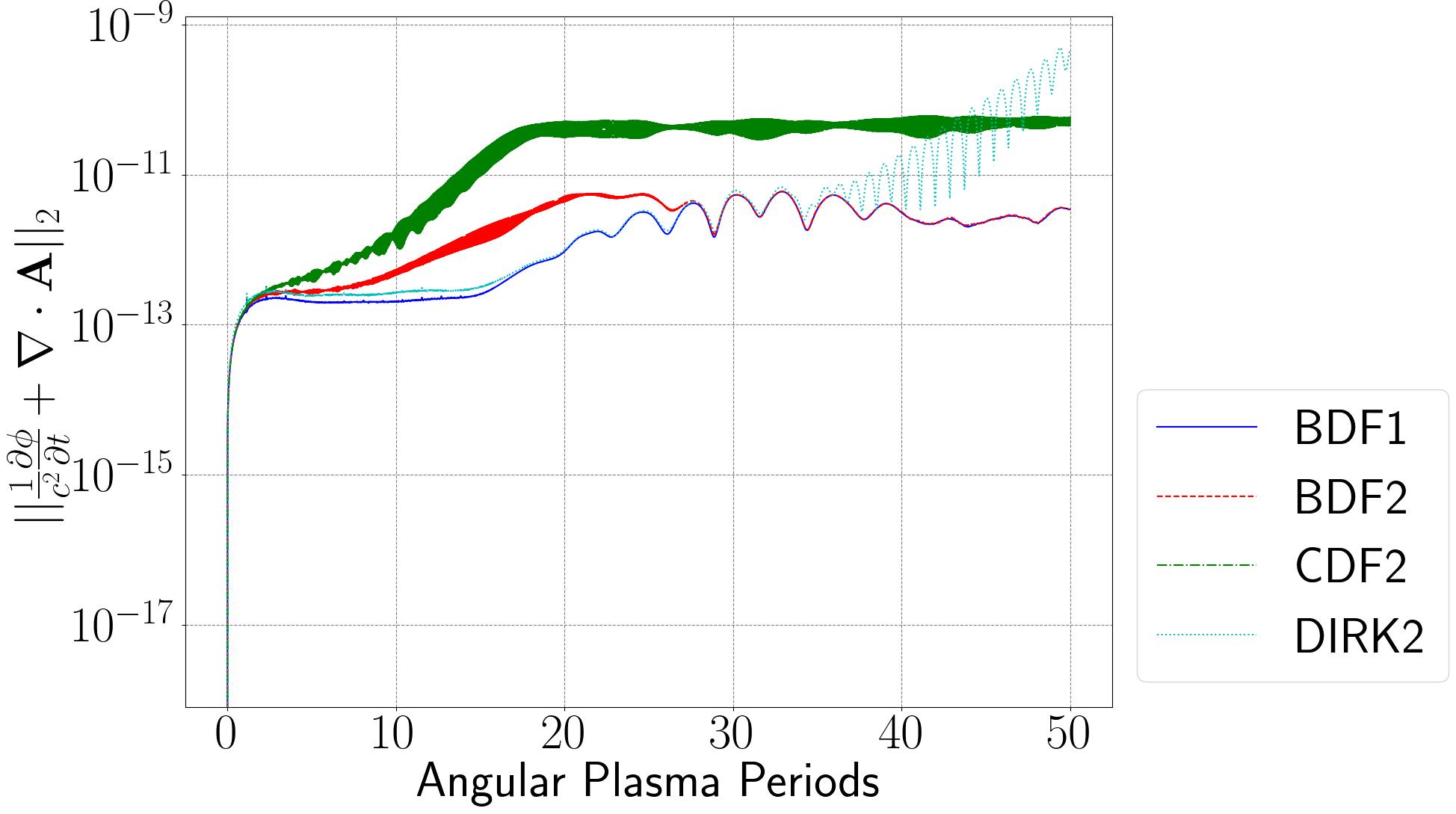}
    \label{fig:naive-conserving-gauge-conserving}
    } 
    \caption{The error in the Lorenz gauge condition for the Webeil Instability on a $128 \times 128$ grid. On the left we scatter charge and current density in a naive manner, on the right we have the setup in which we scatter current and update charge from the continuity equation. We see the error in the gauge is drastically reduced when we enforce continuity.}
    \label{fig:naive-conserving-gauge}
\end{figure}

As noted earlier, we make use of higher order particle weighting as we are using a spectral solver.  We see not only good alignment with the linear growth rate predicted by theory (Figures \ref{fig:quadratic-Bz-growth-rate}), but we also see good preservation of the Gauge condition (Figure \ref{fig:naive-conserving-gauge}) and Gauss's Law (Figure \ref{fig:naive-conserving-gauss}) in all cases other than DIRK-$2$, which as we have previously described in Remark \ref{rem:DIRK-Gauss-Law} does not have the latter property.
In Figure \ref{fig:quadratic-Bz-growth-rate} we show the growth rate of the magnetic magnitude as a function of angular plasma period.  Figure \ref{fig:naive-quadratic-Bz-growth-rate} is the growth rate when using the ``naive" weight where both charge density and current density are mapped to a mesh using the same weighting scheme on the collocated mesh.  Figure \ref{fig:conserving-quadratic-Bz-growth-rate} is the result of making use of the conservative method proposed in this paper.  This shows the physical benefits quite nicely, as we see the oscillations in the growth rate eliminated and the change in the magnetic magnitude over time become much smoother.  Likewise we exhibit the significant improvement in the Gauge condition and Gauss's law by comparing the naive weighting (Figures \ref{fig:naive-conserving-gauge-naive} and \ref{fig:naive-conserving-gauss-naive}, respectively) with the conserving scheme (Figures \ref{fig:naive-conserving-gauge-conserving} and \ref{fig:naive-conserving-gauss-conserving}, respectively).

\begin{remark}
    BDF-$1$, BDF-$2$, and CDF-$2$ with Crank-Nicolson all have diffusion, and as such this removes high frequencies, whereas the high frequencies are resolved by the Green's function solver used in \cite{christliebPIC2022pt2}.
\end{remark}
\begin{remark} \label{rem:DIRK2-dispersion}
    DIRK-$2$ behaves dispersively, and as such high frequencies will remain when we use a spectral solver.
\end{remark}
\begin{remark} \label{rem:DIRK2-Oscillations}
    We note that DIRK-$2$ exhibits oscillations towards the end that eventually saturate (Figure \ref{fig:weibel-DIRK2-refine-interpolation}). Due to its dispersive nature (Remark \ref{rem:DIRK2-dispersion}) it perturbs the cold direction ($v_\perp$) and triggers small two stream instabilities. Higher order particle weighting reduces the effect but does not eliminate it. 
\end{remark}


All methods, BDF-$1$, BDF-$2$, CDF-$2$, and DIRK-$2$, capture the Weibel instability, which is a more challenging problem due to the care needed in setting up the problem,  and additionally, once we enforce the semi-discrete continuity equation, they exhibit excellent gauge error and agreement with Gauss's Law (again, except DIRK-$2$ for Gauss).  The DIRK-$2$ method gives results that merit some comment. Figure \ref{fig:weibel-DIRK2-refine-interpolation} shows the gauge error increasing in an oscillatory fashion at around the thirtieth plasma period, eventually stabilizing around the end of the run.  DIRK-$2$ has no diffusion and cannot resolve high frequencies, as such it triggers nonphysical two stream instabilities in the cold direction (Remark \ref{rem:DIRK2-Oscillations}).  Refining the interpolation scheme mitigates this effect, but does not entirely remove it. We see the downstream effects of this in the magnetic magnitude growth in the quadratic scheme (Figure \ref{fig:weibel-DIRK2-refine-interpolation-B3}), though this matches the linear growth rate theory just as well as the other methods, towards the end the magnetic magnitude begin oscillating slightly.

There are a variety of ways we can approach this problem, the simplest of which would be adding numerical diffusion. Additionally, moving to a fully finite element basis based on $C^1$ elements or a new version of the Green's function solvers (eg \cite{christlieb_Gong_Yang_2024}) would mitigate the issue, and will be the subject of future work.  We are interested in developing Runge-Kutta methods that have the property of satisfying Gauss's Law if they satisfy the Gauge condition, and this too shall be the subject of future inquiry.

\begin{figure}[!ht]
    \centering
    \textbf{Weibel Instability Error in Gauss's Law vs Angular Plasma Period, Mesh Resolution: $128 \times 128$}
    \subfloat[][Naive Update, Quadratic Weighting]{
    \includegraphics[clip, trim={0cm, 0cm, 11cm, 0cm}, scale=0.185]{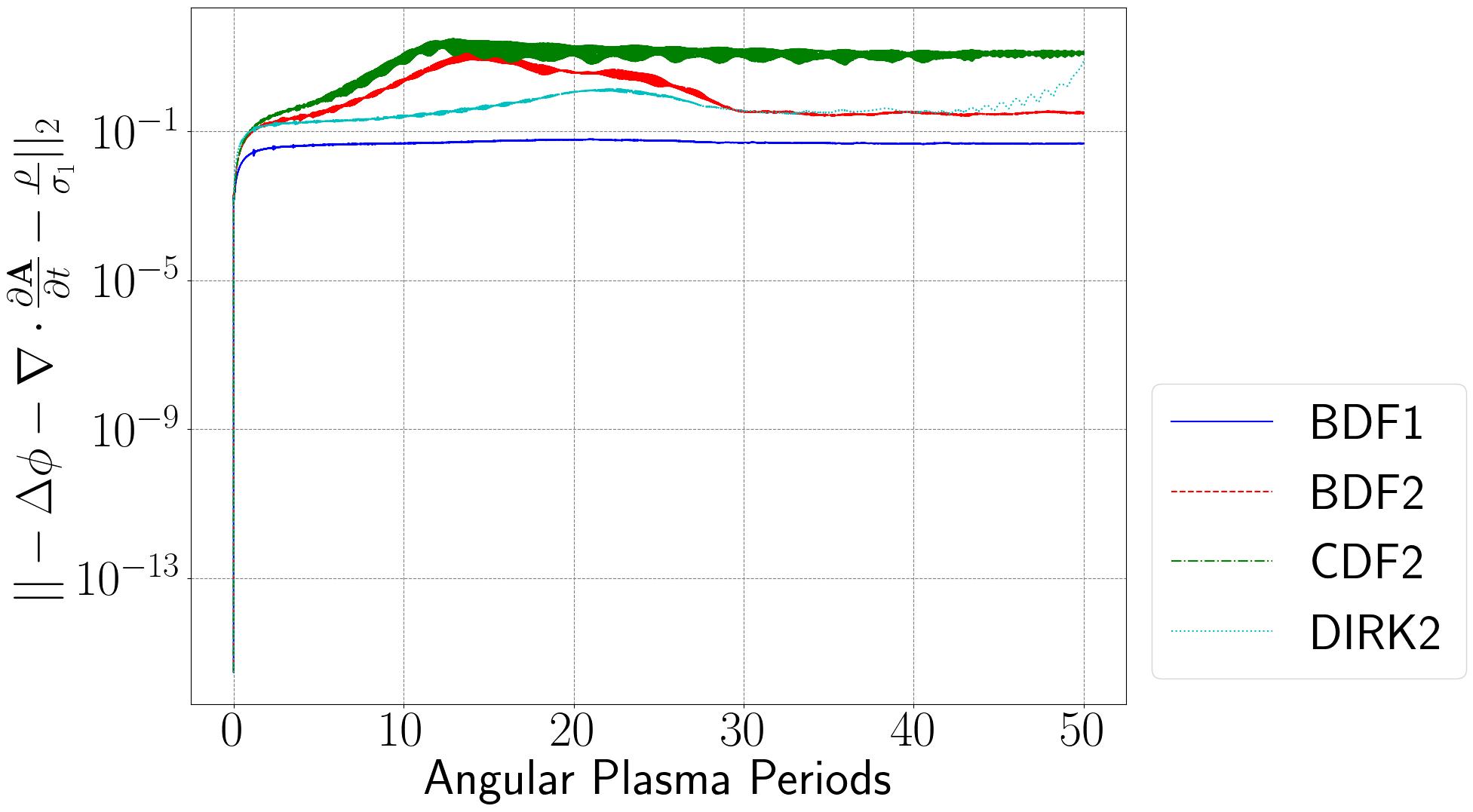}
    \label{fig:naive-conserving-gauss-naive}
    }
    \subfloat[][Charge Conserving Update, Quadratic Weighting]{
    \includegraphics[clip, trim={0cm, 0cm, 0cm, 0cm}, scale=0.185]{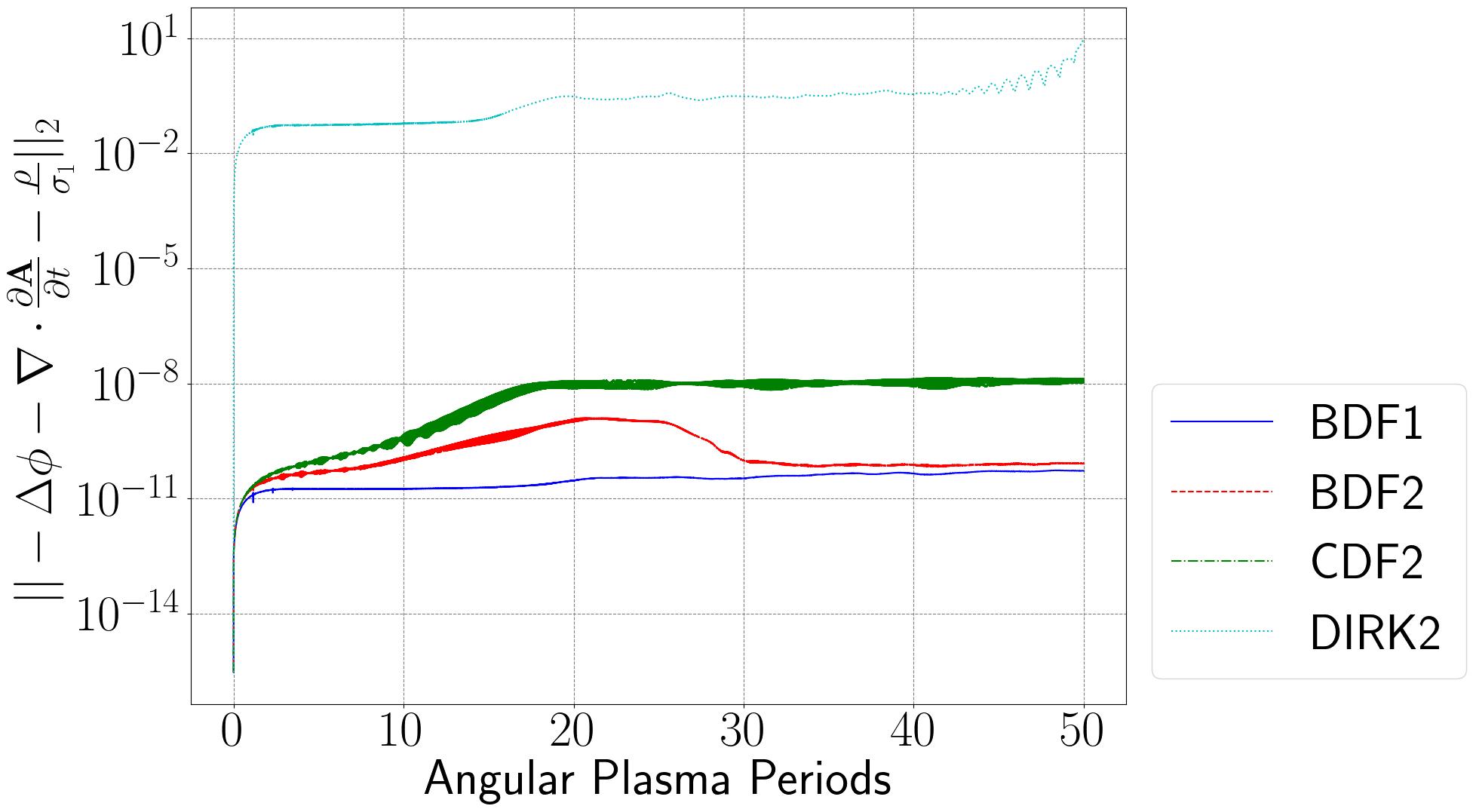}
    \label{fig:naive-conserving-gauss-conserving}
    } 
    \caption{The error in Gauss's Law on a $128 \times 128$ grid. On the left we have scatter charge and current density in a naive manner, on the right we have the setup in which we scatter current and update charge from the continuity equation. We see the error in Gauss's law is drastically reduced with the exception of DIRK-$2$ (see Remark \ref{rem:DIRK-Gauss-Law}).}
    \label{fig:naive-conserving-gauss}
\end{figure}

\begin{figure}[!ht]
    \centering
    \textbf{Weibel Instability, Charge Conserving DIRK-$2$, Mesh Resolution: $128 \times 128$}
    \subfloat[][Gauge Error vs Angular Plasma Period]{
    \includegraphics[clip, trim={0cm, 0cm, 22cm, 0cm}, scale=0.185]{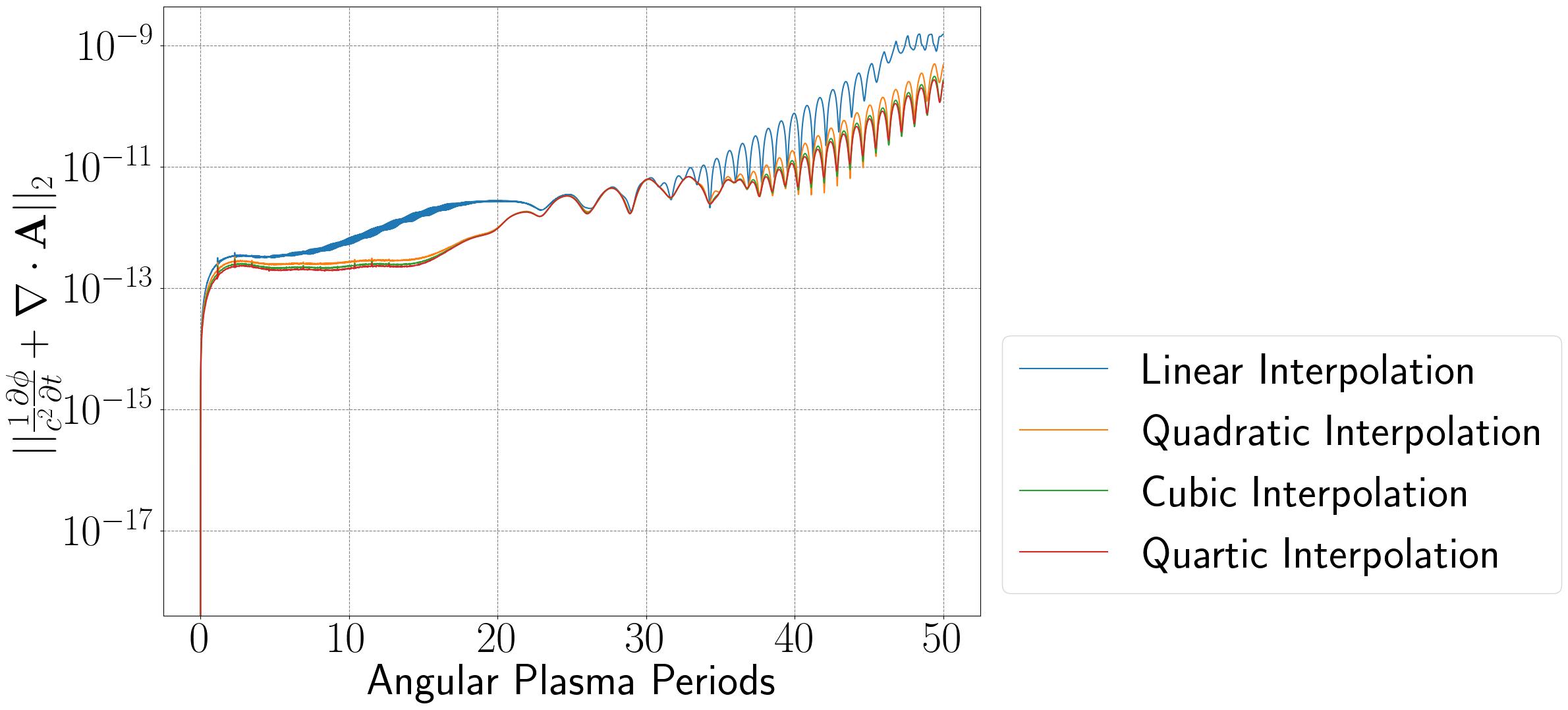}
    \label{fig:weibel-DIRK2-refine-interpolation-gauge}
    }
    \subfloat[][Magnetic Magnitude vs Angular Plasma Period]{
    \includegraphics[clip, trim={0cm, 0cm, 0cm, 0cm}, scale=0.185]{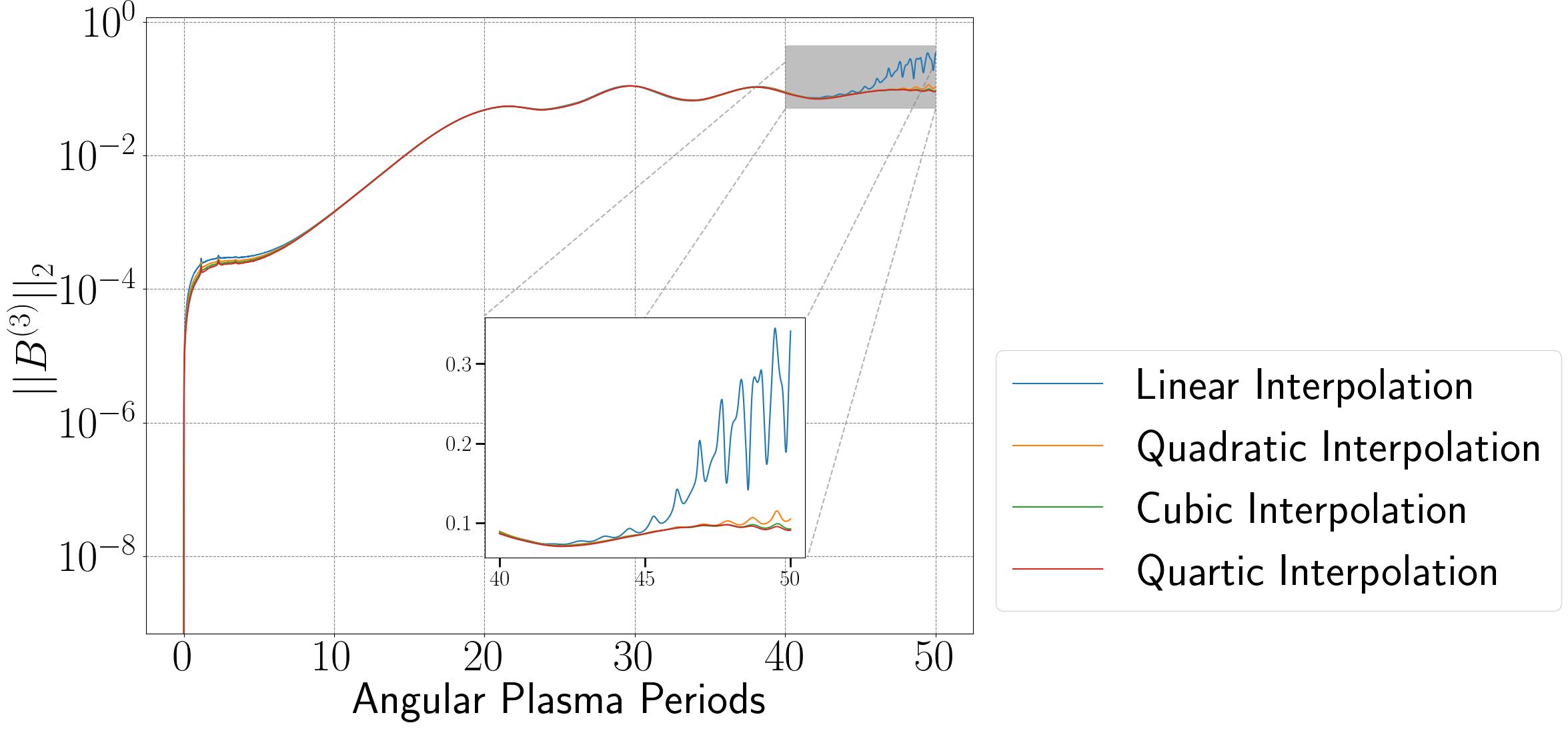}
    \label{fig:weibel-DIRK2-refine-interpolation-B3}
    } 
    \caption{We refine interpolation schemes by increasing the order of spline basis functions for the DIRK-$2$ method.  We see above excellent satisfaction for the gauge condition (left) until around the thirtieth plasma period, after which oscillations begin forming for around twenty plasma periods, eventually themselves stabilizing. This is unlike BDF-$1$, BDF-$2$, or CDF-$2$. Increasing the order of interpolation reduces this phenomenon, but does not remove it. This is a two-stream instability triggered by the DIRK-$2$ method's dispersive nature (see Remarks \ref{rem:DIRK2-dispersion} and \ref{rem:DIRK2-Oscillations}).}
    \label{fig:weibel-DIRK2-refine-interpolation}
\end{figure}

%
%
\subsection{Drifting Cloud of Electrons}
\label{subsec:moving-blob}

In a plasma that is macroscopically neutral, errors in Gauss's law may be mitigated, as an error in one direction caused by a negative particle may very well be cancelled by a likewise error from a positive particle nearby. In \cite{christliebPIC2022pt2} we introduced the $2D-2V$ problem of a drifting cloud of electrons. A stationary group of ions are distributed in a Gaussian about the center of the periodic domain, $\left[-\frac{L_x}{2},-\frac{L_y}{2}\right) \times \left[\frac{L_x}{2},\frac{L_y}{2}\right)$, with a mobile group of electrons distributed in the exact same manner. This group of electrons is given a drift of $c/100$ in the $x$ and $y$ direction, which is enough for some of them to escape the potential well of the ions, while the bulk of the electrons falls back in the well (see Figure \ref{fig:moving-cloud-scatter-plots} for a visualization). Over time they move into the more neutral spaces, spreading apart as they go, and as they do there is ample opportunity for violations in Gauss' law and the Lorenz gauge condition. In this work we modify the dimensions slightly from those in \cite{christliebPIC2022pt2}, increasing the domain from $[-.5,.5]^2$ to $[-8,8]^2$, where the units are in Debye lengths, to increase the amount of space that is traversed and make the problem more sensitive to errors in Gauss's Law. We increase the length of the run of the simulation to $100$ angular plasma periods. A summary of the plasma parameters may be found in Table \ref{tab:moving blob plasma parameters}.

In Figure \ref{fig:naive-conserving-moving-blob-gauge} we observe significant improvement in the gauge error when we compare the naive scatter method to the method enforcing charge conservation. Likewise when we compare satisfaction of Gauss's Law in Figure \ref{fig:naive-conserving-quadratic-moving-blob-gauss} between the naive scatter and charge conserving methods we find many orders of magnitude of improvement, confirming our theory. 

\begin{table}[!ht]
    \centering
    \def\arraystretch{1.2}
    \begin{tabular}{ | c || c | }
        \hline
        \textbf{Parameter}  & \textbf{Value} \\
        \hline
        Average number density ($\bar{n}$) [m$^{-3}$] & $1.0\times 10^{13}$ \\
        Average electron temperature ($\bar{T}$) [K] & $1.0\times 10^{5}$ \\
        Electron angular plasma period ($\omega_{pe}^{-1}$) [s/rad] & $5.6057\times 10^{-9}$ \\
        Electron skin depth ($c/\omega_{pe})$ [m] & $1.6806$ \\
        Electron drift velocity in $x$ and $y$ [m/s] & $c/100$ \\
        \hline
    \end{tabular}
    \caption{Plasma parameters used in the simulation of the Moving Cloud of Electrons problem. All simulation particles are prescribed a drift velocity corresponding to $v_{x} = v_{y} = \frac{c}{100}$ with a thermal velocity corresponding to a Maxwellian.}
    \label{tab:moving blob plasma parameters}
\end{table}

\begin{figure}[!ht]
    \centering
    \textbf{Moving Cloud Gauge Error vs Angular Plasma Period, Mesh Resolution: $128 \times 128$}
    \subfloat[][Naive Conserving Update, Quadratic Weighting]{
    \includegraphics[clip, trim={0cm, 0cm, 11cm, 0cm}, scale=0.185]{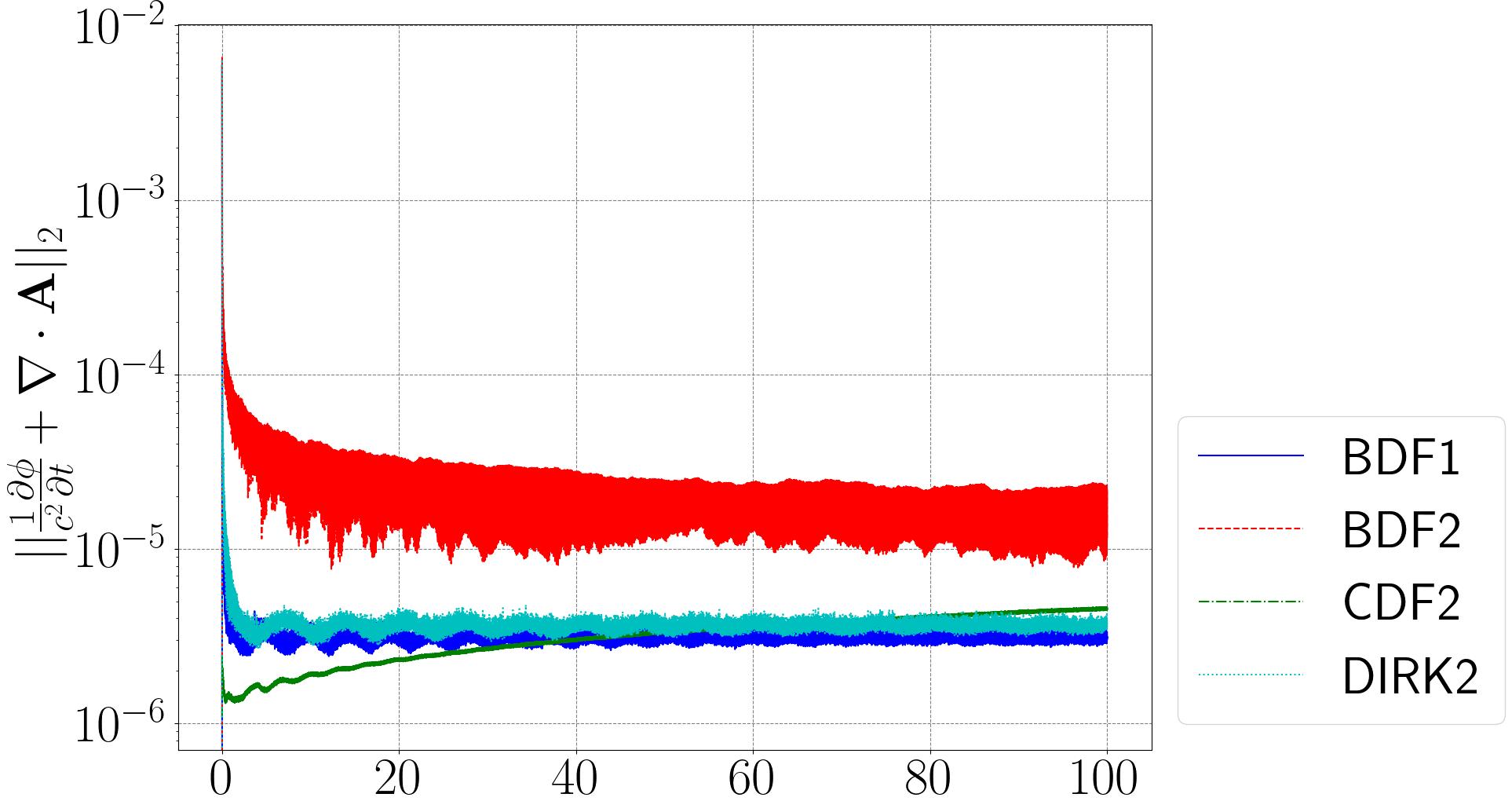}}
    \subfloat[][Charge Conserving Update, Quadratic Weighting]{
    \includegraphics[clip, trim={0cm, 0cm, 0cm, 0cm}, scale=0.185]{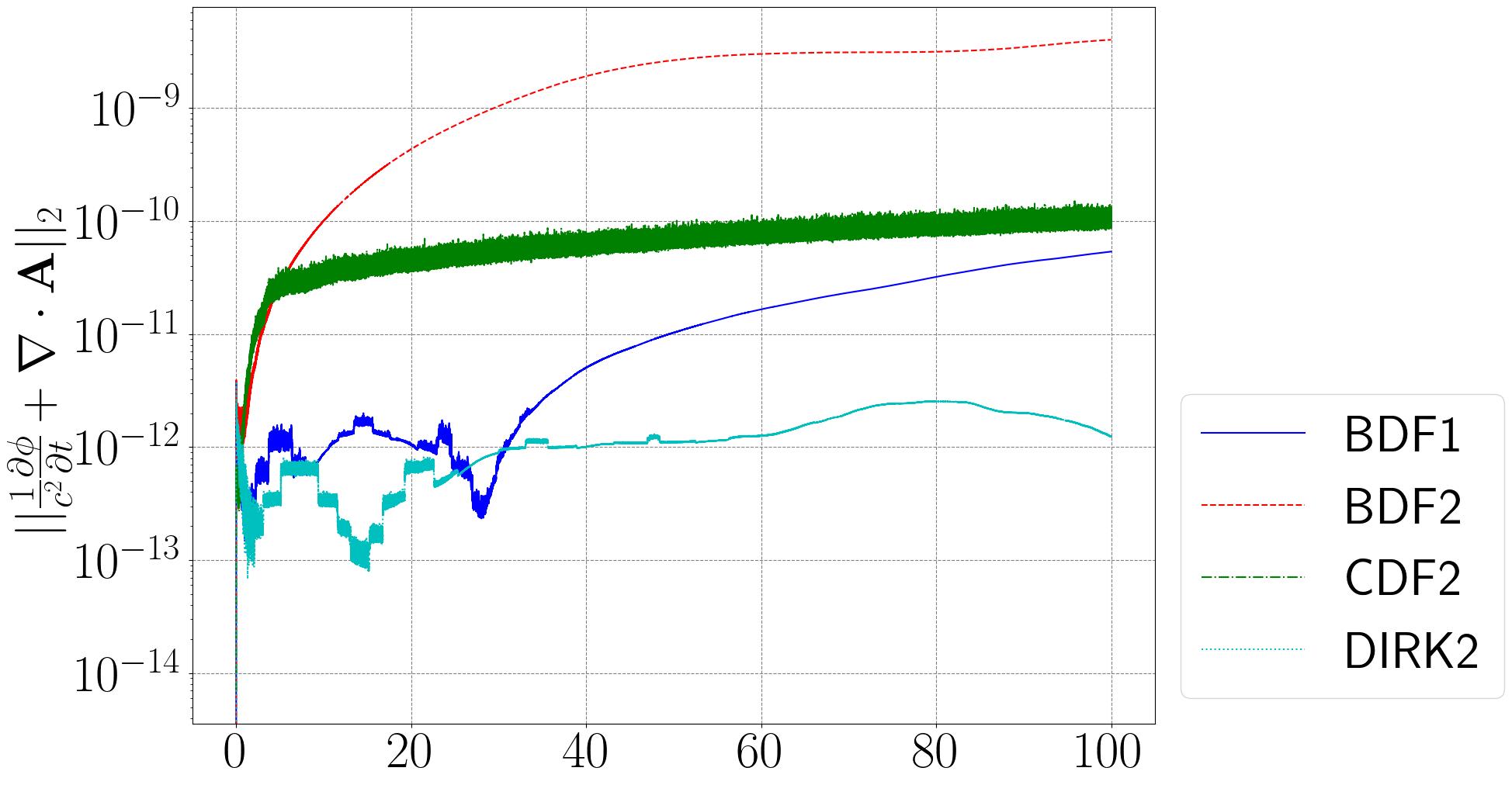}} 
    \caption{The error in the Lorenz gauge for the Moving Cloud of Electrons problem. We see on the left side relatively high errors in Gauss, whereas the right side shows significantly less.}
    \label{fig:naive-conserving-moving-blob-gauge}
\end{figure}

\begin{figure}[!ht]
    \centering
    \textbf{Moving Cloud Error in Gauss's Law vs Angular Plasma Period, Mesh Resolution: $128 \times 128$}
    \subfloat[][Naive Conserving Update, Quadratic Weighting]{
    \includegraphics[clip, trim={0cm, 0cm, 11cm, 0cm}, scale=0.185]{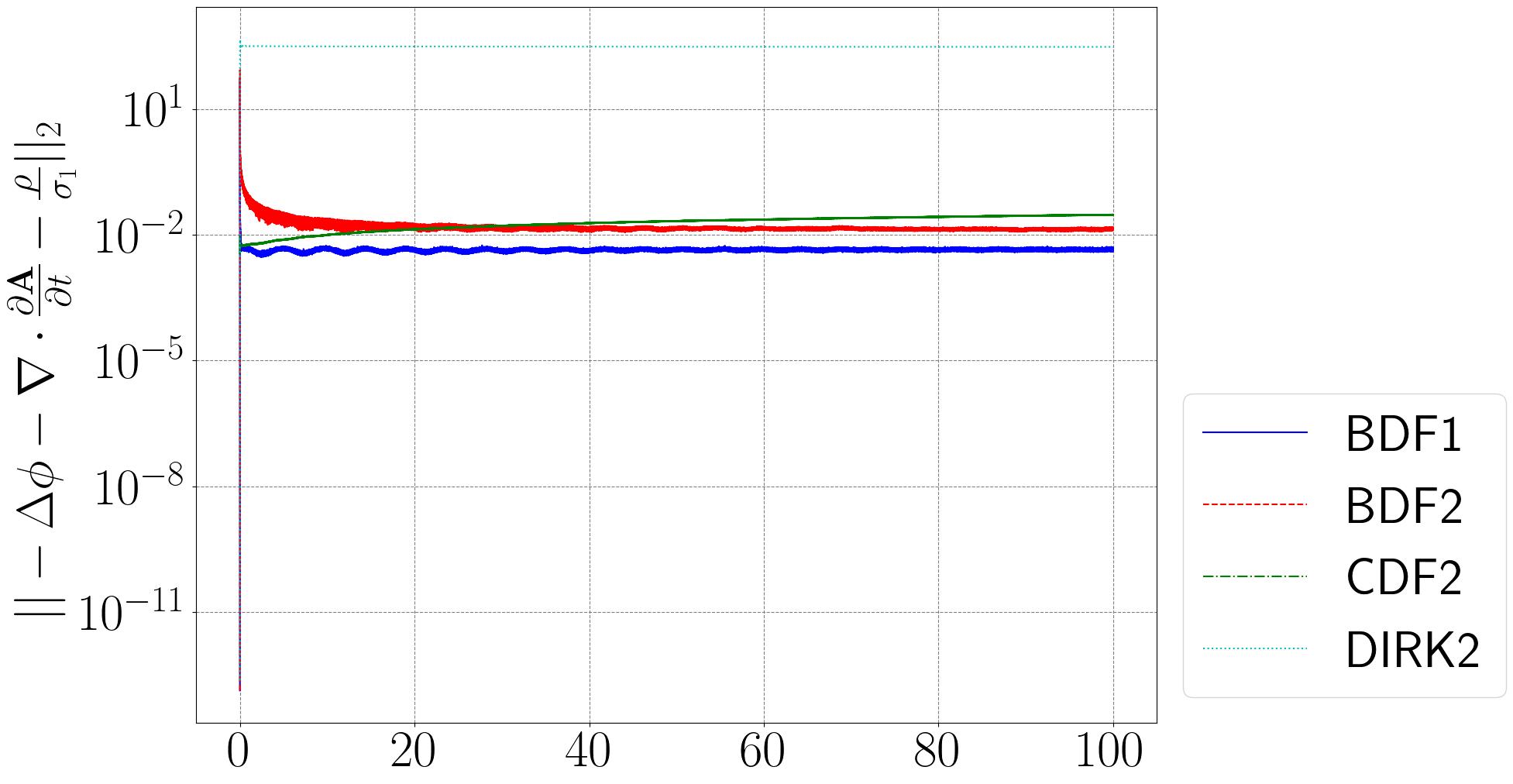}}
    \subfloat[][Charge Conserving Update, Quadratic Weighting]{
    \includegraphics[clip, trim={0cm, 0cm, 0cm, 0cm}, scale=0.185]{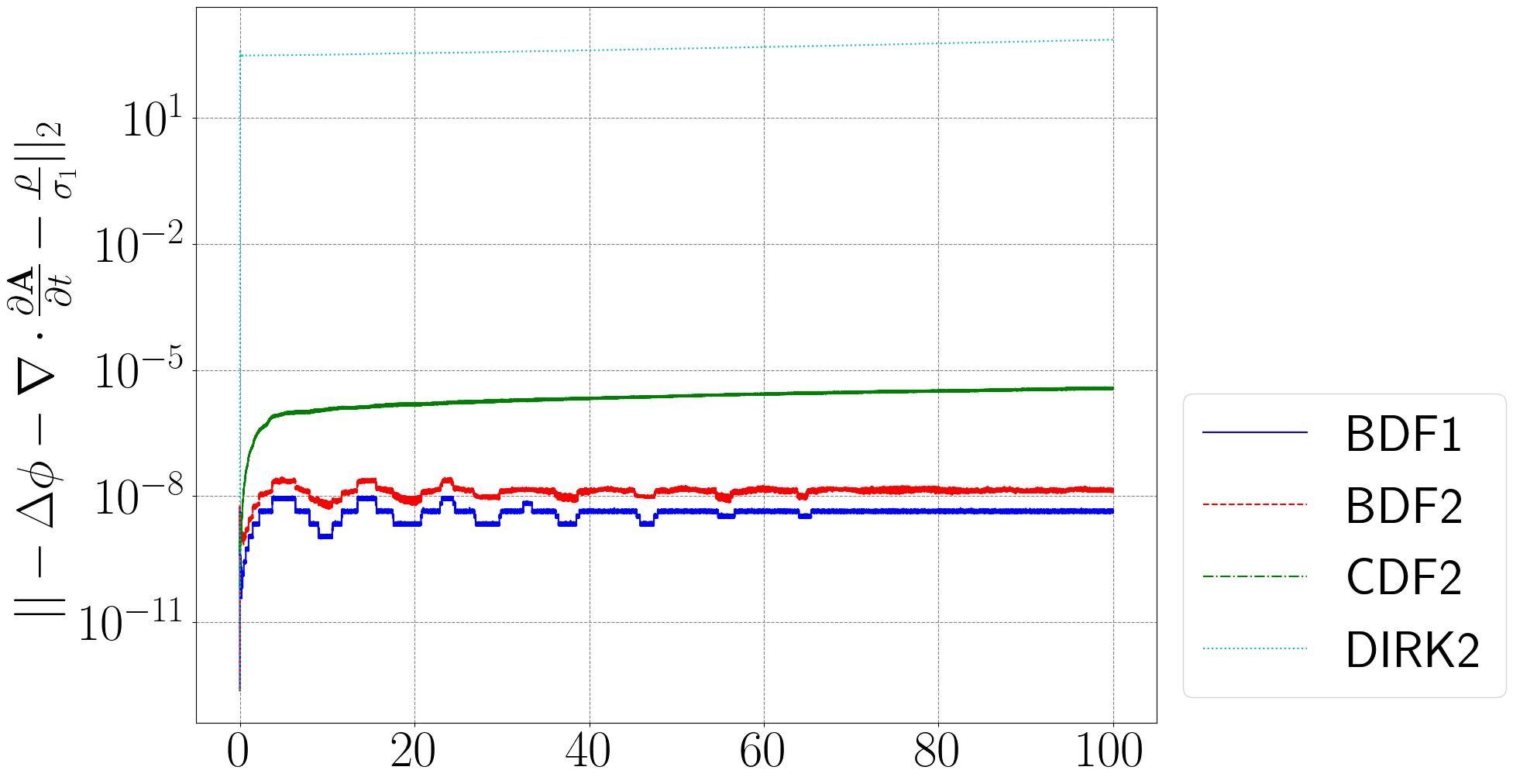}} 
    \caption{The error in Gauss's law for the Moving Cloud of Electrons problem. We see on the left side relatively high errors in Gauss, whereas the right side shows significantly less. As previously discussed, because the Runge-Kutta method is not a nested operator in time in a similar manner to the CDF or BDF methods, there is no straightforward way to enable DIRK methods to satisfy Gauss's Law, as we would need to apply a DIRK time difference to a time difference to get the second time derivative. See Remark \ref{rem:DIRK-Gauss-Law}.}
    \label{fig:naive-conserving-quadratic-moving-blob-gauss}
\end{figure}

\begin{figure}[!ht]
    \centering
    \subfloat[$t = 0$]{
    \includegraphics[clip, trim={0cm, 0cm, 0cm, 0cm}, scale=0.125]{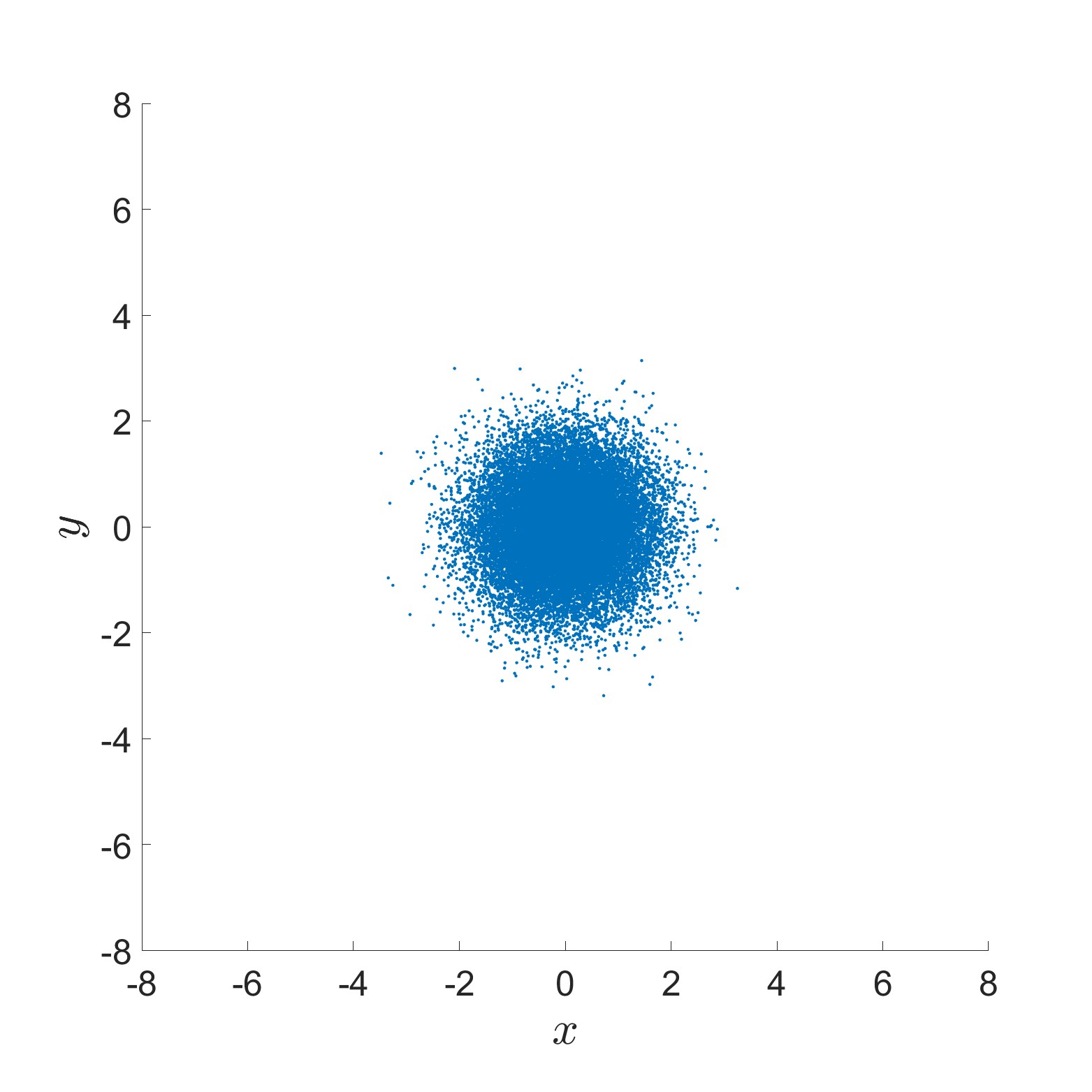}}
    \subfloat[$t = 0.5$]{
    \includegraphics[clip, trim={0cm, 0cm, 0cm, 0cm}, scale=0.125]{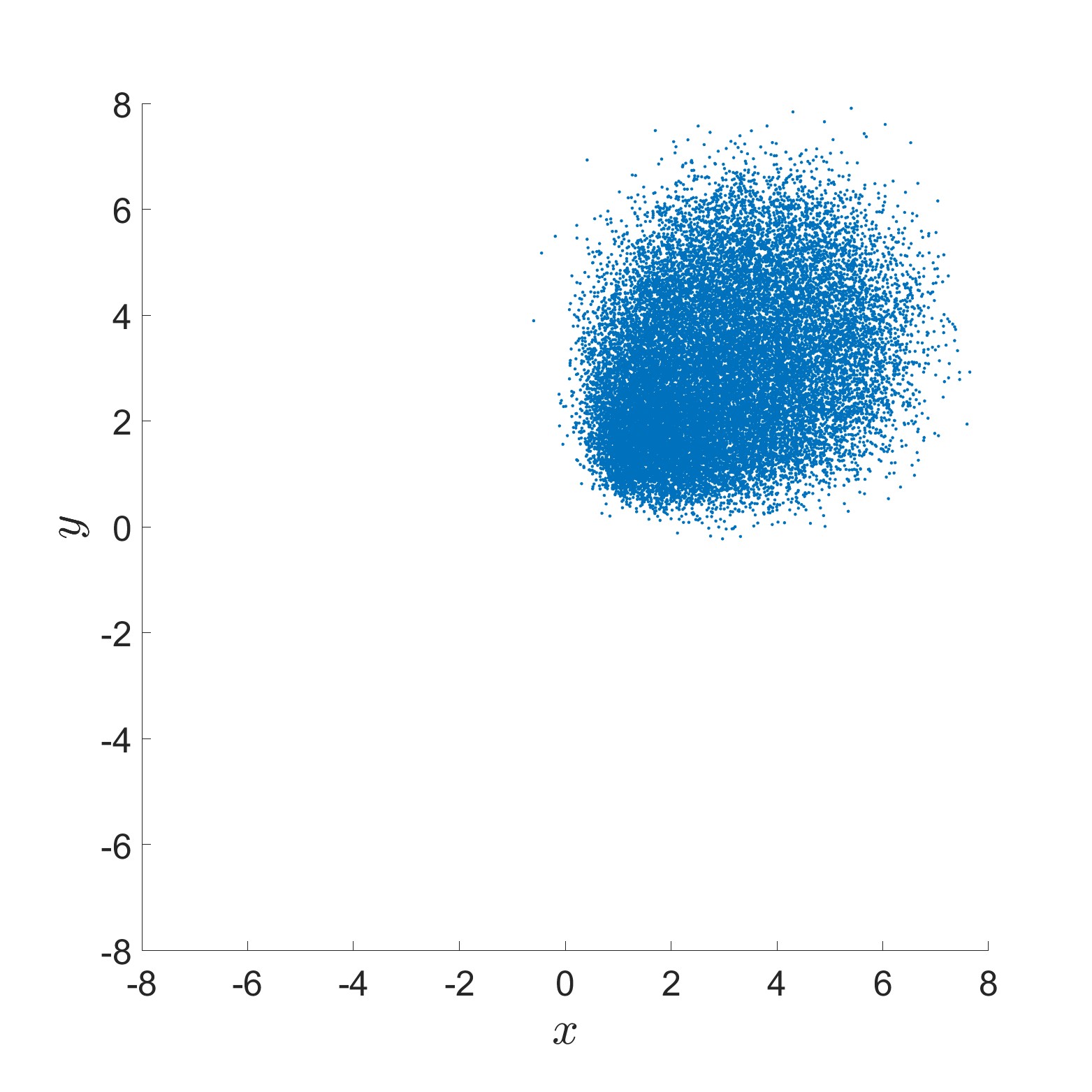}}
    
    \subfloat[$t = 1.0$]{
    \includegraphics[clip, trim={0cm, 0cm, 0cm, 0cm}, scale=0.125]{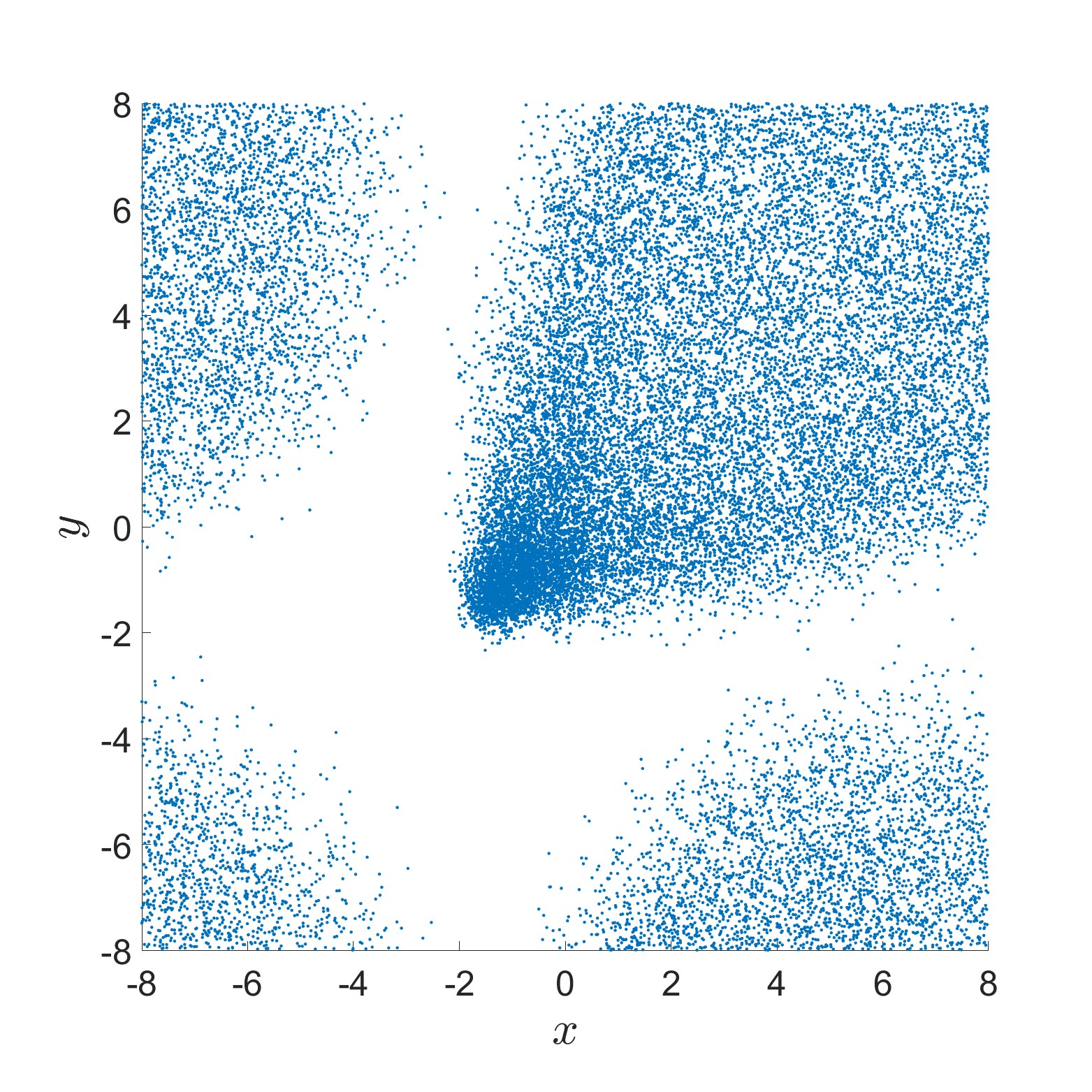}}
    \subfloat[$t = 2.0$]{
    \includegraphics[clip, trim={0cm, 0cm, 0cm, 0cm}, scale=0.125]{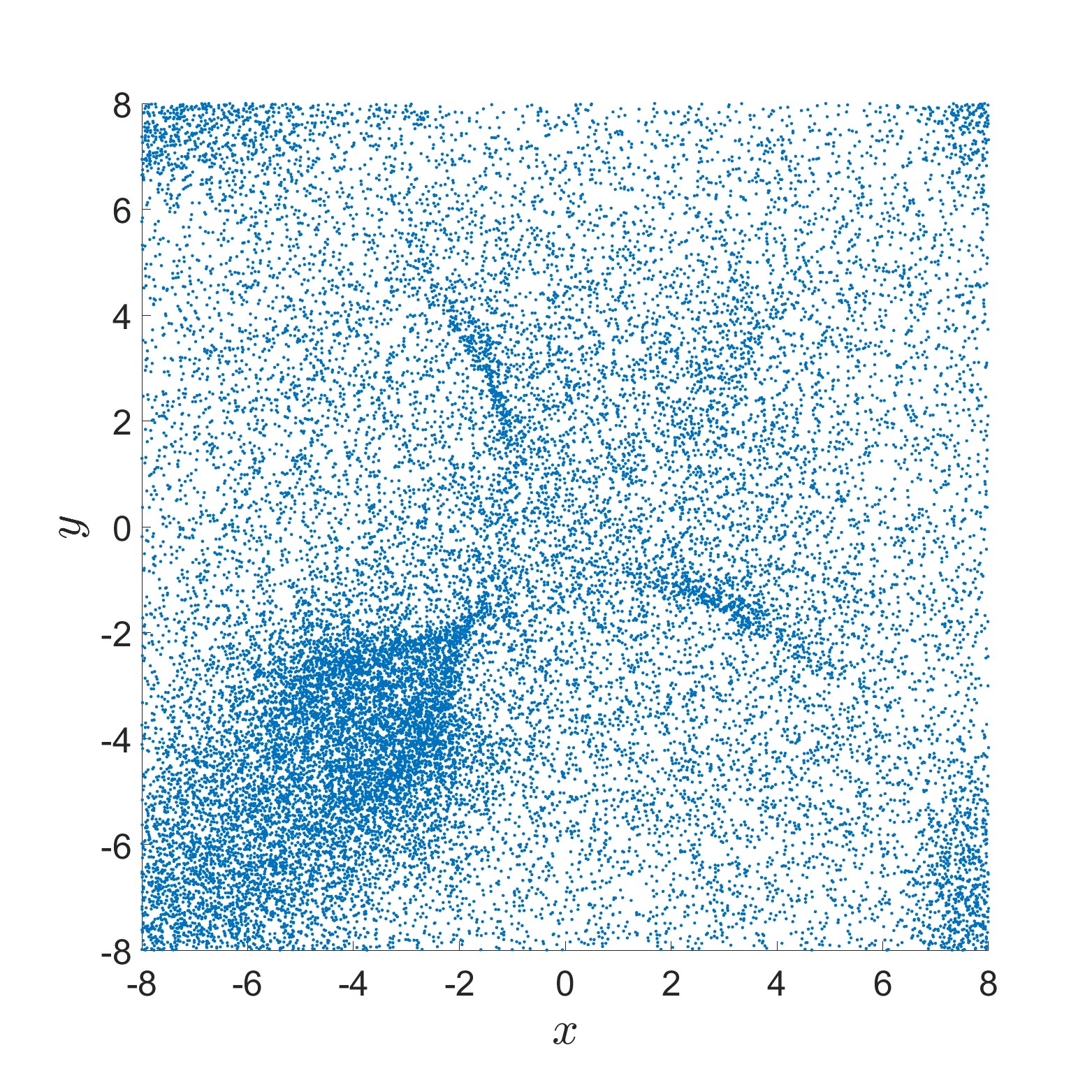}}
    \caption{A Gaussian distribution of electrons and ions are placed on a periodic domain at $t=0.0$. The electrons are given a drift velocity (see Table \ref{tab:moving blob plasma parameters}) in addition to their standard thermal Maxwellian velocity. Unlike the previous work's moving cloud of electrons problem \cite{christliebPIC2022pt2}, only some electrons escape the well ($t=.5$), the rest are pulled back in ($t=1.0$), causing a unique pattern to emerge as electrons are now travelling in opposing directions ($t=2.0$).}
    \label{fig:moving-cloud-scatter-plots}
\end{figure}

%
%
\subsection{Summary}
\label{subsec:5 Summary}

In this section we examined two numerical experiments. First we simulated the well known Weibel Instability, showing our numerical simulations have good agreement with the known analytic solutions. We then simulated a non-relativistic cloud of electrons, an experiment particularly sensitive to gauge errors. Both problems showed negligible variation in the semi-discrete Lorenz gauge and significant improvement in Gauss's Law. The results of both of these experiments are in line with the theoretical results we presented in section \ref{sec:3 wave solvers}.
%
%
%
%
\section{Conclusions}
\label{sec:6 Conclusion}

In our previous works \cite{christliebPIC2022pt1, christliebPIC2022pt2} we presented a new PIC method with a number of desirable properties, including an important theorem proving interdependence of the Lorenz gauge with conservation of charge, as well as proving satisfaction of the Lorenz gauge implies satisfaction of Gauss's law for electricity.  In this paper we generalized these theorems to all forms of the BDF scheme, a variety of second order CDF methods, and proved the one way relationship for arbitrary $s$-stage DIRK methods.  Following this, we chose a previously established method of conserving charge and examined the results from two numerical experiments, which gave numerical evidence for the theoretical results.

Establishing the relationship of charge conservation with the Lorenz gauge and Gauss's Law for electricity constitutes an advance forward for nonstaggered PIC methods and has many exciting avenues of exploration. We have developed wave solvers of arbitrary order accuracy, however, we noted that our particle pusher is still acting as a bottleneck for accuracy. Future work will involve extending the accuracy of this method. Additional work will explore tweaking the scheme to make it compatible with the time staggered nature of the CDF scheme.  We have also focused exclusively on periodic domains, work needs to be done to extend this to domains with more interesting boundary conditions, such as Dirichlet, Neumann, and Outflow, and domains with more interesting geometries than the square box we have thus far been examining.  An exploration into curvilinear coordinates or a finite element discretization would be beneficial, as implementation of $C^1$ finite elements would eliminate numerical dispersion. A more thorough exploration into a full implementation of MOL$^T$ using the fast convolution algorithm rather than the FFT in order to handle these non-periodic boundaries and nontrivial geometries would also be a fascinating study.  The higher order BDF methods, even when bottlenecked in accuracy by a lower order particle method, are known to mitigate numerical dispersion, so it would be interesting to incorporate these solvers in the context of collisional plasmas.  Extending the relation of satisfaction of the Lorenz gauge implying satisfaction of Gauss's law for general DIRK-$s$ methods remains to be done. 

%
%

\section{Acknowledgements}
Some of the simulations presented in this work were supported by computational resources made available through the Institute for Cyber-Enabled Research at Michigan State University. The authors would like to thank the Air Force Office of Scientific Research, the National Science Foundation, and the Department of Energy for their support though grants FA9550-19-1-0281, FA9550-17-1-0394, DMS-1912183, and DE-SC0023164. The authors would also like to thank the Michigan Institute for Plasma Science and Engineering for funding.

\newpage


\bibliography{main}

\end{document}